\title{Optimal strategies in concurrent reachability games}
\titlerunning{Concurrent reachability games}
\author{Benjamin Bordais, Patricia Bouyer and Stéphane Le
  Roux}{Université Paris-Saclay, CNRS, ENS Paris-Saclay, LMF, 91190
  Gif-sur-Yvette, France}{}{}{}
\authorrunning{B. Bordais, P. Bouyer and S. Le Roux}
\keywords{Concurrent reachability games, Game forms, Optimal strategies} 
\renewcommand{\restriction}[2]{\left.#1\right|_{#2}}
\newcommand{\norm}[2]{\left\lVert#1 - #2\right\rVert}
\newtheorem*{proofs}{Proof Sketch}
\newcommand{\lfp}{\mathsf{m}}
\newcommand{\formNF}{\mathcal{F}}
\newcommand{\gameNF}[2]{#1^{#2}}
\newcommand{\outComeNF}{\mathsf{O}}
\newcommand{\outCNF}{\varrho}
\newcommand{\limval}[1]{\ensuremath{\tilde{#1}}}
\newcommand{\A}{\mathsf{A} }
\newcommand{\B}{\mathsf{B} }
\newcommand{\setA}{A}
\newcommand{\setB}{B}
\newcommand{\dNorm}{\ensuremath{\tau}}
\newcommand{\s}{\ensuremath{\mathsf{s}}}
\newcommand{\St}{\ensuremath{\mathsf{St}}}
\newcommand{\OS}{\ensuremath{\mathsf{MaxQ}}}
\newcommand{\SubOS}{\ensuremath{\mathsf{SubMaxQ}}}
\newcommand{\Opt}{\ensuremath{\mathsf{Opt}}}
\newcommand{\Gex}{\ensuremath{\mathsf{Gd}}}
\newcommand{\Bex}{\ensuremath{\mathsf{Bd}}}
\newcommand{\Bad}{\ensuremath{\mathsf{Bad}}}
\newcommand{\Prg}{\ensuremath{\mathsf{Prog}}}
\newcommand{\Rsk}{\ensuremath{\mathsf{Risk}}}
\newcommand{\Eff}{\ensuremath{\mathsf{Eff}}}
\newcommand{\Scr}{\ensuremath{\mathsf{Sec}}}
\newcommand{\Dist}{\mathcal{D} }
\newcommand{\Supp}{\mathsf{Supp} }
\newcommand{\distribSet}{\mathsf{D} }
\newcommand{\distribFunc}{\mathsf{dist} }
\newcommand{\prob}[2]{\mathbb{P}^{#1}_{#2} }
\newcommand{\probTrans}[1]{p^{#1} }
\newcommand{\val}[2]{\chi^{#1}_{#2} }
\newcommand{\Aconc}{\mathcal{C} }
\newcommand{\Games}[2]{\langle #1,#2 \rangle }
\newcommand{\AConc}{\ensuremath{\langle 
		\setA,\setB,Q,\distribSet,\delta,\distribFunc \rangle}}
\newcommand{\SetStrat}[2]{\ensuremath{\mathsf{S}_{#1}^{#2} }}
\newcommand{\SetPosStrat}[2]{\ensuremath{\mathsf{PS}_{#1}^{#2} }}
\newcommand{\head}{\ensuremath{\mathsf{lt}}}
\newcommand{\outM}{\ensuremath{\mathsf{out}}}
\newcommand{\va}{\ensuremath{\mathsf{val}}}
\newcommand{\N}{\ensuremath{\mathbb{N}}}
\newcommand{\R}{\ensuremath{\mathbb{R}}}
\begin{document}

\maketitle

\begin{abstract}
	We study two-player reachability games on finite graphs. At each state the interaction between the players is concurrent and there is a stochastic Nature. Players also play stochastically. The literature tells us that 1) Player $\B$, who wants to avoid the target state, has a positional strategy that maximizes the probability to win (uniformly from every state) and 2) from every state, for every $\varepsilon > 0$, Player $\A$ has a strategy that maximizes up to $\varepsilon$ the probability to win. Our work is two-fold.
	
	First, we present a double-fixed-point procedure that says from which state Player $\A$ has a strategy that maximizes (exactly) the probability to win. This is computable if Nature's probability distributions are rational. We call these states \emph{maximizable}. Moreover, we show that for every $\varepsilon > 0$, Player $\A$ has a positional strategy that maximizes the probability to win, exactly from maximizable states and up to $\varepsilon$ from sub-maximizable states.
	
	Second, we consider three-state games with one main state, one target, and one bin. We characterize the \emph{local interactions} at the main state that guarantee the existence of an optimal Player $\A$ strategy. In this case there is a positional one. It turns out that in many-state games, these local interactions also guarantee the existence of a uniform optimal Player $\A$ strategy. In a way, these games are well-behaved by design of their elementary bricks, the local interactions. It is decidable whether a local interaction has this desirable property.
\end{abstract}

	\section{Introduction}
	\label{sec:introduction}

\subparagraph*{Stochastic concurrent games.}

Games on graphs are an intensively studied mathematical tool, with
wide applicability in verification and in particular for the
controller synthesis problem, see for
instance~\cite{thomas02,BCJ18}. We consider two-player stochastic
concurrent games played on finite graphs. For simplicity (but this is
with no restriction), such a game is played over a finite bipartite
graph called an arena: some states belong to Nature while others
belong to the players. Nature is stochastic, and therefore assigns a
probabilistic distribution over the players' states. In each players'
state, a local interaction between the two players (called Player
$\A$ and Player $\B$) happens, specified 
by a two-dimensional table. Such an interaction is resolved as
follows: Player $\A$ selects a probability
distribution over the rows 
of the table while Player $\B$ selects a probability distribution over the columns
of the table; this results into a distribution over the cells of the
table, each one pointing to a Nature state of the graph. An example of
game arena is given in Figure~\ref{fig:arbitrarilyClose}:
circle states are players' while square states are Nature's; note that dashed
arrows assign only probability $1$ to a next state in this example
(but in general could give probabilities to several states).

\begin{figure}
	\begin{minipage}[t]{0.6\linewidth}
		\includegraphics[scale=1]{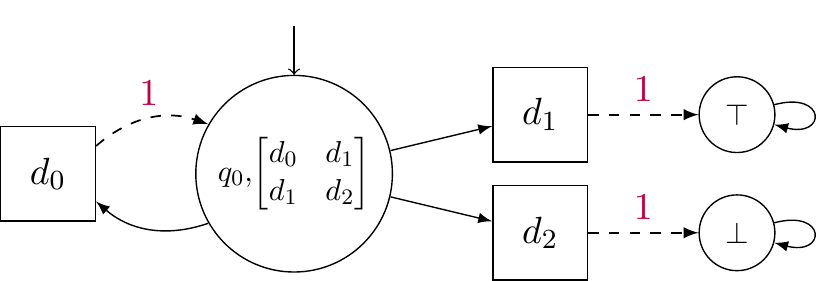}
		\caption{The game starts in $q_0$ with two actions available for each player. Player $\A$ wins if the state $\top$ is reached.}
		\label{fig:arbitrarilyClose}
	\end{minipage}
        \hfill
	\begin{minipage}[t]{0.38\linewidth}
		\centering
		\includegraphics[scale=1.5]{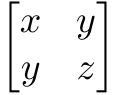}
		\caption{The local interaction at $q_0$ up to a renaming of the outcomes.}
		\label{fig:local_interaction_qzero}
	\end{minipage} \hfill
\end{figure}

Globally, the game proceeds as follows: starting at an initial state
$q_0$, the two players play the local interaction of the current
state, and the joint choice determines (stochastically) the next
Nature state of the game, itself moving randomly to players' states;
the game then proceeds subsequently from the new players' state. The
way players make choices is given by strategies, which, given the
sequence of states visited so far (the so-called history), assign
local strategies for the local interaction of the state the game is
in.
For application in controller synthesis, strategies will correspond to
controllers, hence it is desirable to have strategies simple to
implement. We will be in particular interested in strategies which are
\emph{positional}, that is, strategies which only depend on the
current state of the game, not on the whole history.
%
When each player has fixed a strategy (say $\s_{\A}$ for Player $\A$
and $\s_{\B}$ for Player $\B$), this defines a probability
distribution $\prob{q_0}{\s_\A,\s_\B}$ over infinite sequences of
states of the game.
The objectives of the two players are opposite (we assume a zero-sum
setting): together with the game, a measurable set $W$ of infinite
sequences of states is fixed; the objective of Player $A$ is then to
maximize the probability of $W$ while the objective of Player $B$ is
to minimize this probability.

Back to the example of Figure~\ref{fig:arbitrarilyClose}, assume
Player $\A$ (resp. $\B$) plays the first row (resp. column) with
probability $p_\A$ (resp. $p_\B$), then the probability 
to move to $\top$ is $p_\A + p_\B -2 p_\A p_\B$.
If Player $\A$ repeatedly plays the same strategy at $q_0$ with
$p_\A<1$, then the probability to reach $\top$ will lie between $p_\A$
and $1$, depending on Player $\B$; however, if she plays $p_\A=1$,
then by playing $p_\B=1$, Player $\B$ enforces staying in $q_0$, hence
reaching $\top$ with probability $0$.

\subparagraph*{Values and (almost-)optimal strategies.}

As mentioned above, Player $\A$ wants to maximize the probability of
$W$, while Player $\B$ wants to minimize this probability. Formally,
given a strategy $\s_{\A}$ for Player $\A$, its value is measured by
$\inf_{\s_{\B}} \prob{q_0}{\s_\A,\s_\B}(W)$, and Player $\A$ wants to
maximize that value. Dually, given a strategy $\s_{\B}$ for Player
$\B$, its value is measured by
$\sup_{\s_{\A}} \prob{q_0}{\s_\A,\s_\B}(W)$, and Player $\B$ wants to
  minimize that value. Following Martin's determinacy theorem for
  Blackwell games~\cite{martin98}, it actually holds that when $W$ is
  Borel, then the game has a \emph{value} given by
\[
  \val{}{q_0} = \sup_{\s_{\A}} \ \inf_{\s_{\B}}
  \prob{q_0}{\s_\A,\s_\B}(W) = \inf_{\s_{\B}} \ \sup_{\s_{\A}}
  \prob{q_0}{\s_\A,\s_\B}(W)
\]
While this ensures the existence of almost-optimal strategies (that
is, $\varepsilon$-optimal strategies for every $\varepsilon>0$) for
both players, it says nothing about the existence of optimal
strategies, which are strategies achieving $\val{}{q_0}$. In general,
as already mentioned in~\cite{everett57}, optimal strategies may not
exist. Indeed assuming a reachability objective with target $\top$,
the game in Figure~\ref{fig:arbitrarilyClose} is such that
$\val{}{q_0}=1$, however Player $\A$ can only achieve $1-\varepsilon$
for every $\varepsilon>0$ by playing repeatedly at $q_0$ the first
row of the table with probability $1-\varepsilon$ and the second row
with probability $\varepsilon$, but Player $\A$ cannot achieve
$1$. 

\subparagraph*{Our setting.}

In this paper we focus on reachability games, that is, $W$ is a
reachability condition. 
They are a special case of recursive games (where targets are assigned
payoffs), as studied in~\cite{everett57}. As such, they enjoy several
nice properties: (i) Player $\A$ has positional almost-optimal
strategies; (ii) Player $\B$ has 
positional optimal strategies~\cite{AM04b}.
These properties are specific to reachability games (or slight
generalizations thereof), and this is for instance not the case of
B\"uchi games, see~\cite[Thm. 2]{AM04b}.

Our goal is to study \emph{maximizable} and \emph{sub-maximizable} states in
(reachability) games: maximizable (resp. sub-maximizable) states are states
from which optimal strategies exist (resp. no optimal strategies
exist). Our contributions are then mostly twofolds:
\begin{enumerate}
\item We characterize via a double-fixed-point procedure maximizable and
  sub-maximizable states. This characterization cautiously analyzes when
  and why no optimal strategies will exist. Back to the example of
  Figure~\ref{fig:arbitrarilyClose}, we realize that no optimal
  strategy exists 
  since at the limit of $\varepsilon$-optimal
    strategies, i.e. when Player $\A$ plays the first row
    almost-surely, Player $\B$ can enforce cycling back to $q_0$,
    hence disabling state $\top$.
  This simple analysis close to the target has to be
  propagated carefully in the game, in which some strategies which are
  designated as risky (since they ultimately lead to such a situation)
  have to be avoided.

  As a byproduct of our construction, we have
  Theorem~\ref{thm:positional_optimal}, which establishes that one can
  build almost-optimal positional strategies, which are actually
  optimal where they can be. This refines the result
  of~\cite{everett57} which did not ensure optimality where it could.

  A consequence of that construction is that maximizable and sub-maximizable
  states can be computed 
  under slight assumptions
  , and that
  witness positional strategies can be computed as well. For these
  results we rely on Tarski's decidability result of the theory of the
  reals \cite{RE89}.

  We also show that our result cannot be extended to games with
  countably many states by exhibiting such a game in which an optimal
  strategy exists, but there is no optimal positional strategy.
 
\item Local interactions played by the players are abstracted into
  game forms, where cells of the matrix are now seen as variables
  (some of them being equal). For instance, the game form associated
  with state $q_0$ in the running example has three outcomes: $x$, $y$
  and $z$, and it is given in
  Figure~\ref{fig:local_interaction_qzero}.
  Game forms can be seen as elementary bricks that can be used to
  build games on graphs. We can embed such a brick into various
  three-states games with one main state, one target, and one bin (as
  is done in Figure~\ref{fig:arbitrarilyClose} for the interaction of
  Figure~\ref{fig:local_interaction_qzero}).  We characterize the
  \emph{local interactions} at the main state that guarantee the
  existence of an optimal Player $\A$ strategy. In this case there is
  a positional one. It turns out that in many-state games, these local
  interactions also guarantee the existence of a uniform optimal
  Player $A$ strategy. In a way, these games are well-behaved by
  design of their elementary bricks, the local interactions. It is
  decidable whether a local interaction has this desirable property.
  


  Importantly we exhibit a simple condition on game forms which
  ensures the above: determined game forms as studied
  in~\cite{FromLocalToGlobal} do satisfy the condition. The latter
  game forms generalize turn-based local interactions (where each
  players' state is controlled by a unique player -- that is, the
  matrix defining the local interaction has a single row or a single
  column). We therefore recover the fact that stochastic turn-based
  reachability games admit optimal positional strategies, which was
  shown in~\cite{MM02,CJH04,zielonka04}. 
\end{enumerate}





\subparagraph*{Related work.}  
In~\cite{AHK07}, the authors characterize using fixed points as well
states with value $1$: sure-winning states (all generated plays
satisfy the reachability condition -- as if no probabilities were
involved), almost-sure winning states (that is, maximizable states with
value $1$) and limit-sure winning states (that is, sub-maximizable states
with value $1$). Our work generalizes this result with states with
arbitrary values.

There are many works dedicated to the study of stochastic turn-based
games. These games enjoy more properties. Indeed, in parity stochastic
turn-based games, Player $\A$ always has an optimal pure positional
strategy~\cite{MM02,CJH04,zielonka04}. These results do not extend in
general to infinite (turn-based) arenas (even when they are
finitely-branching): optimal strategies may not exist, and 
when they exist, they may require infinite memory~\cite{kucera11}.

	\section{Preliminaries}
	\label{sec:preliminaries}

Consider a non-empty set $Q$. The \emph{support} $\Supp(\mu)$ of a function $\mu: Q \rightarrow [0,1]$ corresponds to set of non-0s of the function: $\Supp(\mu) = \{ q \in Q \mid \mu(q) \in \; ]0,1] \}$. A \emph{discrete probabilistic distribution} over a non-empty set $Q$ is a function $\mu: Q \rightarrow [0,1]$ such that its support $\Supp(\mu)$ is countable and $\sum_{x \in Q} \mu(x) = 1$. The set of all distributions over the set $Q$ is denoted $\Dist(Q)$. 
We also consider the product order on vectors $\mathord\preceq: \R^n \times \R^n$  defined for any $n \in \N$ by, for all $v,v' \in \R^n$, we have $v \preceq v' \Leftrightarrow \forall i \in \llbracket 1,n \rrbracket,\; v(i) \leq v'(i)$. For $v \in \R^n$ and $x \in \R$, the notation $v + x$ refers to the vector $v' \in \R^n$ such that, for all $i \in \llbracket 1,n \rrbracket$, we have $v'(i) = v(i) + x$.
	
	\section{Game Forms}
	\label{sec:gameForms}
We recall the definition of game forms which informally are 2-dim. tables with variables.
\begin{definition}[Game form and game in normal form]
	\label{def:arena_game_nf}
	A \emph{game form} is a tuple
	$\formNF = \langle \St_\A,\St_\B,\outComeNF,\outCNF \rangle$ where
	$\St_\A$ (resp. $\St_\B$) is the non-empty set of (pure) strategies available to Player $\A$ (resp. $\B$), $\outComeNF$ is a non-empty set of possible outcomes, and
	$\outCNF: \St_\A \times \St_\B \rightarrow \outComeNF$ is a function
	that associates an outcome to each pair of strategies. When the set of
	outcomes $\outComeNF$ is equal to $[0,1]$, we say that $\formNF$ is a \emph{game in normal form}. For a valuation $v \in [0,1]^\outComeNF$ of the outcomes, the notation $\gameNF{\formNF}{v}$ refers to the 
game in normal form $\langle \St_\A,\St_\B,[0,1],v \circ \outCNF \rangle$. A game form $\formNF = \langle \St_\A,\St_\B,\outComeNF,\outCNF \rangle$ is \emph{finite} if the set of pure strategies $\St_\A \cup \St_\B$ is finite. 
\end{definition}
In the following, the game form $\formNF$ will always refer to the tuple $\langle \St_\A,\St_\B,\outComeNF,\outCNF \rangle$ unless otherwise stated. Furthermore, we will be interested in valuations of the outcomes in the interval $[0,1]$. Informally, Player $\A$ (the rows) tries to maximize the outcome, whereas Player $\B$ (the columns) tries to minimize it. 
\begin{definition}[Outcome of a game in normal form]
	\label{def:outcome_game_form}
	Consider a game in normal form $\formNF = \langle \St_\A,\St_\B,[0,1],\outCNF \rangle$. The set $\Dist(\St_\A)$ corresponds to the set of mixed strategies available to Player $\A$, and analogously for Player $\B$. For a pair of mixed strategies $(\sigma_\A,\sigma_\B) \in \Dist(\St_\A) \times \Dist(\St_\B)$, the outcome $\outM_\formNF(\sigma_\A,\sigma_\B)$ in $\formNF$ of the strategies $(\sigma_\A,\sigma_\B)$ is defined as:
	$\outM_\formNF(\sigma_\A,\sigma_\B) := \sum_{a \in \St_\A} \sum_{b \in \St_\B} \sigma_\A(a) \cdot \sigma_\B(b) \cdot \outCNF(a,b) \in [0,1]$.
\end{definition}	

The definition of the value of a game in normal form follows:
\begin{definition}[Value of a game in normal form and optimal strategies]
	\label{def:alternative_value_game_normal_form}
	Consider a game in normal form $\formNF = \langle \St_\A,\St_\B,[0,1],\outCNF \rangle$ and a strategy $\sigma_\A \in \Dist(\St_\A)$ 
	for Player $\A$. 
	The \emph{value} of strategy $\sigma_\A$, 
	denoted $\va_\formNF(\sigma_\A)$ 
	is equal to: $\va_\formNF(\sigma_\A) := \inf_{\sigma_\B \in \Dist(\St_\B)} \outM_{\formNF}(\sigma_\A,\sigma_\B)$, and analogously for Player $\B$, with a $\sup$ instead of an $\inf$. 
	When $\sup_{\sigma_\A \in \Dist(\St_\A)} \va_\formNF(\sigma_\A) = \inf_{\sigma_\B \in \Dist(\St_\B)} \va_\formNF(\sigma_\B)$, it defines the \emph{value} of the game $\formNF$, denoted  $\va_\formNF$. 
	
	Note that 
	von Neuman's minimax theorem \cite{vonNeuman} ensures it does as soon as the game $\formNF$ is finite. A strategy $\sigma_\A \in \Dist(\St_\A)$ ensuring $\va_\formNF = \va_\formNF(\sigma_\A)$ is called \emph{optimal}. The set of all optimal strategies for Player $\A$ is denoted $\Opt_\A(\formNF) \subseteq \Dist(\St_\A)$, and analogously for Player $\B$. Von Neuman's minimax theorem ensures the existence of optimal strategies (for both players).  
\end{definition}

As it will be useful in Section~\ref{sec:optimal_game_forms}, we define a 
least fixed point operator 
in 
a game form given a partial valuation of the outcomes, with some complement in Appendix~\ref{proof:valuation_sequence_converging}.
\begin{definition}[Total valuation induced by a partial valuation]
	\label{def:partial_assignment}
	For a game form $\formNF
	$ and a partial valuation 
	$\alpha: \outComeNF \setminus E \rightarrow [0,1]$ for some 
	$E \subseteq \outComeNF$, we define the map $f^\formNF_\alpha: [0,1] \rightarrow [0,1]$ by, for all $y \in [0,1]$:  $f^\formNF_\alpha(y) := \va_{\gameNF{\formNF}{\alpha[y]}}$ where $\alpha[y]: \outComeNF \rightarrow [0,1]$ is such that $\alpha[y][E] = \{ y \}$ and $\restriction{\alpha[y]}{\outComeNF \setminus E} = \alpha$. The map $f_\alpha$ has a least fixed point (by monotonocity), denoted $v_\alpha \in [0,1]$. The valuation $\limval{\alpha} \in [0,1]^\outComeNF$ induced by the partial valuation $\alpha$ is then equal to $\limval{\alpha} = \alpha[v_\alpha]$.	
\end{definition}


	\section{Concurrent stochastic games}
	\label{sec:concurrentGames}

In this section, we define the formalism we use throughout this paper for concurrent graph games, strategies and values. 
\begin{definition}[Stochastic concurrent games]
	A finite \emph{stochastic concurrent arena} $\Aconc$ is a tuple 
	$\AConc$ where $\setA$ (resp. $\setB$) is the non-empty finite set of 
	actions 
        of Player $\A$ (resp. $\B$), $Q$ is the non-empty finite set of states, 
	$\distribSet$ is the non-empty set of Nature states, $\delta: Q \times \setA \times \setB \rightarrow \distribSet$ is the transition function, $\distribFunc: \distribSet \rightarrow \Dist(Q)$ is the distribution function. 
	A \emph{concurrent reachability game} is a pair $\Games{\Aconc}{\top}$ where $\top \in Q$ is a target 
state (for Player $\A$). It is supposed to be a self-looping sink: for all $a \in A$ and $b \in B$, we have $\Supp(\delta(\top,a,b)) = \{ \top \}$.
\end{definition}

In the following, the arena $\Aconc$ will always refer to the tuple $\AConc$ unless otherwise stated, and $\top$ to the  target in the game $\Games{\Aconc}{\top}$, that we assume fixed in the rest of the definitions. 
%
Let us now consider a crucial tool in our study: the notion of local interaction. These are game forms induced by the transition function $\delta$ in states of the game. 
\begin{definition}[Local interaction]
        The \emph{local interaction} at state $q \in Q$ is the game form $\formNF_q := \langle 
	\setA,\setB,\distribSet,\delta(q,\cdot,\cdot) \rangle$. That is, the strategies available for Player $\A$ (resp. $\B$) are the actions in $A$ (resp. $B$) and the outcomes are the Nature states
	. 
\end{definition}

Local interactions also allow us to define the probability transition to go from one state to another, given two local strategies.
\begin{definition}[Probability transition]
	\label{def:mu_state}
	Consider 
	a state $q \in Q$ and two local strategies $(\sigma_\A,\sigma_\B) \in \Dist(A) \times \Dist(B)$ in the game form $\formNF_q$. Let $q' \in Q$. The probability $\probTrans{q,q'}(\sigma_\A,\sigma_\B)$ to go from $q$ to $q'$ if the players opt for strategies $\sigma_\A$ and $\sigma_\B$ is equal to the outcome of the game form 
	$\formNF_q$ with 
	the value of a Nature state $d \in \distribSet$ equal to 
	the probability to go from $d$ to $q'$, i.e. it is given by the valuation $\distribFunc(\cdot)(q') \in [0,1]^\distribSet$. That is:
	$\probTrans{q,q'}(\sigma_\A,\sigma_\B) := \outM_{\gameNF{\formNF_q}{\distribFunc(\cdot)(q')}}(\sigma_\A,\sigma_\B)$.
\end{definition}

Let us now look at the strategies we consider in such concurrent games.
\begin{definition}[Strategies]
        A Player $\A$ strategy 
	is a map $\s_\A: Q^+ \rightarrow \Dist(A)$. 
	It is said to be \emph{positional} if, for all $\pi = \rho \cdot q \in Q^+$, we have $\s_\A(\pi) = \s_\A(q)$: the strategy only depends on the current state. 
	We denote by $\SetStrat{\Aconc}{\A}$ and $\SetPosStrat{\Aconc}{\A}$ 
	the set of all strategies and positional strategies respectively in arena $\Aconc$ for Player $\A$. 
	The definitions are analogous for Player $\B$.
\end{definition}

A pair of strategies then induces a probability measure over paths. 
\begin{definition}[Probability measure of paths given two strategies]
	\label{def:value_graph_game}
	For a pair of strategies $(\s_\A,\s_\B) \in \SetStrat{\Aconc}{\A} \times \SetStrat{\Aconc}{\B}$, we denote by $\s_\A^\pi: Q^+ \rightarrow \Dist(A)$ the Player $\A$ residual strategy after $\pi \in Q^+$ is seen: for all $\pi' \in Q^+$, $\s_\A^\pi(\pi') = \s_\A(\pi \cdot \pi')$. The residual strategy $\s_\B^\pi$ is defined analogously. Then, 
	the probability of occurrence of a finite path $\pi \in Q^+$ is defined inductively. For all starting states $q_0 \in Q$, for all $q \cdot \pi \in Q^+$, if $q \neq q_0$, we set $\prob{q_0}{\s_\A,\s_\B}(q) := 0$. Furthermore, $\prob{q_0}{\s_\A,\s_\B}(q_0) := 1$ and for all 
	$q \cdot \pi \in Q^+$, 
	we set: 
	\begin{displaymath}
	\prob{q_0}{\s_\A,\s_\B}(q_0 \cdot q \cdot \pi) := \probTrans{q_0,q}(\s_\A(q_0),\s_\B(q_0)) \cdot \prob{q}{\s_\A^{q_0},\s_\B^{q_0}}(q \cdot \pi)
	\end{displaymath}
	A probability measure $\prob{q_0}{\s_\A,\s_\B}$ is thus defined over the $\sigma$-algebra generated by cylinders (which are continuations of finite paths). Standardly (see e.g.~\cite{vardi85}), infinite sequences of states visiting some subset $Q' \subseteq Q$ is measurable, and we note $\prob{q_0}{\s_\A,\s_\B}(Q')$  (resp. $\prob{q_0}{\s_\A,\s_\B}(n,Q')$) the probability to reach $Q'$ (resp. in at most $n$ steps) from state $q_0$. 
\end{definition}

Finally, we can define what is the value of strategies (for both players) and of the game.
\begin{definition}[Value of strategies and of the game]
	\label{def:value_game_strat}
        The value $\val{\Aconc}{\s_\A}(q)$ 
	of a Player $\A$ strategy $\s_\A$ 
	from a state $q \in Q$ is equal to $\val{\Aconc}{\s_\A}(q) := \inf_{\s_\B \in \SetStrat{\Aconc}{\B}} \prob{q}{\s_\A,\s_\B}(\top)$. 
	The value $\val{\Aconc}{\A}(q)$ of the game for Player $\A$ from $q$ is: 
	$\val{\Aconc}{\A}(q) := \sup_{\s_\A \in \SetStrat{\Aconc}{\A}} \val{\Aconc}{\s_\A}(q)$. 
	It is analogous for Player $\B$, by inverting the $\inf$ and $\sup$. When equality of these two values holds, it defines the \emph{value} at state $q$, denoted $\val{\Aconc}{}(q)$: $\val{\Aconc}{}(q) := \val{\Aconc}{\A}(q) = \val{\Aconc}{\B}(q) \in [0,1]$. The value of the game 
is then given by the valuation $\val{\Aconc}{} \in [0,1]^Q$. Since the game is finite, 
	\cite{martin98} gives that this equality is always ensured. 
	A strategy $\s_\A \in \SetStrat{\Aconc}{\A}$ such that $\val{\Aconc}{\s_\A}(q) = \val{\Aconc}{\A}(q)$ (resp. $\val{\Aconc}{\s_\A}(q) \geq \val{\Aconc}{\A}(q) - \varepsilon$ for some $\varepsilon > 0$) is called a Player $\A$ \emph{optimal} strategy (resp. $\varepsilon$-\emph{optimal}) from state $q$. If $\val{\Aconc}{\s_\A} = \val{\Aconc}{\A}$, the strategy $\s_\A$ is \emph{uniformly} optimal. This is defined analogously for Player $\B$. For a valuation $v \in [0,1]^Q$ of the states, 
	a Player $\A$ strategy $\s_\A \in \SetStrat{\Aconc}{\A}$ 
	such that $v \preceq \val{\Aconc}{\s_\A}$ is said to \emph{guarantee} the valuation $v$. 
\end{definition}

\noindent \textbf{Value of the game and least fixed point.}
\label{subsec:value_lfp}
In the context of a reachability game, the value of the game is the least fixed point (lfp) of an operator on valuations on states. We define this operator here 
with some complements given in Appendix~\ref{appen:value_lfp}.
\begin{definition}[Valuation of the Nature states and operator on values]
	\label{def:operator}
	For $v \in [0,1]^Q$, we define the valuation $\mu_v \in [0,1]^\distribSet$ of the Nature states by $\mu_v(d) := \sum_{q \in Q} \distribFunc(d)(q) \cdot v(q)$ for all $d \in \distribSet$. For the operator $\Delta: [0,1]^Q \rightarrow [0,1]^Q$, for all valuations $v \in [0,1]^Q$, we set $\Delta(v)(\top) := 1$ and, for all $q \neq \top \in Q$, we set $\Delta(v)(q) := \va_{\gameNF{\formNF_q}{\mu_v}}$. 
\end{definition}
As the operator $\Delta$ is monotonous, it has an lfp 
for the product order $\preceq$. This lfp gives the value of the game. Furthermore, Player $\B$ has an optimal positional strategy:
\begin{theorem}[\cite{everett57,filar2012competitive}]
	\label{thm:m_delta}
	Let $\lfp$ denote the lfp of the operator $\Delta$. Then: 
	$\val{\Aconc}{} = \lfp$. Furthermore, there exists a positional strategy $\s_\B \in \SetPosStrat{\B}{\Aconc}$ for Player $\B$ ensuring $\val{\Aconc}{\s_\B} = \val{\Aconc}{} = \lfp$.
\end{theorem}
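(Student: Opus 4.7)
The plan is to sandwich $\val{\Aconc}{}$ between $\val{\Aconc}{\A}$ and $\val{\Aconc}{\B}$: I would show $\val{\Aconc}{\B} \preceq \lfp$ (which also supplies the positional witness) and $\val{\Aconc}{\A} \succeq \lfp$, and combine these with the general inequality $\val{\Aconc}{\A} \preceq \val{\Aconc}{\B}$ and Martin's determinacy theorem (already invoked in Definition~\ref{def:value_game_strat}) to conclude $\val{\Aconc}{} = \lfp$.

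\textbf{Upper bound and positional witness.} Because $\lfp$ is a fixed point of $\Delta$, for every non-target $q$ we have $\va_{\gameNF{\formNF_q}{\mu_\lfp}} = \lfp(q)$. Applying von Neumann's minimax theorem to each finite local game $\gameNF{\formNF_q}{\mu_\lfp}$, Player $\B$ has an optimal mixed action $\sigma^\star_\B(q) \in \Dist(\setB)$; on $\top$, pick any action. Let $\s^\star_\B \in \SetPosStrat{\Aconc}{\B}$ denote the induced positional strategy. Fix any $\s_\A \in \SetStrat{\Aconc}{\A}$ and let $q_n$ be the random state at step $n$ under $(\s_\A, \s^\star_\B)$ starting from $q_0$. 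I claim that $X_n := \lfp(q_n)$ is a bounded supermartingale: for any history ending at $q$, writing $\sigma_\A := \s_\A(\text{history})$,
$E[\lfp(q_{n+1}) \mid \text{history}] = \sum_{q' \in Q} \probTrans{q,q'}(\sigma_\A, \s^\star_\B(q)) \cdot \lfp(q') = \outM_{\gameNF{\formNF_q}{\mu_\lfp}}(\sigma_\A, \s^\star_\B(q)) \leq \lfp(q),$
by optimality of $\s^\star_\B(q)$ (and trivially when $q = \top$, which is absorbing with $\lfp(\top) = 1$). Since $\mathbf{1}_{q_n = \top} \leq X_n$, we obtain $\prob{q_0}{\s_\A, \s^\star_\B}(n, \top) \leq E[X_n] \leq E[X_0] = \lfp(q_0)$; letting $n \to \infty$ and then taking the supremum over $\s_\A$ yields $\val{\Aconc}{\s^\star_\B} \preceq \lfp$, hence $\val{\Aconc}{\B} \preceq \lfp$.

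\textbf{Lower bound.} I would start from $v_0 \in [0,1]^Q$ with $v_0(\top) = 1$ and $v_0(q) = 0$ for $q \neq \top$, iterate $v_{n+1} := \Delta(v_n)$, and observe that $(v_n)$ is monotone increasing and, by continuity of $\Delta$ on the finite-dimensional lattice $[0,1]^Q$, converges upward to $\lfp$. By induction on $n$, for each $q$ I would construct a horizon-$n$ Player $\A$ strategy reaching $\top$ within $n$ steps with probability at least $v_n(q)$ against any opponent: at the current state $q$, play a von Neumann optimal action in $\gameNF{\formNF_q}{\mu_{v_{n-1}}}$ and then continue inductively at the successor for the remaining $n-1$ steps. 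Since $\prob{q_0}{\s_\A, \s_\B}(\top) \geq \prob{q_0}{\s_\A, \s_\B}(n, \top) \geq v_n(q_0)$ for every $\s_\B$, we obtain $\val{\Aconc}{\A}(q_0) \geq v_n(q_0)$ for every $n$, and therefore $\val{\Aconc}{\A} \succeq \lfp$.

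\textbf{Main obstacle.} The delicate step is the infinite-horizon passage on Player $\B$'s side: the fixed-point equation alone yields only bounds on $n$-step reachability probabilities, whereas the statement concerns $\prob{q_0}{\s_\A, \s^\star_\B}(\top)$, the probability of ever reaching $\top$. The supermartingale argument, combined with the absorbing structure of $\top$ and the inequality $\lfp \geq \mathbf{1}_{\{q = \top\}}$, is what converts the $n$-step bound $E[X_n] \leq \lfp(q_0)$ into a bound on the limit event; it simultaneously explains why the local optimal witness $\s^\star_\B$, positional by construction, remains uniformly optimal in the infinite-horizon game, since only the one-shot optimality at each state is used.
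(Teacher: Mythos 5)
Your proposal is correct and follows essentially the same route as the paper: the positional witness for Player $\B$ built from locally optimal mixed actions in each $\gameNF{\formNF_q}{\mu_\lfp}$, with the $n$-step bound obtained by unrolling local optimality (your supermartingale $X_n = \lfp(q_n)$ is just the paper's induction on $\prob{q}{\s_\A,\s_\B}(n,\top) \leq \lfp(q)$ in probabilistic dress, and the passage to the limit via absorption of $\top$ is identical), and the lower bound via the Kleene iterates $v_{n+1} = \Delta(v_n)$ together with finite-horizon strategies that play optimally w.r.t. $\mu_{v_{n-1}}$ and recurse. No gaps.
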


\noindent \textbf{Markov decision process induced by a positional strategy.}
Once a Player $\A$ positional strategy is fixed, we obtain a Markov decision process, which, informally, is a game where only one player (here, Player $\B$) plays (against probabilistic transitions).
\begin{definition}[Induced Markov decision process]
	Consider 
        a Player $\A$ positional strategy $\s_\A \in \SetPosStrat{\Aconc}{\A}$. 
	The \emph{Markov decision process} $\Gamma$ (MDP for short) induced by the strategy  $\s_\A$ is the triplet $\Gamma := \langle Q,B,\iota \rangle$ where $Q$ is the set of states, $B$ is the set of actions 
	and $\iota: Q \times B \rightarrow \Dist(Q)$ is a map associating to a state and an action a distribution over the states. For all $q \in Q$, $b \in B$ and $q' \in Q$, we set $\iota(q,b)(q') := \probTrans{q,q'}(\s_\A(q),b)$. 
\end{definition}
Note that the set of Player $\B$ strategies in an induced MDP $\Gamma$ is the same as in the concurrent game $\Aconc$. Furthermore, the useful objects in MDPs are the end components~\cite{dealfaro97}: informally, sub-MDPs that are strongly connected.
\begin{definition}[End component]
	\label{def:end_component}
	Consider 
        a Player $\A$ positional strategy $\s_\A \in \SetPosStrat{\Aconc}{\A}$  
        and consider the MDP $\Gamma$ induced by that strategy. An \emph{end component} (EC for short) $H$ in $\Gamma$ is a pair $(Q_H,\beta)$ such that $Q_H \subseteq Q$ is a subset of states and $\beta: Q_H \rightarrow \mathcal{P}(B) \setminus \emptyset$ associates to each state a non-empty set of actions compatible with the EC $H$ such that:
	\begin{itemize}
		\item for all $q \in Q_H$ and $b \in \beta(q)$, we have $\Supp(\iota(q,b)) \subseteq Q_H$;
		\item the underlying graph $(Q_H,E)$ is strongly connected where $(q,q') \in E$ iff $q' \in \Supp(\iota(q,\beta(q)))$.
	\end{itemize}
	We denote by $\distribSet_H \subseteq \distribSet$ the set of Nature states compatible with the EC $H$: $\distribSet_H = \{ d \in \distribSet \mid \Supp(d) \subseteq Q_H \}$. Note that, for all $q \in Q_H$ and $b \in \beta(q)$, we have $\delta(q,\Supp(\s_\A(q)),b) \subseteq \distribSet_H$.
\end{definition} 

The interest of ECs lies in the proposition below: in the MDP induced by a Player $\A$ strategy, for all Player $\B$ (positional) strategies (thus inducing a Markov chain), from all states, there is a non-zero probability to reach an EC from which it is impossible to exit.
\begin{proposition}[Complement~\ref{proof:prop_end_in_EC}]
	\label{prop:end_in_EC} Consider 
        a Player $\A$ positional strategy $\s_\A\in \SetPosStrat{\Aconc}{\A}$. Let $\mathcal{H}$ denote the set of all ECs in the MDP 
	induced by the strategy 
        $\s_\A$. For all Player $\B$ strategies $\s_\B \in \SetPosStrat{\Aconc}{\B}$, there exists a subset of end components $\mathcal{H}_{\s_\B} \subseteq \mathcal{H}$ called \emph{bottom strongly conneted components} (BSCC for short): 
	for all $H = (Q_H,\beta) \in \mathcal{H}_{\s_\B}$ and $q \in Q_{H}$, we have $\prob{q}{\s_\A,\s_\B}(Q \setminus Q_H) = 0$. Furthermore, if $q \in Q$, we have: $\prob{q}{\s_\A,\s_\B}(n,\cup_{H \in \mathcal{H}_{\s_\B}} H) > 0$ where $n = |Q|$.
\end{proposition}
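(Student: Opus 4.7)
My plan is to reduce this to the classical decomposition of a finite Markov chain into its bottom strongly connected components. Once the positional Player~$\B$ strategy $\s_\B$ is fixed together with $\s_\A$, the MDP collapses into a finite Markov chain $\mathcal{M}$ on the state space $Q$ with one-step probabilities $P(q,q') := \probTrans{q,q'}(\s_\A(q),\s_\B(q)) = \sum_{b \in B} \s_\B(q)(b) \cdot \iota(q,b)(q')$. Consider the directed graph $(Q,E_\mathcal{M})$ where $(q,q') \in E_\mathcal{M}$ iff $P(q,q') > 0$, and let $\mathcal{C}$ be the set of strongly connected components (SCCs) that are \emph{bottom}, i.e.\ have no outgoing edge in the SCC-DAG.

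Next I would associate to each $C \in \mathcal{C}$ an end component $H_C := (C, \beta_C)$ of the MDP by setting $\beta_C(q) := \Supp(\s_\B(q))$ for every $q \in C$. I check the two defining properties of an EC. First, if $q \in C$ and $b \in \beta_C(q)$, and $q' \in \Supp(\iota(q,b))$, then $P(q,q') \geq \s_\B(q)(b) \cdot \iota(q,b)(q') > 0$, so $(q,q')$ is an edge of $\mathcal{M}$; since $C$ is bottom, $q' \in C$, which gives condition (i). Second, for the strong-connectedness condition (ii), the edges used by $H_C$ are exactly $\{(q,q') : \exists b \in \Supp(\s_\B(q)),\ \iota(q,b)(q')>0\}$, which coincides with $E_\mathcal{M}$ restricted to $C$; since $C$ is an SCC of $\mathcal{M}$, this subgraph is strongly connected. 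Define $\mathcal{H}_{\s_\B} := \{H_C : C \in \mathcal{C}\}$. The absorption claim $\prob{q}{\s_\A,\s_\B}(Q \setminus C) = 0$ for $q \in C$ is then immediate from (i), since no edge of $\mathcal{M}$ leaves $C$.

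For the reachability part, I would use the standard fact about finite Markov chains: from every state of a Markov chain on $n$ states, there is a simple path in $(Q,E_\mathcal{M})$ of length at most $n-1$ leading to some bottom SCC. Indeed, starting from the SCC of $q$, one can descend in the finite DAG of SCCs (which has at most $n$ nodes) until hitting a bottom SCC, navigating inside each SCC via its strong connectedness; trimming repeated states yields a simple path in $Q$ of length at most $n-1 < n = |Q|$. Since every edge of this path carries strictly positive probability under $P$, the probability of the corresponding cylinder is positive, giving $\prob{q}{\s_\A,\s_\B}(n, \cup_{H \in \mathcal{H}_{\s_\B}} Q_H) > 0$.

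The only slightly delicate point is the identification of BSCCs of $\mathcal{M}$ with ECs of the MDP, which requires noting that $\Supp(P(q,\cdot)) \subseteq C$ forces each $\iota(q,b)$ with $b \in \Supp(\s_\B(q))$ to have support in $C$, as used above; everything else is routine finite-Markov-chain bookkeeping, so I do not expect any real obstacle.
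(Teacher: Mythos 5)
Your proof is correct and follows essentially the same route as the paper: fix both positional strategies to obtain a finite Markov chain, identify its bottom SCCs with end components of the induced MDP, and derive the reachability bound from the finiteness of the SCC-DAG. The only difference is that the paper delegates the almost-sure absorption into BSCCs to a textbook theorem (Theorem~10.27 of the cited Baier--Katoen reference) whereas you argue it directly, and you spell out the verification that a BSCC of the chain is indeed an EC of the MDP, which the paper merely asserts.
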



	\section{Crucial proposition}
	\label{sec:crucial_prop}

We fix a concurrent reachability game $\langle \Aconc,T \rangle$ and a valuation $v \in [0,1]^Q$ of the states that Player $\A$ wants to guarantee. 
That is, she seeks a strategy $\s_\A$ ensuring 
that for all $q \in Q$, it holds $\val{\Aconc}{\s_\A}(q) \geq v(q)$. In particular, when $v = \lfp$, such a strategy $\s_\A$ would be optimal. 
We state a 
sufficient condition for Player $\A$ positional strategies to ensure such a property. 

Consider 
a Player $\A$ positional strategy $\s_\A \in \SetPosStrat{\A}{\Aconc}$. The probability distribution chosen by this strategy only depends on the current state. In fact, 
this strategy is built with one (local) strategy per local interaction: 
for all state $q \in Q$, $\s_\A(q) \in \Dist(A)$ is a strategy in the game form $\formNF_q$. As Player $\A$ wants to guarantee the valuation $v$, the valuation of interest of the outcomes of the game form $\formNF_q = \langle A,B,\distribSet,\delta(q,\cdot,\cdot) \rangle$ is $\mu_v \in [0,1]^\distribSet$ -- lifting the valuation $v$ to the Nature states. To ensure that $\val{\Aconc}{\s_\A}(q) \geq v(q)$, one may think that it suffices to choose $\s_\A(q)$ so that its value in the game in normal form $\gameNF{\formNF_q}{\mu_v}$ is at least $v(q)$, that is: $\va_{\gameNF{\formNF_q}{\mu_v}}(\s_\A(q)) \geq v(q)$. In that case, the strategy $\s_\A$ is said to locally dominate the valuation $v$:
\begin{definition}[Strategy locally dominating a valuation]
  A Player $\A$ positional strategy $\s_\A \in \SetPosStrat{\Aconc}{\A}$ \emph{locally dominates} the valuation $v$ if, for all $q \in Q$, we have: $\va_{\gameNF{\formNF_q}{\mu_v}}(\s_\A(q)) \geq v(q)$.
\end{definition}

However, this is not sufficient in the general case, as examplified in Figure~\ref{fig:arbitrarilyClose}
. For the valuation $v = \val{\Aconc}{}$ 
such that $v(q_0) = v(\top) = 1$ and $v(\bot) = 0$, a Player $\A$ positional strategy $\s_\A$ that 
plays the first row in $\formNF_{q_0}$ with probability 1 ensures that $\va_{\gameNF{\formNF_{q_0}}{\mu_v}}(\s_\A(q_0)) = 1 \geq v(q_0)$. However, we have seen that it does not ensure that $\val{\Aconc}{\s_\A}(q_0) = 1$ since, if Player $\B$ always plays the first column, the game indefinitely loops in $q_0$. The issue 
is that, in the MDP induced by the strategy $\s_\A$, the trivial end component $\{q_0\}$ is a trap, as it does not intersect the target set $\top$ -- and therefore, the probability to reach $\top$ from $q_0$ is equal to $0$ -- whereas $\val{\Aconc}{}(q_0) > 0$
. 
In fact, as soon as this issue is avoided, if the strategy $\s_\A$ locally dominates the valuation $v$, the desired property on $\s_\A$ holds. Indeed:
\begin{proposition}[Proof~\ref{proof:lem_sufficient_cond}]
	\label{prop:sufficient}
	Consider 
        a Player $\A$ positional strategy $\s_\A \in \SetPosStrat{\Aconc}{\A}$ locally dominating $v$, and assume 
that $v \preceq \lfp$. Assume that for all end components $H = (Q_H,\beta)$ in the MDP induced by the strategy $\s_\A$, if $Q_H \neq \{ \top \}$, for all $q_H \in Q_H$, we have $\val{\Aconc}{}(q_H) = 0$ (in other words, for all $q \in Q$, if $\val{\Aconc}{\s_\A}(q) = 0$ then $\val{\Aconc}{}(q) = 0$). In that case, for all $q \in Q$, we have $\val{\Aconc}{\s_\A}(q) \geq v(q)$ (i.e. the strategy $\s_\A$ guarantees the valuation $v$).
\end{proposition}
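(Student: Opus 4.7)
The plan is to fix any positional Player $\B$ strategy $\s_\B \in \SetPosStrat{\Aconc}{\B}$, set $r(q) := \prob{q}{\s_\A,\s_\B}(\top)$, and prove $v(q) \leq r(q)$ for every $q \in Q$. Since the MDP induced by $\s_\A$ admits an optimal positional adversary for the minimization of reachability (a standard property of finite reachability MDPs), this yields $\val{\Aconc}{\s_\A}(q) = \inf_{\s_\B} \prob{q}{\s_\A,\s_\B}(\top) \geq v(q)$, which is the conclusion.

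The first ingredient is a subharmonicity estimate derived from local domination: for every $q \neq \top$,
\[
\sum_{q' \in Q} \probTrans{q,q'}(\s_\A(q),\s_\B(q))\, v(q') \,=\, \outM_{\gameNF{\formNF_q}{\mu_v}}(\s_\A(q),\s_\B(q)) \,\geq\, \va_{\gameNF{\formNF_q}{\mu_v}}(\s_\A(q)) \,\geq\, v(q),
\]
which unfolds the definitions of $\probTrans{q,q'}$, of $\mu_v$, and of the value $\va$. The second ingredient concerns the BSCCs provided by Proposition~\ref{prop:end_in_EC}: each $H = (Q_H, \beta) \in \mathcal{H}_{\s_\B}$ is also an end component of the MDP induced by $\s_\A$ (taking $\beta(q) := \Supp(\s_\B(q))$), so the hypothesis of the proposition applies. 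Hence either $Q_H = \{\top\}$, and then $r(\top) = 1 \geq v(\top)$ since $v \preceq \lfp \preceq 1$, or $\val{\Aconc}{}(q) = 0$ for every $q \in Q_H$; combined with $v \preceq \lfp = \val{\Aconc}{}$ from Theorem~\ref{thm:m_delta}, the latter case forces $v(q) = 0 = r(q)$ throughout $Q_H$. In all cases $v \preceq r$ pointwise on every BSCC.

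The main step is then a maximum-principle argument applied to $d := v - r$. Suppose for contradiction that $M := \max_{q \in Q} d(q) > 0$ and fix a state $q^*$ attaining $M$. By the BSCC analysis, $q^*$ lies outside every BSCC. Combining the above subharmonicity of $v$ with the one-step identity $r(q^*) = \sum_{q'} \probTrans{q^*,q'}(\s_\A(q^*),\s_\B(q^*))\, r(q')$ (valid for $q^* \neq \top$) gives $M = d(q^*) \leq \sum_{q'} \probTrans{q^*,q'}(\s_\A(q^*),\s_\B(q^*))\, d(q')$; since $d(q') \leq M$ and the coefficients sum to $1$, every successor $q'$ of $q^*$ in the induced Markov chain must also satisfy $d(q') = M$. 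Iterating, the set $\{q \in Q : d(q) = M\}$ is closed under Markov-chain successors starting from $q^*$, so by the second assertion of Proposition~\ref{prop:end_in_EC} it contains some BSCC reachable from $q^*$, contradicting the previous paragraph. The part that I expect will require care is the identification of BSCCs of the Markov chain as end components of the $\s_\A$-MDP, so that the proposition's hypothesis can be invoked, together with checking that the equality-propagation in the maximum principle correctly traps us inside a BSCC.
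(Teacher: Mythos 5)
Your proof is correct, but it takes a genuinely different route from the paper's. The paper fixes an optimal positional adversary $\s_\B$, then builds hybrid strategies $\s_\A^l$ that play $\s_\A$ for $l$ steps before switching to an almost-optimal strategy $\kappa_\A$ guaranteeing $v-\varepsilon/2$ in $N$ steps; an induction shows $\prob{q}{\s_\A^l,\s_\B^l}(l+N,\top)\geq v(q)-\varepsilon/2$, and a path decomposition at time $l$ (into value-$0$ BSCCs, the target BSCC, and a residual set of probability at most $\varepsilon/2$ for large $l$) shows the switch gains nothing, yielding $\prob{q}{\s_\A,\s_\B}(\top)\geq v(q)-\varepsilon$ and then letting $\varepsilon\to 0$. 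You instead work directly in the finite Markov chain obtained by also fixing a positional $\s_\B$: local domination plus Proposition~\ref{prop:valuation} gives the subharmonicity $\sum_{q'}\probTrans{q,q'}(\s_\A(q),\s_\B(q))\,v(q')\geq v(q)$, the harmonicity of $r$ follows from Proposition~\ref{prop:relation_proba_inf} with positional strategies, and a maximum principle on $d=v-r$ propagates any positive maximum along successors into a BSCC (reachable by Proposition~\ref{prop:end_in_EC}), where $d\leq 0$ by the end-component hypothesis and $v\preceq\lfp$ --- a contradiction. Both arguments share the same three ingredients (subharmonicity from local domination, the BSCC structure of Proposition~\ref{prop:end_in_EC}, and the value-$0$ hypothesis on non-target ECs), and both reduce to a positional adversary via the standard MDP fact the paper already invokes through Theorem~\ref{thm:m_delta}. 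What your version buys is economy: no $\varepsilon$, no limit, no strategy switching, and no appeal to Lemma~\ref{lem:geq_m_delta}; the one point to state explicitly (which you correctly flag) is that every BSCC of the product chain is an end component of the $\s_\A$-induced MDP, which is exactly the inclusion $\mathcal{H}_{\s_\B}\subseteq\mathcal{H}$ asserted in Proposition~\ref{prop:end_in_EC}.
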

\begin{proofs}
	Consider some $\varepsilon > 0$ and, for $x \in \{ \varepsilon,\varepsilon/2 \}$, the valuations $v_{x} = v - x \in [0,1]^Q$. We show that 
        $\s_\A$ guarantees $v_\varepsilon$. As this holds for all $\varepsilon > 0$, it follows that $\s_\A$ guarantees $v$. 
	Consider an arbitrary positional strategy $\s_\B$ for Player $\B$. 
	Let $\kappa_\A$ be a Player $\A$ strategy guaranteeing $v_{\varepsilon/2}$ in $n \geq 0$ steps from every state (which exists since $v_{\varepsilon/2} \prec \lfp$) and a strategy $\kappa_\B$ for Player $\B$ optimal against $\kappa_\A$. So $\prob{q}{\kappa_\A,\kappa_\B}(n,\top) \geq v_{\varepsilon/2}(q)$ for all $q \in Q$. Now, for all $l \geq 0$, we consider the strategy $\s_\A^l$ that plays $\s_\A$ $l$ times and then plays $\kappa_\A$ (and similarly for a strategy $\s_\B^l$ for Player $\B$). As $\s_\A$ locally dominates $v$, it also locally dominates $v_{\varepsilon/2}$ which is obtained from $v$ by translation. Therefore, for any state $q \in Q$, if the local strategy $\s_\A(q)$ is played in $q$, then the convex combination of the values of the successors of $q$ w.r.t. the valuation $v_{\varepsilon/2}$ is at least $v_{\varepsilon/2}(q)$. In other words, the probability to reach $\top$ from $q$ in $1+n$ steps if the strategy $\s_\A^1$ is played is at least $v_{\varepsilon/2}(q)$: $\prob{q}{\s_\A^1,\s_\B^1}(1+n,\top) \geq v_{\varepsilon/2}(q)$. In fact, by induction, this holds for all $l \geq 0$: $\prob{q}{\s_\A^l,\s_\B^l}(l+n,\top) \geq v_{\varepsilon/2}(q)$. 
	Now, with strategies $\s_\A^l$ and $\s_\B^l$, consider the state of the game after $l$ steps: either it is in a BSCC (w.r.t. $\s_\A$ and $\s_\B$) or it is not. For a sufficiently large $l$, the probability not to have reached a BSCC is as close to 0 as we want. Furthermore, for a state $q_H$ in a BSCC $H$ that is not $\{ \top \}$, by assumption, we have that $\val{\Aconc}{}(q_H) = 0$, hence $\prob{q_H}{\kappa_\A,\kappa_\B}(\top) = 0$. In addition, if the state is in the trivial BSCC $\{ \top \}$, then $\top$ is reached. 
	Hence, for $l$ large enough, the two probabilities $\prob{q}{\s_\A^l,\s_\B^l}(l+n,\top)$ and $\prob{q}{\s_\A^l,\s_\B^l}(l,\top)$ are as close to one another as we want. 
	Finally, note that the strategies $\s_\A^l,\s_\B^l$ behave exactly like the strategies $\s_\A,\s_\B$ in the first $l$ steps. 
	That is, for $l$ large enough, and $q \in Q$, we have $\prob{q}{\s_\A,\s_\B}(\top) \geq \prob{q}{\s_\A,\s_\B}(l,\top) = \prob{q}{\s_\A^l,\s_\B^l}(l,\top) \geq \prob{q}{\s_\A^l,\s_\B^l}(l+n,\top) - \varepsilon/2 \geq v_{\varepsilon/2}(q) - \varepsilon/2 = v_\varepsilon(q)$.
\end{proofs}

Fix a Player $\A$ positional strategy $\s_\A$ locally dominating the valuation $v$ and let $\Gamma$ be the MDP induced by $\s_\A$.
For 
$\s_\A$ to guarantee the valuation $v$, it suffices to ensure that any EC 
in $\Gamma$ that is not the trivial EC $\{ \top \}$  has all its states of value 0. It does not necessarily hold for 
$\s_\A$ (recall the explanations before Proposition~\ref{prop:sufficient}). However, we do have the following: fix an EC $H$ in 
$\Gamma$. Then, all the states 
$H$ have the same value w.r.t. the valuation $v$. It is stated in the proposition below.

\begin{proposition}[Proof~\ref{proof:value_end_component}]
	\label{prop:value_end_component}
	Consider 
        a Player $\A$ positional strategy $\s_\A \in \SetPosStrat{\Aconc}{\A}$ locally dominating a valuation $v \in [0,1]^Q$. 
	For all EC $H = (Q_H,\beta)$ in the MDP induced by the strategy $\s_\A$, there exists 
	$v_H \in [0,1]$ such that, for all $q \in Q_H$, we have $v(q) = v_H$. Furthermore, for all $q \in Q_H$, we have $\va_{\gameNF{\formNF_q}{\mu_v}}(\s_\A(q)) = v(q)$.
\end{proposition}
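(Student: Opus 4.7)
The plan is to argue by maximizing over the end component and then propagating via strong connectivity. Fix an EC $H=(Q_H,\beta)$ in the MDP induced by $\s_\A$. Since $Q_H$ is finite, set $v_{\max}:=\max_{q\in Q_H}v(q)$ and $M:=\{q\in Q_H\mid v(q)=v_{\max}\}$, which is non-empty. The key observation I would establish first is that for every state $q\in Q_H$ and every $b\in B$, the outcome of $\gameNF{\formNF_q}{\mu_v}$ at the pair $(\s_\A(q),b)$ equals the expected $v$-value of the next state in the induced MDP, i.e.
\[
\outM_{\gameNF{\formNF_q}{\mu_v}}(\s_\A(q),b)=\sum_{q'\in Q}\iota(q,b)(q')\cdot v(q').
\]
This is a direct unfolding: expand $\mu_v(d)=\sum_{q'}\distribFunc(d)(q')v(q')$ and use $\probTrans{q,q'}(\s_\A(q),b)=\sum_a\s_\A(q)(a)\distribFunc(\delta(q,a,b))(q')$.

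Next, I would show that $M$ is closed under EC moves. Pick $q\in M$ and $b\in\beta(q)$. Because $b\in\beta(q)$, we have $\Supp(\iota(q,b))\subseteq Q_H$, so the weighted average $\sum_{q'}\iota(q,b)(q')\cdot v(q')$ lies in $[0,v_{\max}]$. On the other hand, local domination gives
\[
\inf_{b'\in B}\outM_{\gameNF{\formNF_q}{\mu_v}}(\s_\A(q),b')=\va_{\gameNF{\formNF_q}{\mu_v}}(\s_\A(q))\geq v(q)=v_{\max},
\]
so in particular the outcome at $(\s_\A(q),b)$ is at least $v_{\max}$. By the identity above, this forces every successor $q'\in\Supp(\iota(q,b))$ to satisfy $v(q')=v_{\max}$, i.e.\ $\Supp(\iota(q,b))\subseteq M$.

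Now I invoke the strong connectivity clause of Definition~\ref{def:end_component}: starting from any fixed $q_0\in M$, every $q\in Q_H$ is reachable along edges $(p,p')$ with $p'\in\Supp(\iota(p,\beta(p)))$. By iterating the closure property just proved, all such reachable states lie in $M$, hence $M=Q_H$. Setting $v_H:=v_{\max}$ gives the first claim. The hardest part of the argument is really just this closure-plus-connectivity step; once one sees that equality in a convex combination forces pointwise equality on the support, the rest is bookkeeping.

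For the second claim, let $q\in Q_H$ and pick any $b\in\beta(q)$ (non-empty by definition). Since every successor under $\iota(q,b)$ lies in $Q_H$ and therefore has $v$-value $v_H$, the identity above yields $\outM_{\gameNF{\formNF_q}{\mu_v}}(\s_\A(q),b)=v_H$. Consequently
\[
\va_{\gameNF{\formNF_q}{\mu_v}}(\s_\A(q))=\inf_{b'\in B}\outM_{\gameNF{\formNF_q}{\mu_v}}(\s_\A(q),b')\leq v_H,
\]
and combined with the reverse inequality from local domination we obtain $\va_{\gameNF{\formNF_q}{\mu_v}}(\s_\A(q))=v_H=v(q)$, as required.
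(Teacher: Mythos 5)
Your proof is correct and follows essentially the same route as the paper's: both take the maximum of $v$ over $Q_H$, use Proposition~\ref{prop:valuation} together with local domination to squeeze the convex combination over successors and force equality on the support, and then propagate via the strong connectivity of the underlying EC graph. The derivation of the second claim is likewise identical.
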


	\section{Positional optimal and $\varepsilon$-optimal strategies}
	\label{sec:optimal_strat}

The aim of this section is, given a concurrent reachability game, to
determine exactly from which states Player $\A$ has an optimal
strategy. This, in turn, will give that whenever she has an optimal
strategy, she has one that is positional which therefore extends
Everett \cite{everett57} (the existence of positional
$\varepsilon$-optimal strategies). We fix a concurrent reachability
game $\langle \Aconc,\top \rangle$  for the rest of this section. Let us first introduce some terminology.
\begin{definition}[Maximizable and sub-maximizable states]
  A state $q \in Q$ from which Player $\A$ has (resp. does not have) an optimal strategy is called \emph{maximizable} (resp. \emph{sub-maximizable}). The set of such states is denoted $\OS_\A$ (resp. $\SubOS_\A$).
\end{definition}

The value of that game is given by the vector $\lfp \in [0,1]^Q$ (from Definition~\ref{def:operator}). We want to build an optimal (and positional) strategy for Player $\A$ when possible. To be optimal, a Player $\A$ positional strategy $\s_\A$ has to play optimally at each local interaction $\formNF_q$ (for $q \in Q$) with regard to the valuation $\mu_{\lfp} \in [0,1]^\distribSet$ (lifting the valuation $\lfp$ to Nature states). However, it is not sufficient in general: in the snow-ball game of Figure~\ref{fig:arbitrarilyClose}, when Player $\A$ plays optimally in $\formNF_{q_0}$ w.r.t. the valuation $\mu_{\lfp}$ (that is, plays the first line with probability 1), Player $\B$ can enforce the play never to leave the state $q_0 \neq \top$. Hence, locally, we want to have 
strategies that not only play optimally but, regardless of the choice of Player $\B$, have a non-zero probability to get closer to the target $\top$. Such strategies will be called \emph{progressive strategies}. To properly define them, we introduce the following notation.

\begin{definition}[Optimal action]
	\label{def:optimal_act}
	Let $q \in Q$ be a state of the game. Consider the game in normal form $\gameNF{\formNF_q}{\mu_{\lfp}}$. For all 
	strategies $\sigma_\A \in \Dist(\St_\A)
	$, we define the set $B_{\sigma_\A}$ of \emph{optimal actions} 
	w.r.t. the strategy $\sigma_\A$ by $B_{\sigma_\A} := \{ b \in B \mid \outM_{\gameNF{\formNF_q}{\mu_{\lfp}}}(\sigma_\A,b) = \va_{\gameNF{\formNF_q}{\mu_{\lfp}}}(\sigma_\A) \}$. 
\end{definition}

In Figure~\ref{fig:example_prog_risk}, the set $B_{\sigma_\A}$ of optimal actions w.r.t. the strategy $\sigma_\A$ are represented in bold purple: the weighted values of these actions is the value of the strategy: $1/2$. 

We can now define the set of \emph{progressive} strategies.
\begin{definition}[Progressive strategies]
	\label{def:progressive_strat} 
	Consider a state $q \in Q$ and a set of states $\Gex \subseteq Q$ that Player $\A$ wants to reach
	. The set of Nature states $\Gex_\distribSet \subseteq \distribSet$ corresponds to the Nature states with a non-zero probability to reach the set $\Gex$: $\Gex_\distribSet := \{ d \in \distribSet \mid \Supp(\distribFunc(d)) \cap \Gex \neq \emptyset \}$. Then, the set of \emph{progressive strategies} $\Prg_q(\Gex)$ at state $q$ w.r.t. $\Gex$ is defined by $\Prg_q(\Gex) := \{ \sigma_\A \in \Opt_\A(\gameNF{\formNF_q}{\mu_{\lfp}}) \mid \forall b \in B_{\sigma_\A},\; \delta(q,\Supp(\sigma_\A),b) \cap \Gex_\distribSet  \neq \emptyset \}$. 
\end{definition}
In Figure~\ref{fig:example_prog_risk}, the Nature states in $\Gex_\distribSet$ are arbitrarily chosen for the example and circled in green. The depicted strategy is progressive as, for all bold purple actions, there is a green-circled state in the support of the strategy (the circled $3/4$).

However, in an arbitrary game, some states may be sub-maximizable
. In that case, playing optimally implies avoiding these states. Given a set $\Bex \subseteq Q$ of states to avoid, an optimal strategy that has a non-zero probability to reach that set of states $\Bex$ is called \emph{risky}.
\begin{definition}[Risky strategies]
	\label{def:risky_strat}
	Let $q \in Q$ be a state of the game and $\Bex \subseteq Q$ be a set of sub-maximizable states. The corresponding set of Nature states $\Bex_\distribSet \subseteq \distribSet$ is defined similarly to $\Gex_\distribSet$ in Definition~\ref{def:progressive_strat}: $\Bex_\distribSet := \{ d \in \distribSet \mid \Supp(\distribFunc(d)) \cap \Bex \neq \emptyset \}$. Then, the set of \emph{risky strategies} $\Rsk_q(\Bex)$ at state $q$ w.r.t. $\Bex$ is defined by $\Rsk_q(\Bex) := \{ \sigma_\A \in \Opt_\A(\gameNF{\formNF_q}{\mu_{\lfp}}) \mid \exists b \in B_{\sigma_\A},\; \delta(q,\Supp(\sigma_\A),b) \cap \Bex_\distribSet \neq \emptyset \}$. 
\end{definition}
In Figure~\ref{fig:example_prog_risk}, the set of Nature states $\Bex_\distribSet$ are also arbitrarily chosen for the example and circled in red. The strategy $\sigma_\A$ is not risky since no red-squared state appears in the intersection of the support of $\sigma_\A$ and the purple actions in $B_{\sigma_A}$.

In fact, we want for local strategies to be \emph{efficient}, that is both progressive and not risky. 
\begin{definition}[Efficient strategies]
	\label{def:efficient_strat}
	Let $q \in Q$ be a state of the game and $\Gex,\Bex \subseteq Q$ be sets of states. The set of \emph{efficient strategies} $\Eff_q(\Gex,\Bex)$ at state $q$ w.r.t. $\Gex$ and $\Bex$ is defined by $\Eff_q(\Gex,\Bex) := \Prg_q(\Gex) \setminus \Rsk_q(\Bex)$. 
\end{definition}
In Figure~\ref{fig:example_prog_risk}, the strategy $\sigma_\A$ is efficient as it is both progressive and not risky.

We can now compute inductively the set of maximizable and sub-maximizable states. 
First, given a set of sub-maximizable states $\Bex$, we define iteratively below 
a set of \emph{secure} states w.r.t. $\Bex$, there are the states with a non-zero probability to get closer to the target $\top$ while avoiding the set $\Bex$. The construction is illustrated in Figure~\ref{fig:drawing}.

\begin{definition}[Secure states]
	\label{def:secure_states}
	Consider 
	a set of states $\Bex \subseteq Q$
	. 
	We set $\Scr_0(\Bex) := \{ \top \}$ and, for all $i \geq 0$, $\Scr_{i+1}(\Bex) := \Scr_{i}(\Bex) \cup \{ q \in Q \setminus \Bex \mid \Eff_q(\Scr_{i}(\Bex),\Bex) \neq \emptyset \}$. The set $\Scr(\Bex)$ of states \emph{secure} w.r.t. $\Bex$ is: $\Scr(\Bex) := \cup_{n \in \N} \Scr_n(\Bex) \cup \lfp^{-1}[0]$.
\end{definition}   

\begin{figure}
	\begin{minipage}[b]{0.5\linewidth}
		\hspace*{-1cm}
		\centering
		\includegraphics[scale=1]{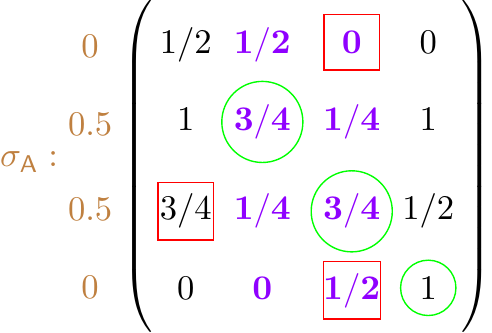}
		\caption{A game in normal form with an optimal strategy depicted in brown on the left. Its value is $1/2 = 1/2 \cdot 3/4 + 1/2 \cdot 1/4$.}
		\label{fig:example_prog_risk}         
	\end{minipage}
	\hspace{5pt}
	\begin{minipage}[b]{0.5\linewidth}
		\centering
		\includegraphics[scale=0.3]{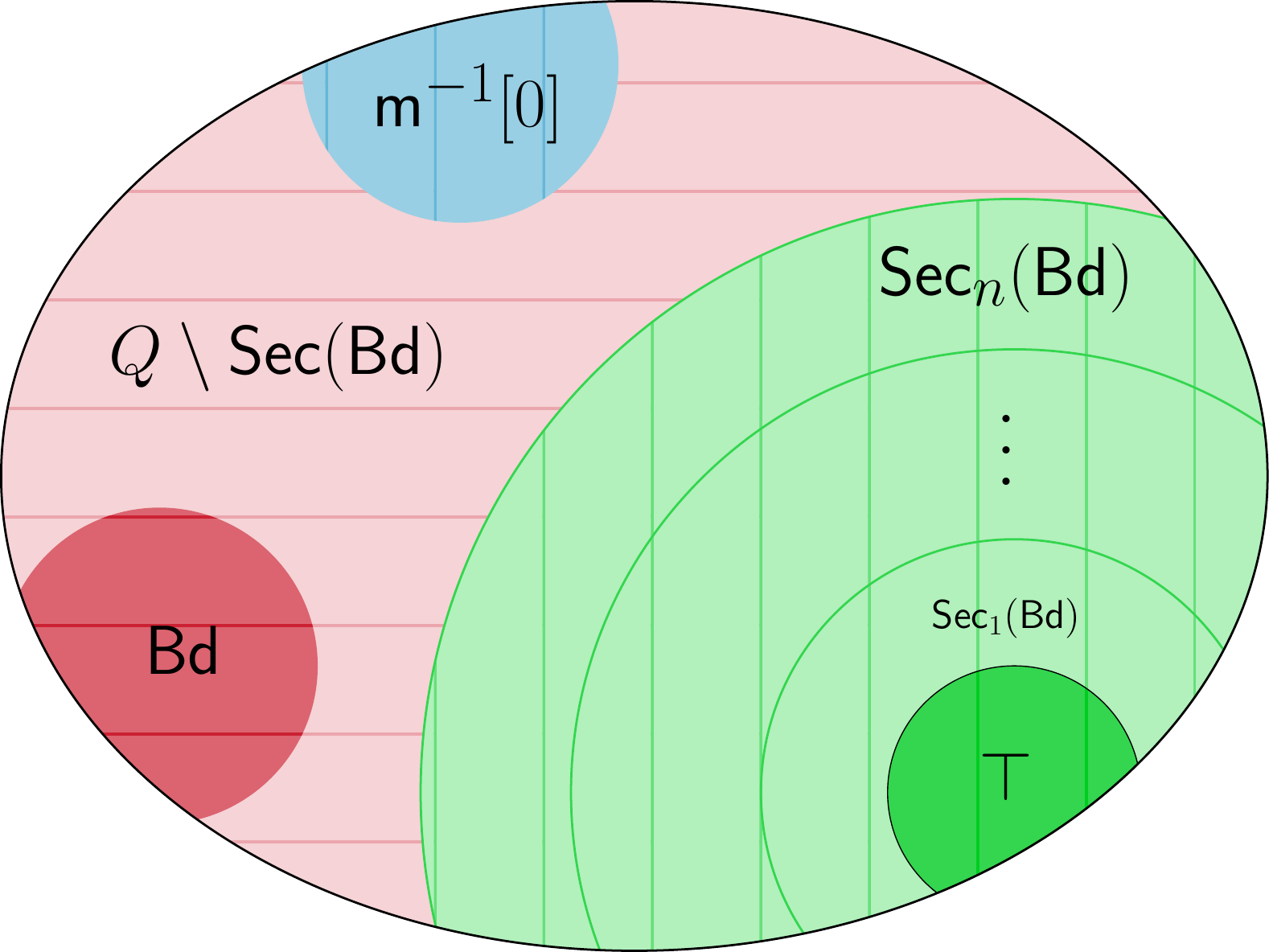}
		\caption{The construction of Definition~\ref{def:secure_states} of the set of states $\Scr(\Bex)$: it is the reunion of the blue and green vertical stripe areas.}
		\label{fig:drawing}              
	\end{minipage}
\end{figure}


Note that, as the game $\Aconc$ is finite, this procedure ends in at most $n = |Q|$ steps. Furthermore, the states of value 0 are added since any state of value 0 is maximizable
. The interest of this construction lies in the lemma below: if all states in $\Bex$ are sub-maximizable, then all states in $Q \setminus \Scr(\Bex)$ also are.

\begin{lemma}[Proof~\ref{proof:bad_increase}]
	\label{lem:bad_increase}
        Assume that a set of states $\Bex$ is such that $\Bex \subseteq \SubOS_\A$. Then, the set of states $Q \setminus \Scr(\Bex)$ is such that $Q \setminus \Scr(\Bex) \subseteq \SubOS_\A$ (these correspond to the red horizontal stripe areas in Figure~\ref{fig:drawing}).
\end{lemma}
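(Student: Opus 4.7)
The plan is to proceed by contradiction via the contrapositive: assume $q \in (Q \setminus \Scr(\Bex)) \cap \OS_\A$, i.e., $q$ is maximizable yet not secure. Since $q \notin \lfp^{-1}[0]$ we have $\lfp(q) > 0$; since $q \notin \Scr_n(\Bex)$ for any $n$ and the sequence stabilizes in $\leq |Q|$ steps, $\Eff_q(\cup_n \Scr_n(\Bex), \Bex) = \emptyset$; and $q \notin \Bex$. Fix an optimal Player $\A$ strategy $\s_\A$ from $q$ and a positional optimal Player $\B$ strategy $\s_\B^*$ (Theorem~\ref{thm:m_delta}). The goal is to construct a history-dependent Player $\B$ strategy $\s_\B^\dagger$ forcing $\prob{q}{\s_\A, \s_\B^\dagger}(\top) = 0$, contradicting optimality of $\s_\A$.

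The key ingredient is a \emph{residual-optimality principle}. If $\s_\A$ is optimal from $q$ then $\sigma := \s_\A(q)$ is itself optimal in $\gameNF{\formNF_q}{\mu_\lfp}$, since otherwise some pure $b$ with $\outM_{\gameNF{\formNF_q}{\mu_\lfp}}(\sigma, b) < \lfp(q)$ would exist, and playing $b$ at $q$ then $\s_\B^*$ would drop the overall probability below $\lfp(q)$. Moreover, for every $b \in B_\sigma$ and every $r$ with $p_{q,r}(\sigma, b) > 0$, the residual $\s_\A^{q \cdot r}$ is again optimal from $r$: combining $\outM_{\gameNF{\formNF_q}{\mu_\lfp}}(\sigma, b) = \lfp(q)$ (from $b \in B_\sigma$ and local optimality of $\sigma$), the inequality $\prob{r'}{\s_\A^{q \cdot r'}, \s_\B^*}(\top) \leq \lfp(r')$ (from optimality of $\s_\B^*$), and $\prob{q}{\s_\A, \s_\B'}(\top) \geq \lfp(q)$ for the strategy $\s_\B'$ playing $b$ at $q$ then $\s_\B^*$ (from optimality of $\s_\A$), forces equality throughout the one-step decomposition; any hypothetical $\s_\B^r$ witnessing suboptimality of $\s_\A^{q \cdot r}$ from $r$ could then be patched into $\s_\B'$ at history $q \cdot r$ to strictly decrease the total probability.

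I will then combine this principle with $\Bex \subseteq \SubOS_\A$ to rule out risky local play. At any state $r$ reached along an induction where the current residual of $\s_\A$ is optimal from $r$, the local choice $\sigma_r := \s_\A^\pi(r)$ cannot lie in $\Rsk_r(\Bex)$: otherwise some $b \in B_{\sigma_r}$ would reach $r' \in \Bex$ with positive probability and the residual-optimality principle would yield an optimal strategy from $r' \in \Bex \subseteq \SubOS_\A$, a contradiction. If furthermore $r \notin \cup_n \Scr_n(\Bex)$ and $r \notin \Bex$, the definition of $\Scr_{n+1}(\Bex)$ yields $\Eff_r(\cup_n \Scr_n(\Bex), \Bex) = \emptyset$, so (since $\sigma_r$ is optimal and not risky) $\sigma_r \notin \Prg_r(\cup_n \Scr_n(\Bex))$; hence there exists an \emph{exit action} $b \in B_{\sigma_r}$ with $\delta(r, \Supp(\sigma_r), b) \cap (\cup_n \Scr_n(\Bex))_\distribSet = \emptyset$, and every next state reached via $(\sigma_r, b)$ remains outside $\cup_n \Scr_n(\Bex)$ and (by non-riskiness) outside $\Bex$.

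Finally I will define $\s_\B^\dagger$ recursively on reachable histories: at history $\pi \cdot r$ with $r \notin \cup_n \Scr_n(\Bex)$, $\s_\B^\dagger$ plays an exit action supplied above, otherwise it plays $\s_\B^*$. A straightforward induction on history length verifies the invariant that every visited state lies outside $\cup_n \Scr_n(\Bex) \cup \Bex$ and that every $\s_\A$-residual along the way is optimal, so the exit action is always available. Consequently the play under $(\s_\A, \s_\B^\dagger)$ from $q$ never enters $\cup_n \Scr_n(\Bex)$, and since $\top \in \Scr_0(\Bex)$ we get $\prob{q}{\s_\A, \s_\B^\dagger}(\top) = 0 < \lfp(q)$, the desired contradiction. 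The main obstacle is this coupled induction: at each step one must simultaneously propagate optimality of the residual of $\s_\A$ (on which local optimality and the non-risk argument rest) and the existence of an exit action (which keeps the play out of $\cup_n \Scr_n(\Bex)$), all while $\s_\A$ need not be positional so residuals must be tracked history by history.
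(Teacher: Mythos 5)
Your proof is correct and follows essentially the same route as the paper's: the paper proves the same residual-optimality principle (its Proposition on relevant paths), uses $\Bex \subseteq \SubOS_\A$ to exclude risky local strategies, extracts the same Player $\B$ ``exit action'' from $\Eff_q = \emptyset$, and builds the trapping strategy by the same induction on histories. The only cosmetic differences are that the paper maintains the invariant ``the play stays in $(Q \setminus \Scr(\Bex)) \cup \lfp^{-1}[0]$'' with a separate case for value-$0$ states while you keep the play outside $\cup_n \Scr_n(\Bex)$ uniformly, and your notation $\s_\A^{q \cdot r}$ for the residual should, under the paper's convention, read $\s_\A^{q}$ evaluated from $r$.
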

\begin{proofs}
	For an arbitrary Player $\A$ strategy $\s_\A \in \SetStrat{\Aconc}{\A}$ to be optimal, it roughly needs, on all relevant paths, to be optimal. More precisely, on any finite path $\pi = \pi' \cdot q \in Q^+$ with a non-zero probability to occur if Player $\B$ plays (locally) optimal actions against the strategy $\s_\A$ (called a relevant path), the strategy $\s_\A$ needs to play an optimal (local) strategy in the local interaction $\formNF_q$ and it\footnote{In fact, the residual strategy $\s_\A^{\pi'}$.} has to be optimal from $q$ in the reachability game. Therefore, on all relevant paths, the strategy $\s_\A$, locally, has to play optimal strategies that are not risky. However, in any local interaction of a state $q \in Q \setminus \Scr(\Bex)$, there is no efficient strategies available to Player $\A$. Therefore, if the game starts from a state $q \in Q \setminus \Scr(\Bex)$ an optimal strategy $\s_\A$ for Player $\A$ (which therefore is locally optimal but not progressive) would allow Player $\B$ to ensure staying in the set $Q \setminus \Scr(\Bex)$ while playing optimal actions. In that case, 
	the game never leaves the set $Q \setminus \Scr(\Bex)$, which induces a value of 0, whereas $\val{\Aconc}{}(q) > 0$ since $q \notin \Scr(\Bex)$. Thus, there is no optimal strategy for Player $\A$ from a state in $Q \setminus \Scr(\Bex)$.
\end{proofs}

We define inductively the set of bad states (which, in turn, will correspond to the set of sub-maximizable states) below.
\begin{definition}[Set of sub-maximizable states]
	\label{def:bad_states}
	Let $\Bad_0 := \emptyset$ and, for all $i \geq 0$, $\Bad_{i+1} := Q \setminus \Scr(\Bad_i)$. Then, the set $\Bad$ of bad states is equal to $\Bad := \cup_{n \in \N} \Bad_n$ for $n = |Q|$.
\end{definition}
Note that, as in the case of the set of secure states, since the game $\Aconc$ is finite, this procedure ends in at most $n = |Q|$ steps.
Lemma~\ref{lem:bad_increase} ensures that the set of states $\Bad$ is included in $\SubOS_\A$
. In addition, we have that there exists a Player $\A$ positional strategy optimal from all states $q$ in its complement $\Scr(\Bad) = Q \setminus \Bad$, as stated in the lemma below.
\begin{lemma}[Proof~\ref{proof:pos_optimal_strat}]
	\label{lem:pos_optimal_strat}
	For all $\varepsilon > 0$, there exists a positional strategy $\s_\A \in \SetPosStrat{\A}{\Aconc}$ s.t.:
	\begin{itemize}
		\item for all $q \in \Scr(\Bad)$, we have $\val{\Aconc}{\s_\A}(q) = \lfp(q)$;
		\item for all $q \in \Bad$, we have $\val{\Aconc}{\s_\A}(q) \geq  \lfp(q) - \varepsilon$.
	\end{itemize}
	In particular, it follows that $\Scr(\Bad) \subseteq \OS_\A$.
\end{lemma}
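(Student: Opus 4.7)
The plan is to construct the positional strategy $\s_\A$ by local choices and to apply Proposition~\ref{prop:sufficient} with a tailored valuation. For $q \in \Scr(\Bad) \setminus \lfp^{-1}[0]$, let $i \geq 1$ be the least index with $q \in \Scr_i(\Bad)$ and pick $\s_\A(q) \in \Eff_q(\Scr_{i-1}(\Bad), \Bad)$ (non-empty by definition of $\Scr(\Bad)$); for $q \in \lfp^{-1}[0]$ pick $\s_\A(q)$ arbitrarily; for $q \in \Bad$ take a positional Everett-style $\eta$-perturbation $\hat{\s}_\A(q)$ of a locally optimal strategy (following \cite{everett57}), with $\eta$ small in terms of $\varepsilon$. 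Define $v \in [0,1]^Q$ by $v(q) = \lfp(q)$ on $\Scr(\Bad)$ and $v(q) = \lfp(q) - \varepsilon$ on $\Bad$; then $v \preceq \lfp$, and applying Proposition~\ref{prop:sufficient} to $(\s_\A, v)$ will yield $\val{\Aconc}{\s_\A}(q) \geq v(q)$ for every $q$, delivering both statements of the lemma (the $\Scr(\Bad)$-equality then follows since $\val{\Aconc}{\s_\A} \preceq \lfp$ is automatic).

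Local dominance of $\s_\A$ w.r.t.\ $v$ is straightforward on $\Scr(\Bad)$: since $\s_\A(q)$ is locally optimal in $\gameNF{\formNF_q}{\mu_\lfp}$ and not risky, for every $b \in B_{\s_\A(q)}$ the transitions $\delta(q, \Supp(\s_\A(q)), b)$ avoid $\Bad_\distribSet$, so $\mu_v$ and $\mu_\lfp$ agree on the Nature states reached, giving $\outM_{\gameNF{\formNF_q}{\mu_v}}(\s_\A(q), b) = \lfp(q) = v(q)$. For $b \notin B_{\s_\A(q)}$, the outcome in $\gameNF{\formNF_q}{\mu_\lfp}$ strictly exceeds $\lfp(q)$, and finiteness of the arena yields a uniform positive gap, so for $\varepsilon$ below this gap the drop from $\mu_\lfp$ to $\mu_v$ is absorbed. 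On $\Bad$, the Everett perturbation is tuned so that the $O(\eta)$ local sub-optimality and the drop from switching to $\mu_v$ concentrate on disjoint Player $\B$ actions (those transiting to $\Bad$ versus those avoiding it), giving $\va_{\gameNF{\formNF_q}{\mu_v}}(\hat{\s}_\A(q)) \geq \lfp(q) - \varepsilon = v(q)$.

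The central step is showing that every end component $H = (Q_H, \beta)$ in the MDP induced by $\s_\A$ with $Q_H \neq \{\top\}$ satisfies $\lfp(q) = 0$ for all $q \in Q_H$. Suppose toward contradiction that the common value $v_H$ (from Proposition~\ref{prop:value_end_component}) is positive; then $\top \notin Q_H$ (since $\top$ is a sink, strong connectivity would otherwise fail) and $Q_H \cap \lfp^{-1}[0] = \emptyset$. A first sub-claim is that $Q_H$ lies entirely in $\Scr(\Bad)$ or entirely in $\Bad$: for any $q \in Q_H \cap \Scr(\Bad)$ and any $b \in \beta(q)$, the chain of equalities $\outM_{\gameNF{\formNF_q}{\mu_\lfp}}(\s_\A(q), b) = \sum_{q' \in Q_H} \iota(q,b)(q') \lfp(q') = v_H = \va_{\gameNF{\formNF_q}{\mu_\lfp}}(\s_\A(q))$ forces $b \in B_{\s_\A(q)}$, whence non-riskiness gives $\Supp(\iota(q,b)) \cap \Bad = \emptyset$, and strong connectivity then forces $Q_H \subseteq \Scr(\Bad)$. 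If $Q_H \subseteq \Scr(\Bad)$, pick $q^* \in Q_H$ of minimum level $i^*$; every $b \in \beta(q^*) \subseteq B_{\s_\A(q^*)}$ triggers, by progressiveness of $\s_\A(q^*)$, a positive-probability transition into $\Scr_{i^*-1}(\Bad)$, which is disjoint from $Q_H$ by minimality---contradicting $\Supp(\iota(q^*, b)) \subseteq Q_H$. If $Q_H \subseteq \Bad$, the Everett perturbation is engineered precisely to preclude such ECs: $\lfp(q) > 0$ on $Q_H$ forces an escape route from $\Bad$, and the full-support mixing of $\hat{\s}_\A$ activates it with positive probability, breaking stability.

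The hard part is the calibration on $\Bad$: local dominance favors minimal perturbation while EC-breaking favors strong mixing. The classical Everett calibration in terms of $\varepsilon$ and the minimum transition probabilities of the arena resolves this tension, and the no-straddling-EC observation above decouples the $\Scr(\Bad)$- and $\Bad$-side analyses, so the standard Everett-side argument transfers directly.
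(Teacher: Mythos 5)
Your construction and analysis on $\Scr(\Bad)$ is essentially the paper's: efficient local strategies chosen by level, the gap $\eta_q$ between optimal and non-optimal columns used to calibrate $\varepsilon$, and the progressiveness/minimal-level contradiction to exclude end components inside $\cup_i\Scr_i(\Bad)$. The skeleton (Proposition~\ref{prop:sufficient} plus Proposition~\ref{prop:value_end_component}) is also the right one.

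The gap is on the $\Bad$ side, and it is genuine: the flat valuation $v=\lfp-\varepsilon$ on $\Bad$ combined with a full-support perturbation $\hat{\s}_\A(q)$ cannot deliver local dominance. If $\hat{\s}_\A(q)$ puts weight $\eta>0$ on an action $a$ such that, against some column $b$, the play moves to a Nature state of low value, while the unperturbed optimal strategy's mass sends $b$ into $\Bad_\distribSet$, then $\outM_{\gameNF{\formNF_q}{\mu_v}}(\hat{\s}_\A(q),b)$ can be as low as $(1-\eta)(\lfp(q)-\varepsilon)<v(q)$ for \emph{every} $\eta>0$; your claim that the $O(\eta)$ sub-optimality and the $\mu_\lfp\to\mu_v$ drop ``concentrate on disjoint Player $\B$ actions'' is exactly what fails here, and nothing in the definition of $\Bad$ rules this configuration out. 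Conversely, if you drop the perturbation to restore local dominance, the full-support trapping argument for end components inside $\Bad$ collapses (the snowball game of Figure~\ref{fig:arbitrarilyClose} is the witness: the unperturbed optimal strategy at $q_0$ creates the EC $\{q_0\}$ of value $1$). Your closing sentence simply asserts that ``the classical Everett calibration resolves this tension''; it does not, because with a flat $v$ there is no slack at all left at a $\Bad$-state whose perturbation-losing column also transits into $\Bad$. The paper resolves the tension by changing the \emph{valuation}, not the strategy: Proposition~\ref{prop:increasing_valuation} produces $v$ with $\restriction{v}{\Scr(\Bad)}=\restriction{\lfp}{\Scr(\Bad)}$, $\norm{\lfp}{v}\leq\varepsilon$, and the \emph{strict} inequality $\Delta(v)(q)>v(q)$ for all $q\in\Bad$. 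One then picks an unperturbed $\s_\A(q)$ with $\va_{\gameNF{\formNF_q}{\mu_v}}(\s_\A(q))>v(q)$, so local dominance is immediate, and the strictness contradicts the equality $\va_{\gameNF{\formNF_q}{\mu_v}}(\s_\A(q))=v(q)$ forced by Proposition~\ref{prop:value_end_component} at any EC state, excluding $\Bad$ from every EC without any mixing argument (and without your no-straddling sub-claim). That valuation-perturbation lemma is itself non-trivial --- it rests on an iterative argument for monotone $1$-Lipschitz self-maps of $[0,1]^n$ --- and it is the missing idea in your proposal.
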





\begin{figure}
	\centering
	\includegraphics[scale=0.3]{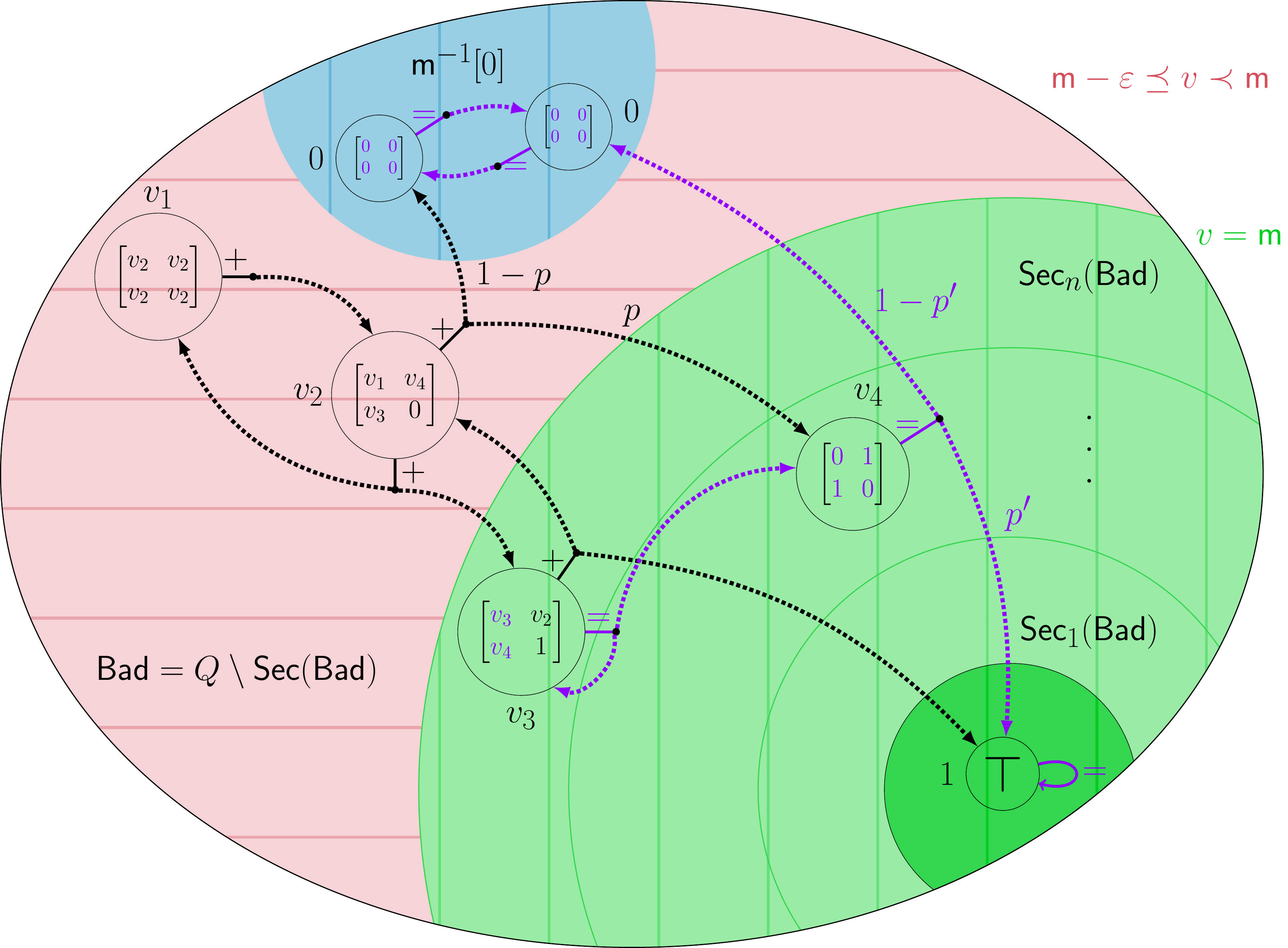}
	\caption{An illustration of the proof of Lemma~\ref{lem:pos_optimal_strat} on the MDP induced by the strategy $\s_\A$. Labels $v_1,\ldots,v_4$ is the value of the corresponding states given by the valuation $v$.}
	\label{fig:drawingArrow}   
\end{figure}
\begin{proofs}
	To prove this lemma, we define a Player $\A$ positional strategy $\s_\A \in \SetPosStrat{\Aconc}{\A}$, a valuation $v \in [0,1]^Q$ of the states, prove that the strategy $\s_\A$ locally dominates that valuation and prove that the only EC compatible with $\s_\A$ that is not the target has value 0. This will show that is guarantees the valuation $v$ by applying Proposition~\ref{prop:sufficient}. As we want the strategy $\s_\A$ to be optimal from all secure states, we consider a partial valuation $v$ such that $\restriction{v}{\Scr(\Bad)} := \restriction{\lfp}{\Scr(\Bad)}$ (we will define it later on $\Bad$). Then, on all secure states $q \in \Scr_i(\Bad)$, we set $\s_\A(q)$ to be an efficient strategy w.r.t. $\Scr_{i-1}(\Bad)$ and $\Bad$, i.e. $\s_\A(q) \in \Eff_q(\Scr_{i-1}(\Bad),\Bad)$. In particular, $\s_\A(q)$ is optimal in the game form $\formNF_q$ w.r.t. the valuation $\mu_\lfp$. However, we know that no strategy can be optimal from states in $\Bad$. Hence, we consider a valuation $v$ that is $\varepsilon$-close to the valuation $\lfp$ on states in $\Bad$ for a well-chosen $\varepsilon > 0$. This $\varepsilon$ 
	is chosen so that the value of the local strategy $\s_\A(q)$ for $q \in \Scr(\Bad)$ is 
	at least $v(q)$ w.r.t. the valuation $\mu_v$\footnote{Specifically, $\varepsilon$ has to be chosen smaller than the smallest difference between the values of an optimal actions $b \in B_{\s_\A(q)}$ and a non-optimal action $b \in B_{\s_\A(q)}$.}. 
	We can now define the valuation $v$ and the strategy $\s_\A$ on $\Bad$ such that the value of $\s_\A(q)$ in $\formNF_q$ w.r.t. $\mu_v$ is greater than $v(q)$: $\va_{\gameNF{\formNF_q}{\mu_{v}}}(\s_\A(q)) > v(q)$ (this requires a careful use the fact that the operator $\Delta$ from Section~\ref{sec:concurrentGames} is $1$-Lipschitz). 
	The valuation $v$ and the strategy $\s_\A$ are now completely defined on $Q$.	
	By definition, the strategy $\s_\A$ locally dominates the valuation $v$. 
	
	The MDP induced by the strategy $\s_\A$ is schematically depicted in Figure~\ref{fig:drawingArrow}.  The different 
	split arrows appearing in the figure correspond to the actions (or columns in the local interactions) available to Player $\B$. Black $+$-labeled-split arrows correspond to the actions of Player $\B$ that increase the value of $v$ (i.e. in a state $q$, such that the convex combination -- w.r.t. to the probabilities chosen by the strategy $\s_\A$ -- of the values w.r.t. $v$ of the successor states of $q$ is greater than $v(q)$). For instance, we have $v_2 < p \cdot v_4 + (1-p) \cdot 0$, where the probability $p \in [0,1]$ is set by the strategy $\s_\A$. On the other hand, purple $=$-labeled-split arrows correspond to the actions whose values do not increase the value of the state. For instance $v_4 = (1-p') \cdot 0 + p' \cdot 1$. We can see that the only split arrows exiting states in $\Bad$ (the red horizontal stripe area) are black (since $\va_{\gameNF{\formNF_q}{\mu_{v}}}(\s_\A(q)) > v(q)$ for all $q \in \Bad$)
	. However, from a secure state $q \in \Scr(\Bad)$ (the green and blue vertical stripe areas) there are also purple split arrows
	. Note that, in these secure states $q \in \Scr(\Bad)$, purple split arrows correspond to the optimal actions $B_{\s_\A(q)}$ at the local interaction $\formNF_q$. Furthermore, these split arrows cannot exit the set of secure states $\Scr(\Bad)$ since the local strategy $\s_\A(q)$ is not risky.
	
	
	We can then prove that the strategy $\s_\A$ guarantees the valuation $v$ by applying Proposition~\ref{prop:sufficient}: since $\s_\A$ locally dominates the valuation $v$, it remains to show that all the ECs different from $\{ \top \}$ have only states of value 0. In the figure, this corresponds to having ECs only in the blue upper circle and dark green bottom right inner circle areas. In fact, Proposition~\ref{prop:value_end_component} gives that any state $q$ in an EC ensures $\va_{\gameNF{\formNF_q}{\mu_{v}}}(\s_\A(q)) = v(q)$, which implies that no state in $\Bad$ can be in an EC. This can be seen in the figure between the states of value $v_1$ and $v_2$: because of the black arrow from $v_1$ to $v_2$, we necessarily have $v_1 < v_2$. Then, $v_2$ cannot loop (with probability one) to $v_1$ since this would imply $v_2 < v_1$. As all the split arrows are black for states in $\Bad$, no EC can appear in this region. Furthermore, the optimal actions in the secure states always have a non-zero probability to get closer to the target $\top$. In the figure, this corresponds to the fact that there is always one tip of a purple split arrow that goes down in the $(\Scr_i(\Bad))_{i \in \N}$ hierarchy (since the strategy $\s_\A(q)$ is progressive): in the example, from $v_3$ to $v_4$ and from $v_4$ to the target $\top$. Therefore, the only loop (with probability one) that can occur in the set $(\Scr_i(\Bad))_{i \in \N}$ is at the target $\top$. We conclude by applying Proposition~\ref{prop:sufficient}.

\end{proofs}

Overall, we obtain the theorem below summarizing the results proved in this section.
\begin{theorem}[Proof~\ref{proof:positional_optimal}]
	\label{thm:positional_optimal}
	In a concurrent reachability game $\langle \Aconc,\top \rangle$, we have $\Bad = \SubOS_\A$ and $\Scr(\Bad) = \OS_\A$. Furthermore, for all $\varepsilon > 0$, there is a Player $\A$ positional strategy $\s_\A$ optimal from all states in $\OS_\A$ and $\varepsilon$-optimal from all states in $\SubOS_\A$.
\end{theorem}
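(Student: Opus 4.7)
The plan is to combine Lemma~\ref{lem:bad_increase} and Lemma~\ref{lem:pos_optimal_strat} via a squeeze: the first gives one inclusion ($\Bad \subseteq \SubOS_\A$) and the second gives the complementary inclusion ($\Scr(\Bad) \subseteq \OS_\A$). Since $\{\OS_\A, \SubOS_\A\}$ partitions $Q$, a partition argument forces both inclusions to be equalities. The positional strategy required by the theorem is then exactly the one provided by Lemma~\ref{lem:pos_optimal_strat}.

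First I would verify the auxiliary set-theoretic identity $\Scr(\Bad) = Q \setminus \Bad$. For this I need to show that the sequence $(\Bad_i)_{i \in \N}$ from Definition~\ref{def:bad_states} is monotone increasing, which reduces (via $\Bad_{i+1} = Q \setminus \Scr(\Bad_i)$) to showing that $\Scr$ is antitone in its argument: if $\Bex \subseteq \Bex'$ then $\Scr(\Bex') \subseteq \Scr(\Bex)$. This is a straightforward induction on the stages $\Scr_i$: enlarging $\Bex$ can only enlarge the set of risky strategies $\Rsk_q(\Bex)$ and shrink the set of candidate states $Q \setminus \Bex$, so every efficient strategy w.r.t.\ the larger set remains efficient w.r.t.\ the smaller one (progressiveness is preserved because $\Scr_i(\Bex') \subseteq \Scr_i(\Bex)$ by the induction hypothesis). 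Since $Q$ is finite, $(\Bad_i)_i$ stabilizes at some $\Bad_n = \Bad$ with $n \leq |Q|$, hence $\Bad = \Bad_{n+1} = Q \setminus \Scr(\Bad_n) = Q \setminus \Scr(\Bad)$, giving $\Scr(\Bad) = Q \setminus \Bad$.

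Next I would show $\Bad \subseteq \SubOS_\A$ by induction on $i$ that $\Bad_i \subseteq \SubOS_\A$. The base case $\Bad_0 = \emptyset$ is immediate. For the induction step, assume $\Bad_i \subseteq \SubOS_\A$; then Lemma~\ref{lem:bad_increase} applied with $\Bex = \Bad_i$ yields $\Bad_{i+1} = Q \setminus \Scr(\Bad_i) \subseteq \SubOS_\A$. Taking the union over $i \leq |Q|$ gives $\Bad \subseteq \SubOS_\A$. Conversely, Lemma~\ref{lem:pos_optimal_strat} (applied with any fixed $\varepsilon > 0$) directly yields $\Scr(\Bad) \subseteq \OS_\A$, i.e.\ $Q \setminus \Bad \subseteq \OS_\A$.

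Finally, since every state is either maximizable or sub-maximizable, i.e.\ $\OS_\A \cap \SubOS_\A = \emptyset$ and $\OS_\A \cup \SubOS_\A = Q$, the two inclusions $\Bad \subseteq \SubOS_\A$ and $Q \setminus \Bad \subseteq \OS_\A$ force the equalities $\Bad = \SubOS_\A$ and $\Scr(\Bad) = Q \setminus \Bad = \OS_\A$. For the last sentence of the theorem, given $\varepsilon > 0$, I simply invoke Lemma~\ref{lem:pos_optimal_strat} to obtain a positional strategy $\s_\A \in \SetPosStrat{\A}{\Aconc}$ that is optimal on $\Scr(\Bad) = \OS_\A$ and $\varepsilon$-optimal on $\Bad = \SubOS_\A$. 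There is no real obstacle here since all the heavy lifting was done in Section~\ref{sec:crucial_prop} and in the two preceding lemmas; the only non-trivial check is the antitone property of $\Scr$ that makes the fixed-point identity $\Scr(\Bad) = Q \setminus \Bad$ rigorous.
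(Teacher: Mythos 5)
Your proposal is correct and follows essentially the same route as the paper: induction with Lemma~\ref{lem:bad_increase} for $\Bad \subseteq \SubOS_\A$, Lemma~\ref{lem:pos_optimal_strat} for $\Scr(\Bad) \subseteq \OS_\A$, and the partition $Q = \Bad \uplus \Scr(\Bad) = \OS_\A \uplus \SubOS_\A$ to force equalities, with the final claim being a direct invocation of Lemma~\ref{lem:pos_optimal_strat}. Your only addition is the explicit verification (via antitonicity of $\Scr$ and stabilization of $(\Bad_i)_i$) that $\Scr(\Bad) = Q \setminus \Bad$, a point the paper's proof takes for granted; that check is sound and a welcome bit of extra rigour.
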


\begin{figure}
	\centering
	\includegraphics[scale=0.7]{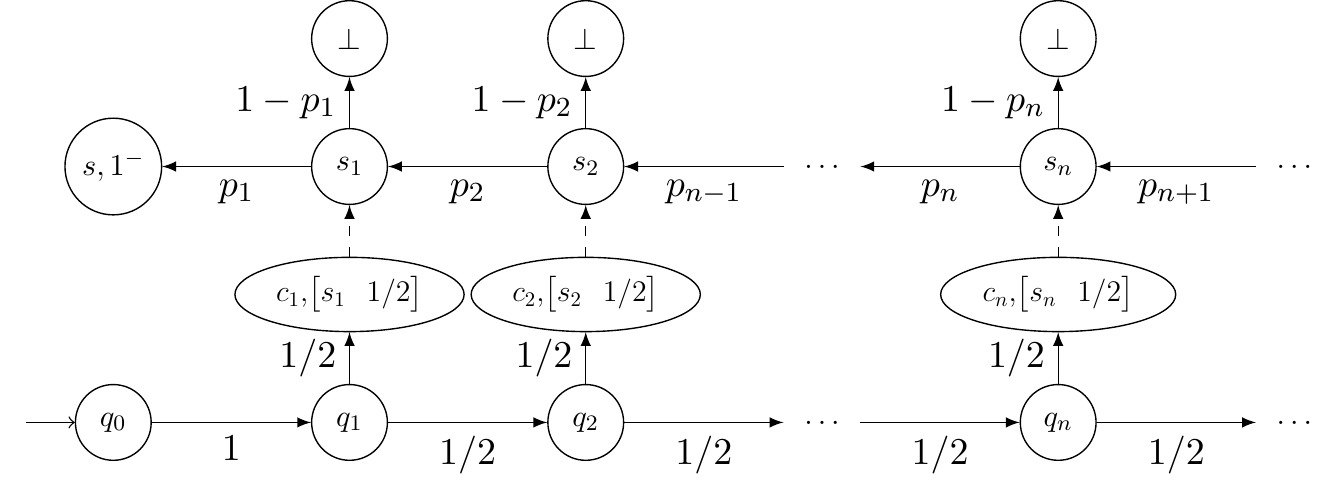}
	\caption{An infinite concurrent reachability game $\Aconc$ (the Nature states are omitted). The probabilities $p_k$ are such that, for all $i \geq 1$, the value of the state $s_i$ is $\val{\Aconc}{}(s_i) = \Pi_{k = 1}^i p_k = (1/2 + 1/2^i)$.
	}
	\label{fig:counter_example} 
\end{figure}

\smallskip\noindent \textbf{Infinite arenas.} In this paper, we only consider finite arenas and the constructions we have exhibited and results we have shown hold in that setting. Note that Theorem~\ref{thm:positional_optimal} does not hold on infinite arenas (i.e. with an infinite number of states): 
Figure~\ref{fig:counter_example} depicts an infinite concurrent reachability game where the state $q_0$ is maximizable but, from $q_0$, Player $\A$ does not have any positional optimal strategy. Indeed, in state $s$ is plugged the game of Figure~\ref{fig:arbitrarilyClose}, whose value is $1$ but Player $\A$ does not have an optimal strategy. Then, for all $i \geq 0$, the probability to reach $s$ from $s_i$ is equal to $v_i = (1/2 + 1/2^i) > 1/2$. Hence, if Player $\A$ plays an $0 < \varepsilon_i$-optimal strategy in $s$ such that $(1 - \varepsilon_i) \cdot q_i > 1/2$, then the value of the state $s_i$ is greater than $1/2$. In that case, in the states $c_i$, Player $\B$ plays the second columns obtaining the value $1/2$. This induces that the value in all states $q_i$ is $1/2$. However, this is only possible if Player $\A$ has (infinite) memory, 
since the greater the index $i$ considered, the smaller the value of $\varepsilon_i$ needs to be to ensure $(1 - \varepsilon_i) \cdot q_i \geq 1/2$ while still ensuring $\varepsilon_i > 0$ (since Player $\A$ does not have an optimal strategy from $s$). In particular, for any Player $\A$ positional strategy $\s_\A$ from $q_0$ that is $0 < \varepsilon$-optimal in $s$, the value -- w.r.t. the strategy $\s_\A$ -- of all states $\s_i$ for indexes $i$ such that $(1 - \varepsilon) \cdot q_i < 1/2$  is smaller than $1/2$. In which case, Player $\B$ plays the first column in $c_i$, thus obtaining a value smaller than $1/2$. It follows that the value of all states $(q_n)_{n \geq 0}$ -- w.r.t. the strategy $\s_\A$ -- is smaller than $1/2$. Hence, any Player $\A$ positional strategy is not optimal from $q_0$. Appendix~\ref{appen:infinite_games} gives additional details. Note that, when considering MDPs instead of two-player games, optimal strategies need not exist but when they do there necessarily are positional ones (see for instance \cite{DBLP:conf/concur/KieferMST20}).

\smallskip\noindent \textbf{Computing the set of maximizable states.}
Finally, consider the problem, given a finite concurrent reachability game, to effectively compute the set of maximizable and sub-maximizable states (assuming the probability distribution of the Nature states are rational). In fact, this can be done by using the theory of the reals. 
\begin{definition}[First-order theory of the reals]
	\label{def:fo_reals}
	The \emph{first-order theory of the reals} (denoted $\mathsf{FO}$-\R) corresponds to the well-formed sentences of first-order logic (i.e. with universal and existential quandtificators)
	, also involving logical combinations of equalities and inequalities of real polynomials, with integer coefficients.
\end{definition}
The first-order theory of the reals is decidable \cite{RE89}, i.e. determining if a given formula belonging to that theory is true is decidable. Now, let us consider a finite concurrent reachability game $\Aconc$ and a state $q \in Q$. It is possible to encode, with an $\mathsf{FO}$-$\R$ formula, that the state $q$ is maximizable, i.e. $q \in \OS_\A$. First, note that, given two positional strategies $\s_\A$ and $\s_\B$ for both players, it is possible to compute the value of the game with the theory of reals: it amounts to finding the least fixed point of the operator $\Delta$ with the strategies of both players fixed. Then, $q$ being maximizable, denoting $u := \val{\Aconc}{}(q) \in [0,1]$ its value, is equivalent to having a Player $\A$ positional strategy ensuring at least $u$ (against all Player $\B$ positional strategies) and no Player $\A$ positional strategy ensures more than $u$ (as $\varepsilon$-optimal positional strategies always exists for Player $\A$ \cite{everett57}). This can be expressed in $\mathsf{FO}$-\R. The theorem below follows.
\begin{theorem}[Complement~\ref{appen:computing_optimal_states}]
	\label{thm:optimal_state_computable}
	In a finite concurrent reachability game with rational distributions, the set of maximizable states is computable.
\end{theorem}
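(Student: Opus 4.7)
The plan is to encode the property ``$q \in \OS_\A$'' as a sentence of the first-order theory of the reals, and then invoke Tarski's decidability result for that theory. A positional Player $\A$ strategy is naturally described by a tuple of real variables $(p_{q,a})_{q \in Q, a \in A}$, each in $[0,1]$, constrained by $\sum_{a \in A} p_{q,a} = 1$ at every state $q$. The analogous description holds for positional Player $\B$ strategies. Since we assume $\distribFunc(d) \in \Dist(Q)$ is rational for every $d \in \distribSet$, the transition probabilities $\probTrans{q,q'}(\s_\A(q), \s_\B(q))$ are polynomials with rational coefficients in the strategy variables. Hence quantification over positional strategies is handled by quantification over such tuples together with their defining linear constraints.

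Next, I would express $\val{\Aconc}{\s_\A}(q)$ for a positional strategy $\s_\A$ as a first-order formula. Once both $\s_\A$ and a positional $\s_\B$ are fixed, the reachability probabilities $\bigl(\prob{q}{\s_\A,\s_\B}(\top)\bigr)_{q \in Q}$ form the least fixed point in $[0,1]^Q$ of a linear Bellman operator whose coefficients are the $\probTrans{q,q'}(\s_\A(q),\s_\B(q))$. This least fixed point is first-order expressible by saying: there exists $(r_q)_{q \in Q} \in [0,1]^Q$ satisfying the fixed-point equations, and for every other $(r'_q) \in [0,1]^Q$ satisfying them, $r_q \leq r'_q$. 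Since in the MDP induced by $\s_\A$ Player $\B$ admits a positional optimal strategy for the reachability objective, we have $\val{\Aconc}{\s_\A}(q) = \min_{\s_\B \in \SetPosStrat{\Aconc}{\B}} \prob{q}{\s_\A,\s_\B}(\top)$, which is therefore expressible in $\mathsf{FO}$-$\R$.

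Combining Theorem~\ref{thm:positional_optimal} (from a maximizable state, a positional optimal strategy exists) with Everett's result that $\varepsilon$-optimal positional strategies exist uniformly for Player~$\A$, we obtain the characterization
\begin{displaymath}
q \in \OS_\A \ \Longleftrightarrow \ \exists \s_\A \in \SetPosStrat{\Aconc}{\A},\ \forall \s_\A' \in \SetPosStrat{\Aconc}{\A},\ \val{\Aconc}{\s_\A}(q) \geq \val{\Aconc}{\s_\A'}(q).
\end{displaymath}
Indeed, the right-hand side asserts that $\s_\A$ attains $\sup_{\s_\A' \text{ positional}} \val{\Aconc}{\s_\A'}(q)$, which by Everett equals $\val{\Aconc}{}(q)$, so $\s_\A$ is optimal from $q$; conversely, Theorem~\ref{thm:positional_optimal} supplies such a positional witness whenever $q$ is maximizable. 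Expanding each $\val{\Aconc}{\cdot}(q)$ via the preceding paragraph yields a sentence of $\mathsf{FO}$-$\R$, and Tarski's theorem then decides it in each state. The main obstacle I anticipate is bookkeeping: the formula nests several least-fixed-point encodings (one per occurrence of $\val{\Aconc}{\cdot}$), and one must check that each such encoding only contributes a bounded number of quantifier alternations so that the overall sentence remains a valid, finite $\mathsf{FO}$-$\R$ formula; but since each Bellman fixed point is a minimum solution of a finite polynomial system, this is routine.
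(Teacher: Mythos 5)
Your proposal follows essentially the same route as the paper: encode positional strategies as real tuples, express the reachability value under fixed positional strategies as the first-order-definable least fixed point of the Bellman system, reduce the quantification over Player $\B$'s responses to positional strategies via optimality in the induced MDP, characterize maximizability by combining Theorem~\ref{thm:positional_optimal} with Everett's positional $\varepsilon$-optimal strategies, and conclude by Tarski's decidability of $\mathsf{FO}$-$\R$. The only cosmetic difference is that the paper introduces an explicit witness value $u$ with separate ``guarantee at least $u$'' and ``no positional strategy exceeds $u$'' predicates, which is equivalent to your direct comparison $\val{\Aconc}{\s_\A}(q) \geq \val{\Aconc}{\s_\A'}(q)$.
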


	\section{Maximizable states and game forms}
	\label{sec:optimal_game_forms}

In the previous section, we were given a concurrent reachability game and we considered a construction to compute exactly the sets of maximizable and sub-maximizable states. It is rather cumbersome as it requires two nested fixed point procedures. Now, we would like to have a structural condition ensuring that if a game is built correctly (i.e. built from reach-maximizable local interactions), then all states are maximizable. More specifically, 
in this section, we characterize exactly the \emph{reach-maximizable} game forms, that is the game forms such that every reachability game built with these game forms as local interactions have only maximizable states. 

First, let us characterize a necessary condition for game forms to be reach-maximizable. We want for reach-maximizable game forms to behave properly when used individually. That is, from a game form $\formNF$ and a partial valuation $\alpha: \outComeNF \setminus E \rightarrow [0,1]$ of the outcomes, we define a three-state reachability game $\langle \Aconc_{(\formNF,\alpha)},\top \rangle$
. Note that such games were previously studied in \cite{KL74}. We illustrate this construction on an example.

\begin{figure}
	\begin{minipage}[b]{0.6\linewidth}
		\centering
		\includegraphics[scale=0.7]{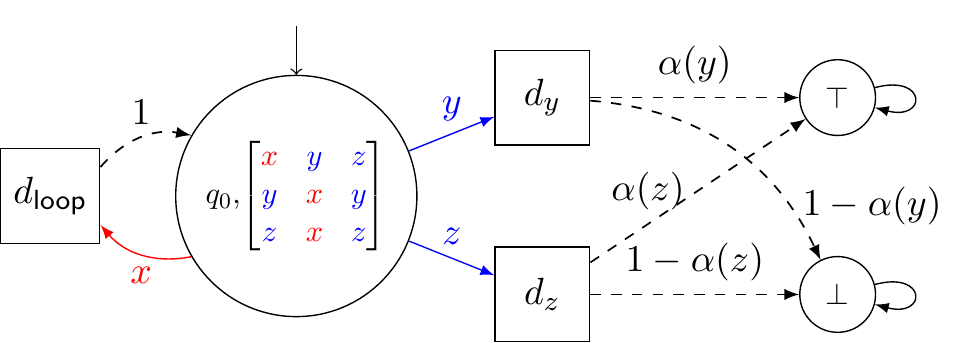}
		\caption{The three-state reachability game $\langle \Aconc_{(\formNF,\alpha)},\top \rangle$ built from the game form $\formNF$ for some partial valuation $\alpha: \outComeNF \setminus E \rightarrow [0,1]$ with $E = \{ x \}$.}
		\label{fig:one_state}           
	\end{minipage}
	\begin{minipage}[b]{0.35\linewidth}
		\centering
		\includegraphics[scale=1.2]{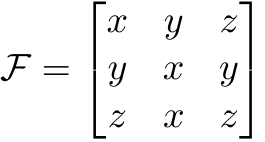}
		\caption{The game form that constitutes the local interaction in the state $q_0$.}
		\label{fig:formNF}         
	\end{minipage}
\end{figure}
\begin{example}
	In Figure~\ref{fig:one_state}, a three-state reachability game $\langle \Aconc_{(\formNF,\alpha)},\top \rangle$ is built from a game form $\formNF = \langle \St_\A,\St_\B,\{ x,y,z \},\outCNF \rangle$ -- with $\outCNF$ depicted in Figure~\ref{fig:formNF} -- and a partial valuation $\alpha: \{ y,z \} \rightarrow [0,1]$. We have a one-to-one correspondence between the outcomes of the game form $\formNF$ and the Nature states of the reachability game $\langle \Aconc_{(\formNF,\alpha)},\top \rangle$ via the bijection $g: \{ x,y,z \} \rightarrow \distribSet$ such that $g(x) = d_{\mathsf{loop}}$ and for $u \in \{ y,z \}$, $g(u) = d_{u}$. Furthermore, in the reachability game $\langle \Aconc_{(\formNF,\alpha)},\top \rangle$, we have $\lfp(\top) = 1$ and $\lfp(\bot) = 0$. Therefore, for $u \in \{ y,z \}$, we have $\mu_{\lfp} \circ g(u) = \alpha(u)$. In fact, this game is built so that $v_\alpha = \lfp(q_0)$ and $\mu_{\lfp} 
	= \limval{\alpha} \circ g^{-1}$ (recall that $\limval{\alpha}$ is the (total) valuation induced by the partial valuation $\alpha$ from Definition~\ref{def:partial_assignment}).
	
	%
	Let us now determine at which condition on the pair $(\formNF,\alpha)$ is the starting state $q_0$ maximizable in $\Aconc_{(\formNF,\alpha)}$. If we have $v_\alpha = \lfp(q_0) = 0$, the state $q_0$ is maximizable in any case. Now, assume that $v_\alpha = \lfp(q_0) > 0$. Recall the construction of the previous section, specifically the set of secure states w.r.t. a set of bad states (Definition~\ref{def:secure_states}). Initially, $\Bad_0 = \emptyset$, so we want for the state $q_0$ to be in $\Scr(\emptyset)$, i.e. 
	we want (and need) 
	an efficient strategy in the state $q_0$ where the set of good states $\Gex$ 
	is the target $\Gex = \{ \top \}$ and the set of bad states is empty. In that case, the set of efficient strategies coincide with the set of progressive strategies. Thus, 
	$q_0$ is maximizable if and only if $\Prg_{q_0}(\{ \top \}) \neq \emptyset$. We assume for simplicity that $\alpha(y),\alpha(z) > 0$, 
	hence the set Nature states $\Gex_\distribSet$ with a non-zero probability to reach $\top$ is $\{ g(y),g(z) \} \subseteq \distribSet$. 
	By definition of $\Prg$ (Definition~\ref{def:progressive_strat}), $\Prg_{q_0}(\{ \top \}) \neq \emptyset$ amounts to have an optimal strategy $\sigma_\A$ 
	in $\gameNF{\formNF_{q_0}}{\mu_{\lfp}}$ such that, for all $b \in B_{\sigma_\A}: \delta(q_0,\Supp(\sigma_\A),b) \cap \{ g(y),g(z) \} \neq \emptyset$ or, equivalently, $\delta(q_0,\Supp(\sigma_\A),b) \not \subseteq \{ g(x) \}$. 
	In terms of $\formNF$ and $\alpha$, the state $q_0$ is maximizable if and only if there is an optimal strategy $\sigma_\A$ in $\gameNF{\formNF}{\limval{\alpha}}$ such that, for all $b \in B_{\sigma_\A}: \outCNF(\Supp(\sigma_\A),b) \not \subseteq \{ x \} = E$ if the partial valuation $\alpha$ is defined as $\alpha: \outComeNF \setminus E \rightarrow [0,1]$ for $\outComeNF = \{ x,y,z \}$ and $E = \{ x \}$. 
\end{example}

This suggests the definition below of \emph{reach-maximizable game form} w.r.t. a partial valuation.
\begin{definition}[Reach-maximizable game forms w.r.t. a partial valuation]
	\label{def:correct_gf_partial_valuation}
	Consider a game form $\formNF
	$ and a partial valuation of the outcomes $\alpha: \outComeNF \setminus E \rightarrow [0,1]$
	. The game form $\formNF$ is \emph{reach-maximizable} w.r.t. the partial valuation $\alpha$ if $v_\alpha = 0$ or there exists an optimal strategy $\sigma_\A \in \Opt_\A(\gameNF{\formNF}{\limval{\alpha}})$ such that for all $b \in 
	B_{\sigma_\A}$, we have $\outCNF(\Supp(\sigma_\A),b) \not \subseteq E$. 
	Such strategies are said to be \emph{reach-maximizing} w.r.t. $\alpha$.
\end{definition}

This definition was chosen to ensure the lemma below.
\begin{lemma}[Proof~\ref{proof:one_state_game_opt_equiv_correct}]
	\label{prop:one_state_game_opt_equiv_correct}
	Consider a game form $\formNF
	$ and a partial valuation of the outcomes $\alpha: \outComeNF \setminus E \rightarrow [0,1]$
	. The initial state (and thus all states) in the three-state reachability game $\Aconc_{(\formNF,\alpha)}$ is maximizable if and only if the game form $\formNF$ is reach-maximizable w.r.t. the partial valuation $\alpha$.
\end{lemma}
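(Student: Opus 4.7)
The plan is to split on whether $v_\alpha = 0$. If $v_\alpha = 0$, then by construction $\lfp(q_0) = v_\alpha = 0$, so the value at $q_0$ in $\Aconc_{(\formNF,\alpha)}$ is $0$, every Player $\A$ strategy from $q_0$ is trivially optimal, and $q_0$ is maximizable; on the other side, reach-maximizability of $\formNF$ w.r.t.~$\alpha$ holds by the first clause of Definition~\ref{def:correct_gf_partial_valuation}. So the equivalence is immediate in this case, and from now on I assume $v_\alpha > 0$.

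By Theorem~\ref{thm:positional_optimal}, $q_0$ is maximizable iff $q_0 \in \Scr(\Bad)$, so I would just instantiate the double-fixed-point construction of Section~\ref{sec:optimal_strat} on the three-state game. Starting from $\Bad_0 = \emptyset$ one has $\Rsk_q(\emptyset) = \emptyset$ at every state, hence $\Eff_q(\{\top\},\emptyset) = \Prg_q(\{\top\})$; moreover $\top \in \Scr_0(\emptyset)$ and $\bot \in \lfp^{-1}[0] \subseteq \Scr(\emptyset)$, so the only question is whether $q_0 \in \Scr_1(\emptyset)$, i.e.~whether $\Prg_{q_0}(\{\top\}) \neq \emptyset$. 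If so, $\Bad = \emptyset$ and $q_0 \in \OS_\A$. Otherwise $\Bad_1 = \{q_0\}$ is already a fixed point, since the iterates of $\Scr$ exclude $q_0$ by construction and $v_\alpha > 0$ keeps $q_0$ outside $\lfp^{-1}[0]$, so $q_0 \in \SubOS_\A$. Hence $q_0$ is maximizable iff $\Prg_{q_0}(\{\top\}) \neq \emptyset$.

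It remains to reformulate this progressive-strategy condition into reach-maximizability of $\formNF$. Using the natural bijection $g : \outComeNF \to \distribSet$ of the preceding example, the identity $\mu_\lfp \circ g = \limval{\alpha}$ makes the normal-form games $\gameNF{\formNF_{q_0}}{\mu_\lfp}$ and $\gameNF{\formNF}{\limval{\alpha}}$ agree, so they share their optimal strategies and their sets $B_{\sigma_\A}$. Furthermore $\Gex_\distribSet = g(\{u \in \outComeNF \setminus E : \alpha(u) > 0\})$, so $\sigma_\A \in \Prg_{q_0}(\{\top\})$ exactly says that for every $b \in B_{\sigma_\A}$ the set $\outCNF(\Supp(\sigma_\A),b)$ meets $\{u \notin E : \alpha(u) > 0\}$. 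This is strictly stronger than $\outCNF(\Supp(\sigma_\A),b) \not\subseteq E$, so the direction \emph{progressive implies reach-maximizing} is free.

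The main obstacle is the reverse direction: assuming $\sigma_\A$ optimal with $\outCNF(\Supp(\sigma_\A),b) \not\subseteq E$ for every $b \in B_{\sigma_\A}$, I must upgrade this to progressiveness. Fixing such a $b$ and splitting $\outM(\sigma_\A,b) = \va(\sigma_\A) = v_\alpha$ according to whether $\outCNF(a,b) \in E$ yields
\[
v_\alpha = v_\alpha \cdot p + \sum_{a :\, \outCNF(a,b) \notin E} \sigma_\A(a)\, \alpha(\outCNF(a,b)), \qquad p := \!\!\sum_{a :\, \outCNF(a,b) \in E}\!\! \sigma_\A(a).
\]
The hypothesis $\outCNF(\Supp(\sigma_\A),b) \not\subseteq E$ gives $p < 1$, so the right-hand sum equals $v_\alpha(1-p)$, which is strictly positive since $v_\alpha > 0$. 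This forces some $a \in \Supp(\sigma_\A)$ with $\outCNF(a,b) \notin E$ and $\alpha(\outCNF(a,b)) > 0$, which is exactly the missing progressive condition and closes the equivalence.
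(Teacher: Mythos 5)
Your proof is correct and follows essentially the same route as the paper's: reduce, via the double-fixed-point characterization, to the non-emptiness of $\Prg_{q_0}(\{\top\})$ when $v_\alpha>0$, then show that progressive and reach-maximizing strategies coincide using the fact that $\limval{\alpha}$ is constant equal to $v_\alpha$ on $E$; your averaging computation for the converse direction is a slightly cleaner packaging of the paper's two-case analysis (the paper instead splits on whether some support outcome of the row has value $0$). The only step you take on faith is the identity $v_\alpha=\lfp(q_0)$ together with $\mu_{\lfp}\circ g=\limval{\alpha}$, which the paper's appendix proof establishes explicitly by checking $f^\formNF_\alpha(u)=\Delta(\lfp[u])(q_0)$ for all $u\in[0,1]$ before proceeding as you do.
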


The definition of reach-maximizable game form is then obtained via a universal quantification over the partial valuations considered.
\begin{definition}[Reach-maximizable game form]
	\label{def:correct_gf}
	Consider a game form $\formNF = \langle \St_\A,\St_\B,\outComeNF,\outCNF \rangle$. It is a \emph{reach-maximizable} (RM for short) game form if it is reach-maximizable w.r.t. all partial valuations $\alpha: \outComeNF \setminus E \rightarrow [0,1]$.
\end{definition}

Lemma~\ref{prop:one_state_game_opt_equiv_correct} gives that RM game forms behave properly when 
used individually, such as in three-state reachability games. Let us now look at how such game forms behave collectively, that is we consider concurrent reachability games where all local interactions are RM. 
In fact, in such a setting,
all states are maximizable. This is 
stated in the lemma below. 
\begin{lemma}[Proof~\ref{proof:lem_correct_opt_everywhere}]
	\label{lem:correct_opt_everywhere}
	Consider a concurrent reachability game $\langle \Aconc,\top \rangle$ and assume that all local interactions are RM game forms. Then, all states are maximizable: $Q = \OS_\A$.
\end{lemma}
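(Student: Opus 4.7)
The plan is to combine Theorem~\ref{thm:positional_optimal} with the RM hypothesis. By that theorem, the claim $Q = \OS_\A$ is equivalent to $\Bad = \emptyset$. Since $\Bad_0 = \emptyset$, induction on the $\Bad_i$ reduces this to proving $\Scr(\emptyset) = Q$, and because $\lfp^{-1}[0] \subseteq \Scr(\emptyset)$, it remains to show that every state $q$ with $\lfp(q)>0$ eventually lies in some $\Scr_n(\emptyset)$. I would argue by contradiction, letting $R := Q \setminus \Scr(\emptyset)$ and assuming $R \neq \emptyset$. Writing $\Scr_{\star} := \bigcup_n \Scr_n(\emptyset)$, finiteness of $Q$ implies that $\Prg_q(\Scr_{\star}) = \emptyset$ for every $q \in R$.

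The key step is to apply the RM hypothesis at each $q \in R$ to a tailored partial valuation $\alpha_q : \distribSet \setminus E_q \to [0,1]$, where $E_q := \{d \in \distribSet \mid \Supp(\distribFunc(d)) \cap \Scr_{\star} = \emptyset\}$ collects the Nature states that cannot reach $\Scr_{\star}$ in one step, and $\alpha_q(d) := \mu_\lfp(d)$ for $d \notin E_q$. Since $\formNF_q$ is RM, applying this to $\alpha_q$ yields either $v_{\alpha_q} = 0$ or an optimal strategy $\sigma_\A^q \in \Opt_\A(\gameNF{\formNF_q}{\limval{\alpha_q}})$ whose reach-maximizing property reads $\delta(q, \Supp(\sigma_\A^q), b) \cap (\Scr_{\star})_\distribSet \neq \emptyset$ for every $b \in B_{\sigma_\A^q}$. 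I would then split into two cases.

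First, if $v_{\alpha_q} = 0$ for \emph{every} $q \in R$, then at each such $q$ Player $\B$ admits a local mixed strategy $\sigma_\B^q$ witnessing $\va_{\gameNF{\formNF_q}{\alpha_q[0]}} = 0$: against every Player $\A$ action, every reached Nature state $d$ satisfies $\alpha_q[0](d) = 0$, so either $d \in E_q$ or $\mu_\lfp(d) = 0$, in both cases with $\Supp(\distribFunc(d)) \subseteq R \cup \lfp^{-1}[0]$. Gluing the $\sigma_\B^q$ for $q \in R$ with any Player $\B$ strategy on $Q \setminus R$ would yield a positional Player $\B$ strategy $\s_\B$ under which, starting from any $q \in R$, the play stays in $R \cup \lfp^{-1}[0]$ forever and never reaches $\top$. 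Hence $\val{\Aconc}{}(q) = 0$, contradicting $\lfp(q) > 0$.

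Second, if some $q \in R$ has $v_{\alpha_q} > 0$, I would select such a $q$ maximizing $\lfp$. The reach-maximizing condition already supplies the reachability half of Definition~\ref{def:progressive_strat} with respect to $\Scr_{\star}$, so it only remains to transfer the optimality of $\sigma_\A^q$ from $\gameNF{\formNF_q}{\limval{\alpha_q}}$ to $\gameNF{\formNF_q}{\mu_\lfp}$ and to check that both local games share the same set of optimal Player $\B$ responses to $\sigma_\A^q$. This is the hard part: the maximality of $\lfp(q)$ over $R$ should force $v_{\alpha_q} = \lfp(q)$ and consequently $\limval{\alpha_q} = \mu_\lfp$, so that the two local games coincide; then $\sigma_\A^q \in \Prg_q(\Scr_{\star})$, putting $q$ in $\Scr_{n+1}(\emptyset)$ for some $n$ and contradicting $q \in R$. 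Making this bridge clean may require refining the choice of $E_q$ beyond the naive one above.
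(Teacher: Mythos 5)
Your setup matches the paper's: reduce to $\Bad_1=\emptyset$, argue by contradiction on $R=Q\setminus\Scr(\emptyset)$, and invoke the RM hypothesis through a partial valuation of the Nature states. But the step you yourself flag as ``the hard part'' is exactly where the proof lives, and as written it does not go through. For the RM hypothesis to produce a \emph{progressive} strategy at $q$, you need $\limval{\alpha_q}=\mu_\lfp$, which forces every unvalued Nature state $d\in E_q$ to satisfy $\mu_\lfp(d)=v_{\alpha_q}$, i.e.\ all of $E_q$ must share a single $\mu_\lfp$-value. Your $E_q$ (all Nature states whose support avoids $\Scr_\star$) contains states of differing values --- e.g.\ some whose support meets $\lfp^{-1}[0]$ --- so $\limval{\alpha_q}=\mu_\lfp$ is simply false for it in general. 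The paper instead sets $x:=\max_{q\in Q\setminus\Scr_n(\emptyset)}\lfp(q)$, takes the unvalued set to be $\distribSet_x:=\mu_\lfp^{-1}[x]\setminus\Scr_n(\emptyset)_\distribSet$, and first proves $\Supp(d)\subseteq Q_x:=\lfp^{-1}[x]\setminus\Scr_n(\emptyset)$ for every $d\in\distribSet_x$, which is what makes $\alpha[x]=\mu_\lfp$ even possible. Moreover, ``maximality of $\lfp(q)$ should force $v_{\alpha_q}=\lfp(q)$'' is not a one-line consequence: the paper's Proposition~\ref{prop:lim_partial_val} shows only that \emph{some} $q\in Q_x$ satisfies $v_\alpha^q=x$ (not all of them), and the proof is a genuine argument comparing the local fixed points with the Kleene iteration $v_{n+1}=\Delta(v_n)\rightarrow\lfp$: if every $q\in Q_x$ had $v_\alpha^q<x$, the iteration would stay capped below $x$ on $Q_x$, contradicting convergence to $\lfp$. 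None of this is supplied by your sketch.

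Two further remarks. First, once you adopt the correct (narrower) unvalued set $\distribSet_x$, reach-maximizing no longer ``already supplies'' progressiveness: it only yields $\delta(q,\Supp(\sigma_\A),b)\not\subseteq\distribSet_x$, and a separate case analysis using the maximality of $x$ is needed to conclude that the support actually meets $\Scr_n(\emptyset)_\distribSet$ (a Nature state outside $\distribSet_x$ with value $\geq x$ must be in $\Scr_n(\emptyset)_\distribSet$ by maximality; one with value $<x$ forces another one of value $>x$ in the support since the outcome against $b$ equals $x$). Second, your Case~1 ($v_{\alpha_q}=0$ for all $q\in R$) is sound but becomes unnecessary once the fixed-point identification is established: the paper picks $q$ with $\lfp(q)=x>0$ and $\limval{\alpha}=\mu_\lfp$, so $v_\alpha=x>0$ and the $v_\alpha=0$ branch of the RM definition never fires. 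In short, the skeleton is right, but the missing bridge is the actual content of the paper's proof (its Proposition~\ref{prop:lim_partial_val} plus the careful choice of $\distribSet_x$), and your proposed $E_q$ would have to be replaced, not merely refined.
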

\begin{proofs}
	We show that $Q = \OS_\A$ by showing that $\Bad = \emptyset$, which is equivalent since, by Theorem~\ref{thm:positional_optimal}, we have $\Bad = \SubOS_\A = Q \setminus \OS_\A$. That is, we consider the iterative construction of the set of sub-maximizable states of the previous section and we show that $\Bad_1 = Q \setminus (\Scr(\Bad_0)) = \emptyset = \Bad_0$ (see Definition~\ref{def:bad_states}), which induces that $\Bad = \emptyset$. Let us assume towards a contradiction that $Q \setminus (\Scr_n(\emptyset) \cup \lfp^{-1}[0]) \neq \emptyset$ for $n = |Q|$. Since $\Rsk_q(\emptyset) = \emptyset$ for all $q \in q$, any efficient strategy in a state $q$ w.r.t. to the sets $\Scr_n(\emptyset)$ and $\emptyset$ is in fact a progressive strategy w.r.t. the set $\Scr_n(\emptyset)$. Hence, the goal is to exhibit such a progressive strategy in a state $q \in Q \setminus \Scr(\emptyset)$, thus showing a contradiction with the fact that $q \notin \Scr(\emptyset)$. We consider the states with the greatest value -- w.r.t. $\lfp$ -- as we can hope that they are the more likely to have progressive strategies. That is, for $x := \max_{q \in Q \setminus \Scr_n(\emptyset)} \lfp(q) > 0$ 
	the maximum of $\lfp$, we set $Q_x := \lfp^{-1}[x] \setminus \Scr_n(\emptyset) \neq \emptyset$ the set of states realizing that maximum. We want to use the assumption that all local interactions are RM. That is, we need to define a partial valuation on the outcomes of the local interactions, i.e. on Nature states. First, let us define its domain. We can find intuition in the example of the three-state reachability game in Figure~\ref{fig:one_state}: the outcome that is not valued by the partial valuation considered is the Nature state looping on the state $q_0$. Note that its value w.r.t. $\mu_\lfp$ is the same as the value of the state $q_0$ w.r.t. $\lfp$. In our case, we consider the set of Nature states $\distribSet_x$ realizing this value $x$ that cannot reach the set $\Scr_n(\emptyset)$, that is $\distribSet_x := \mu_{\lfp}^{-1}[x] \setminus \Scr_n(\emptyset)_\distribSet$. Then, we define the partial valuation of the Nature states $\alpha: \distribSet \setminus \distribSet_x \rightarrow [0,1]$ by $\alpha := \restriction{\mu_\lfp}{\distribSet \setminus \distribSet_x}$. 
	Now, we can show that there exists a state $q \in Q_x$ such that $\limval{\alpha} = \mu_\lfp$ in the game form $\formNF_q$. By maximality of $x$, we can prove that any local strategy $\sigma_A$ in $\formNF_q$ that is reach-maximizing w.r.t. the partial valuation $\alpha$ of the outcomes of $\formNF_q$ is a progressive strategy 
	w.r.t. $\Scr_n(\emptyset)$ in $\formNF_q$. Equivalently, $\sigma_\A$ is efficient w.r.t. $\Scr_n(\emptyset)$ and $\emptyset$. Hence the contradiction with the fact that $q \notin \Scr(\emptyset)$.
\end{proofs}

Overall, we obtain the theorem below.
\begin{theorem}[Proof~\ref{proof:main}]
	\label{thm:main}
	For a set of game forms $\mathcal{G}$, all states in all concurrent reachability games with local interactions in $\mathcal{G}$ are maximizable if and only if all game forms in $\mathcal{G}$ are RM.
\end{theorem}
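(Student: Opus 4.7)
The plan is to establish the biconditional by handling the two directions separately, with each direction reducing essentially to one of the two lemmas just proved in this section.

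For the ``if'' direction---assuming every game form in $\mathcal{G}$ is RM and concluding that every state in every concurrent reachability game built from $\mathcal{G}$ is maximizable---I would simply appeal to Lemma~\ref{lem:correct_opt_everywhere}. Fixing an arbitrary concurrent reachability game $\langle \Aconc,\top \rangle$ whose local interactions all lie in $\mathcal{G}$, every local interaction is RM by hypothesis, so the lemma applies directly and yields $Q = \OS_\A$.

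For the converse direction I would proceed by contraposition. Suppose some $\formNF \in \mathcal{G}$ is not RM. By Definition~\ref{def:correct_gf} there exists a set $E \subseteq \outComeNF$ and a partial valuation $\alpha: \outComeNF \setminus E \rightarrow [0,1]$ witnessing this. I would then consider the three-state reachability game $\Aconc_{(\formNF,\alpha)}$ depicted in Figure~\ref{fig:one_state}, whose local interaction at the initial state $q_0$ is precisely $\formNF$, and apply Lemma~\ref{prop:one_state_game_opt_equiv_correct} to conclude that $q_0$ is sub-maximizable. This exhibits a concurrent reachability game with local interactions in $\mathcal{G}$ that fails the universal maximizability statement, completing the contrapositive.

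The one delicate point, and the main obstacle to rigour, is ensuring that the local interactions at the sink states $\top$ and $\bot$ of $\Aconc_{(\formNF,\alpha)}$ themselves belong to $\mathcal{G}$. I would handle this uniformly by reusing some game form of $\mathcal{G}$ (for instance $\formNF$ itself) at each sink, mapping \emph{all} of its outcomes to the single Nature state that loops back to that sink. This respects the sink semantics required by the construction, it leaves the value computation and the analysis at $q_0$ unchanged, and it preserves the applicability of Lemma~\ref{prop:one_state_game_opt_equiv_correct}. Beyond this bookkeeping, the two prior lemmas carry the entire content of the theorem.
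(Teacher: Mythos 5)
Your proof is correct and follows exactly the paper's argument: the ``if'' direction is Lemma~\ref{lem:correct_opt_everywhere}, and the converse is the contrapositive via the three-state game $\Aconc_{(\formNF,\alpha)}$ and Lemma~\ref{prop:one_state_game_opt_equiv_correct}. Your extra remark about realizing the sink interactions from a game form of $\mathcal{G}$ by collapsing all outcomes onto the looping Nature state is a sensible piece of bookkeeping that the paper leaves implicit.
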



\smallskip \noindent \textbf{Deciding if game forms are RM.} Consider the following decision problem $\mathsf{RMGF}$: given a game form, decide if it is a RM game form. 
We proved Theorem~\ref{thm:optimal_state_computable} by showing that the fact that a state is maximizable in a concurrent reachability game can be encoded in the theory of the reals ($\mathsf{FO}$-\R). Since Lemma~\ref{prop:one_state_game_opt_equiv_correct} ensures that a game form $\formNF$ is RM w.r.t. a partial valuation $\alpha$ if and only if the initial state in the three-state reachability game $\Aconc_{(\formNF,\alpha)}$ is maximizable, it follows that, via a universal quantification over partial valuations, the fact that a game form is RM can be encoded in the theory of the reals. Note that it can also be encoded directly from the definition of RM game form.
We obtain the theorem below.
\begin{proposition}[Complement~\ref{proof:decidability}]
	\label{prop:decidability}
	The problem $\mathsf{RMGF}$ is decidable.
\end{proposition}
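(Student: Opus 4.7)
The plan is to express the reach-maximizability of a finite game form as a sentence in the first-order theory of the reals ($\mathsf{FO}$-$\R$), and then invoke Tarski's decidability result~\cite{RE89}. Given an input $\formNF = \langle \St_\A, \St_\B, \outComeNF, \outCNF \rangle$, since $\outComeNF$ is finite I would enumerate in the meta-logic the $2^{|\outComeNF|}$ subsets $E \subseteq \outComeNF$. For each such $E$, I build an $\mathsf{FO}$-$\R$ formula $\varphi_E(\alpha)$ whose free real variables are the components $(\alpha(r))_{r \in \outComeNF \setminus E}$ of the partial valuation, stating that $\formNF$ is reach-maximizable with respect to $\alpha$. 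The decision problem $\mathsf{RMGF}$ then reduces to checking the single closed sentence $\bigwedge_{E \subseteq \outComeNF} \forall \alpha \in [0,1]^{\outComeNF \setminus E} \; \varphi_E(\alpha)$.

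The building blocks of $\varphi_E$ are all polynomial-based. A mixed strategy $\sigma_\A \in \Dist(\St_\A)$ (resp. $\sigma_\B \in \Dist(\St_\B)$) is a tuple of nonnegative reals summing to $1$ over the finite set $\St_\A$ (resp. $\St_\B$), and can be directly first-order quantified. For any total valuation $v \in [0,1]^{\outComeNF}$, the outcome $\outM_{\gameNF{\formNF}{v}}(\sigma_\A, \sigma_\B) = \sum_{a,b} \sigma_\A(a)\,\sigma_\B(b)\cdot v(\outCNF(a,b))$ is polynomial in the components of $\sigma_\A$, $\sigma_\B$, and $v$ (each $\outCNF(a,b)$ being a fixed element of the finite set $\outComeNF$), and hence $\va_{\gameNF{\formNF}{v}}$ is $\mathsf{FO}$-$\R$-definable as the common $\sup$-$\inf$ and $\inf$-$\sup$ over mixed strategies, the equality being von Neumann's minimax theorem. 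The least fixed point $v_\alpha$ of the monotone map $f^\formNF_\alpha$ on $[0,1]$ is captured by the first-order condition ``$y \in [0,1]$ is a fixed point of $f^\formNF_\alpha$ and no $y' \in [0,y)$ is a fixed point of $f^\formNF_\alpha$,'' and $\limval{\alpha} = \alpha[v_\alpha]$ follows by substitution.

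The formula $\varphi_E(\alpha)$ is then the disjunction $v_\alpha = 0$ or ``there exists $\sigma_\A \in \Dist(\St_\A)$ optimal in $\gameNF{\formNF}{\limval{\alpha}}$ such that for every pure action $b \in \St_\B$ with $\outM_{\gameNF{\formNF}{\limval{\alpha}}}(\sigma_\A, b) = v_\alpha$, some $a \in \St_\A$ satisfies $\sigma_\A(a) > 0$ and $\outCNF(a, b) \notin E$'' (equivalent to the condition of Definition~\ref{def:correct_gf_partial_valuation}, using that optimality against mixed $\sigma_\B$ is equivalent to optimality against pure $\sigma_\B$ in finite games). Since $\St_\A$, $\St_\B$, and $\outComeNF$ are finite, every quantifier over actions and every membership test in the fixed set $E$ unfolds to a finite Boolean combination. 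The resulting sentence is in $\mathsf{FO}$-$\R$, and Tarski~\cite{RE89} concludes. The only non-routine step is expressing the least fixed point $v_\alpha$ in first-order terms; monotonicity of $f^\formNF_\alpha$ and the total order on $[0,1]$ make the ``smallest fixed point'' phrasing above correctly define $v_\alpha$, and this is the single place where the specific structure of $[0,1]$ is essential.
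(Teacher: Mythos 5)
Your proposal is correct and follows essentially the same route as the paper's proof: encode the value of a game in normal form, the least fixed point $v_\alpha$ (as ``a fixed point below which there is no fixed point''), optimality, and the reach-maximizing condition as polynomial (in)equalities in $\mathsf{FO}$-$\R$, then invoke Tarski. The only difference is cosmetic: you enumerate the subsets $E$ in the meta-logic as a finite conjunction, whereas the paper encodes $E$ inside the formula via a universally quantified tuple of $0$/$1$ indicator variables; both yield a single decidable sentence.
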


\smallskip \noindent \textbf{Determined game forms and RM game forms} In \cite{FromLocalToGlobal}, the authors have studied a problem similar to the one we considered in this section: determining the game forms ensuring that, when used as local interaction in a concurrent game (with an arbitrary Borel winning condition
), the game is determined (i.e. either of the players has a winning strategy). 
The authors have shown that these game forms exactly correspond to \emph{determined} game forms. These roughly correspond to game forms where, for all subsets of outcomes $E \subseteq \outComeNF$, there is either of line of outcomes in $E$ or a column of outcomes in $\outComeNF \setminus E$, as formally defined below.
\begin{definition}[Determined game forms]
	Consider a game form $\formNF = \langle \St_\A,\St_\B,\outComeNF,\outCNF \rangle$. It is \emph{determined} if, for all subsets of outcomes $E \subseteq \outComeNF$, either there exists some $a \in \St_\A$ such that $\outCNF(a,\St_\B) \subseteq E$ or there exists some $b \in \St_\B$ such that $\outCNF(\St_\A,b) \subseteq \outComeNF \setminus E$.
\end{definition}
In fact, they proved an equivalence between turn-based games and concurrent games using determined game forms as local interactions, which holds also when the game is stochastic. In fact, positional optimal strategies exists for both players in turn-based reachability games \cite{CJH04}, it is also the case in concurrent reachability games with determined local interactions. This result, combined with Theorem~\ref{thm:main} gives immediately that determined game forms are RM. Interestingly, determined game forms can also be characterized with the least fixed point operator as in the proposition below.
\begin{proposition}[Proof~\ref{appen:determined_game_form}]
	\label{prop:determined_gf}
	A game form $\formNF
	$ is \emph{determined} if and only if, for all partial valuations $\alpha: \outComeNF \setminus E \rightarrow [0,1]$ of the outcomes, we have $v_\alpha = f^\formNF_\alpha(0)$. 
In particular, this implies that all determined game forms are RM.
\end{proposition}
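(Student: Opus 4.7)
The plan is to handle both directions by leveraging the fact that a determined game form admits \emph{pure} optimal strategies for both players in every induced finite game in normal form (a direct consequence of the definition: for the threshold $t^* = \max_a \min_b v(\outCNF(a,b))$ and $E = \{o : v(o) \le t^*\}$, the ``row'' disjunct would contradict the definition of $t^*$, leaving the ``column'' disjunct as a pure optimal strategy for Player $\B$). Combining this pure-optimality fact with the special shape of the valuations $\alpha[0]$ and $\alpha[v_\alpha]$ will yield the equivalence.

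For direction $(\Rightarrow)$, fix a partial valuation $\alpha: \outComeNF \setminus E \to [0,1]$ and set $v^* := f^\formNF_\alpha(0)$. By monotonicity of $f^\formNF_\alpha$ we already have $v^* \le v_\alpha$, so it suffices to show that $v^*$ is itself a fixed point of $f^\formNF_\alpha$. Determinedness provides a pure optimal row $a^*$ and a pure optimal column $b^*$ in $\gameNF{\formNF}{\alpha[0]}$. When $v^* > 0$, row $a^*$ cannot touch any outcome in $E$ (which is valued $0$ there), so $\outCNF(a^*, \St_\B) \subseteq \outComeNF \setminus E$ and the guaranteed value of $a^*$ is independent of $y$, giving $f^\formNF_\alpha(v^*) \ge v^*$. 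Dually, raising the valuation from $\alpha[0]$ to $\alpha[v^*]$ only pushes the values of outcomes in $E$ from $0$ up to $v^*$, so column $b^*$---which already capped every outcome in its column at $v^*$---still caps the value by $v^*$, yielding $f^\formNF_\alpha(v^*) \le v^*$. The case $v^* = 0$ is immediate since $0$ is then automatically a fixed point of $f^\formNF_\alpha$.

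For direction $(\Leftarrow)$, take an arbitrary $E \subseteq \outComeNF$ and instantiate the hypothesis on the partial valuation $\alpha \equiv 1$ defined on $\outComeNF \setminus E$, so that $\gameNF{\formNF}{\alpha[y]}$ becomes a $\{y,1\}$-valued game. If $v_\alpha = 1$, then by hypothesis $f^\formNF_\alpha(0) = 1$ and any action $a^*$ in the support of an optimal Player $\A$ mixed strategy in $\gameNF{\formNF}{\alpha[0]}$ must produce the outcome value $1$ against every column, hence $\outCNF(a^*, \St_\B) \subseteq \outComeNF \setminus E$. If instead $v_\alpha < 1$, an optimal Player $\B$ mixed strategy $\sigma_\B^*$ in $\gameNF{\formNF}{\alpha[v_\alpha]}$ attains the value $v_\alpha$ while every pure outcome lies in $\{v_\alpha, 1\}$; a convex-combination argument forces $\alpha[v_\alpha](\outCNF(a,b)) = v_\alpha$ for every row $a$ and every $b \in \Supp(\sigma_\B^*)$, and since $v_\alpha \ne 1$ this forces $\outCNF(\St_\A, b) \subseteq E$ for every such $b$. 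Either way, the dichotomy defining a determined game form is established for $E$.

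For the closing ``in particular'' assertion, the pure row $a^*$ extracted in direction $(\Rightarrow)$ doubles as a witness of reach-maximizability: when $v_\alpha > 0$, the Dirac strategy on $a^*$ is optimal in $\gameNF{\formNF}{\limval{\alpha}}$ since $a^*$'s outcomes lie outside $E$ and therefore achieve value $v^* = v_\alpha$ under $\limval{\alpha}$, and furthermore $\outCNF(\{a^*\}, b) \subseteq \outComeNF \setminus E$ for every column $b$, which is stronger than the ``$\not\subseteq E$'' condition of Definition~\ref{def:correct_gf_partial_valuation}; the case $v_\alpha = 0$ is covered by the first disjunct of that definition. The main obstacle I anticipate is the fixed-point argument in $(\Rightarrow)$: one has to check that $b^*$, chosen for the $0$-weighted valuation $\alpha[0]$, remains optimal after the outcomes in $E$ are inflated to $v^*$, which works precisely because those outcomes were already capped by $v^*$ in the pure columnwise bound. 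A dual subtlety appears in $(\Leftarrow)$, where one must use that the outcome values lie in the discrete set $\{v_\alpha, 1\}$ with $v_\alpha \ne 1$ to upgrade the convex-combination equality to a pointwise equality, and hence to a pure column witness for Player $\B$.
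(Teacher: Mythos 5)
Your proof is correct in substance but takes a genuinely different route from the paper's. For the forward direction, the paper works directly with determinacy applied to the level sets $E_v=\{o \mid \alpha[v](o)\le v\}$ and $E_0=\{o\mid \alpha[0](o)<v\}$, deriving a contradiction from $f^\formNF_\alpha(0)<f^\formNF_\alpha(f^\formNF_\alpha(0))$ and extracting the reach-maximizing pure row from $E_0$; you instead first isolate the reusable fact that determined game forms admit \emph{pure} optimal strategies in every induced finite game in normal form, and then sandwich $f^\formNF_\alpha(v^*)$ between the guarantees of the pure row and pure column at $\alpha[0]$. Your version is more modular; the paper's avoids stating the pure-value lemma explicitly. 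For the converse, the paper argues by contraposition (a non-determined witness set yields, via uniform mixed strategies, a map $f^\formNF_\alpha$ with $v<f^\formNF_\alpha(v)<1$ for all $v<1$, so $v_\alpha=1>f^\formNF_\alpha(0)$), whereas you argue directly by reading the dichotomy off the supports of optimal mixed strategies in the $\{0,1\}$- and $\{v_\alpha,1\}$-valued games; both are sound, and your support argument is a nice alternative. Two small slips to fix. First, in your justification of pure optimality the set should be $E=\{o \mid v(o)>t^*\}$, not $\{o \mid v(o)\le t^*\}$: with your choice the row disjunct (a row entirely of value $\le t^*$) does not contradict $t^*=\max_a\min_b$, and the column disjunct gives a column of values $>t^*$, which is useless for Player $\B$; with the complementary set the row disjunct is the one that contradicts $t^*$ and the column disjunct caps the value at $t^*$ as you intend. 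Second, in the converse your two cases yield ``a row inside $\outComeNF\setminus E$ or a column inside $E$'', which is the determinacy disjunction for the set $\outComeNF\setminus E$ rather than for $E$; since $E$ ranges over all subsets this is harmless, but it should be said. The ``in particular'' clause is handled essentially as in the paper (a pure row avoiding $E$ is both optimal under $\limval{\alpha}$ and reach-maximizing), so no issue there.
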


	\section{Future Work}
	\label{sec:conclusion}
	In this paper we give a double-fixed-point procedure to compute
maximizable and sub-maximizable states in a stochastic concurrent
reachability (finite) game. Our procedure yields \emph{de facto}
positional witnesses for the strategies.
As further natural work, we seek studying more general objectives. It
is however interesting to notice that, as mentioned in the
introduction, it will not be so easy since even B\"uchi games do not
enjoy positional almost optimal strategies~\cite[Theorem 2]{AM04b}.

We also plan to better grasp RM game forms, and understand
what are RM game forms for the two players, or analyze the
complexity of the $\mathsf{RMGF}$ problem.


        \bibliography{biblio}

\begin{thebibliography}{10}

\bibitem{BK08}
Christel Baier and Joost-Pieter Katoen.
\newblock {\em Principles of Model Checking}.
\newblock MIT Press, 2008.
\newblock
  \url{http://mitpress.mit.edu/catalog/item/default.asp?ttype=2\&tid=11481}.

\bibitem{BCJ18}
Roderick Bloem, Krishnendu Chatterjee, and Barbara Jobstmann.
\newblock {\em Handbook of Model Checking}, chapter Graph games and reactive
  synthesis, pages 921--962.
\newblock Springer, 2018.

\bibitem{FromLocalToGlobal}
Benjamin Bordais, Patricia Bouyer, and St\'{é}phane~Le Roux.
\newblock From local to global determinacy in concurrent graph games.
\newblock Technical Report abs/2107.04081, CoRR, 2021.
\newblock URL: \url{http://arxiv.org/abs/2107.04081}.

\bibitem{CJH04}
Krishnendu Chatterjee, Marcin Jurdzi{\'n}ski, and {Th}omas~A. Henzinger.
\newblock Quantitative stochastic parity games.
\newblock In {\em Proc. of 15th Annual ACM-SIAM Symposium on Discrete
  Algorithms (SODA'04)}, pages 121--130. SIAM, 2004.

\bibitem{dealfaro97}
Luca {de Alfaro}.
\newblock {\em Formal Verification of Probabilistic Systems}.
\newblock PhD thesis, Stanford University, 1997.

\bibitem{AHK07}
Luca de~Alfaro, Thomas Henzinger, and Orna Kupferman.
\newblock Concurrent reachability games.
\newblock {\em Theoretical Computer Science}, 386(3):188--217, 2007.

\bibitem{AM04b}
Luca de~Alfaro and Rupak Majumdar.
\newblock Quantitative solution of omega-regular games.
\newblock {\em Journal of Computer and System Sciences}, 68:374--397, 2004.

\bibitem{everett57}
Hugh Everett.
\newblock Recursive games.
\newblock {\em Annals of Mathematics Studies -- Contributions to the Theory of
  Games}, 3:67--78, 1957.

\bibitem{filar2012competitive}
Jerzy Filar and Koos Vrieze.
\newblock {\em Competitive Markov decision processes}.
\newblock Springer Science \& Business Media, 2012.

\bibitem{DBLP:conf/concur/KieferMST20}
Stefan Kiefer, Richard Mayr, Mahsa Shirmohammadi, and Patrick Totzke.
\newblock Strategy complexity of parity objectives in countable mdps.
\newblock In {\em Proc. 31st International Conference on Concurrency Theory
  (CONCUR'20)}, volume 171 of {\em LIPIcs}, pages 39:1--39:17. Leibniz-Zentrum
  f{\"u}r Informatik, 2020.
\newblock \href {https://doi.org/10.4230/LIPIcs.CONCUR.2020.39}
  {\path{doi:10.4230/LIPIcs.CONCUR.2020.39}}.

\bibitem{KL74}
Elon Kohlberg.
\newblock Repeated games with absorbing states.
\newblock {\em The Annals of Statistics}, pages 724--738, 1974.

\bibitem{kucera11}
Anton\'in Ku\v{c}era.
\newblock {\em Lectures in Game Theory for Computer Scientists}, chapter
  Turn-Based Stochastic Games, pages 146--184.
\newblock Cambridge University Press, 2011.

\bibitem{martin98}
Donald~A. Martin.
\newblock The determinacy of blackwell games.
\newblock {\em The Journal of Symbolic Logic}, 63(4):1565--1581, 1998.

\bibitem{MM02}
Annabelle McIver and Carroll Morgan.
\newblock Games, probability and the quantitative $\mu$-calculus $qm\mu$.
\newblock In {\em Proc. 9th International Conference on Logic for Programming,
  Artificial Intelligence, and Reasoning (LPAR'02)}, volume 2514 of {\em
  Lecture Notes in Computer Science}, pages 292--310. Springer, 2002.

\bibitem{RE89}
James Renegar.
\newblock On the computational complexity and geometry of the first-order
  theory of the reals. part iii: Quantifier elimination.
\newblock {\em Journal of Symbolic Computation}, 13(3):329--352, 1992.
\newblock \href {https://doi.org/10.1016/S0747-7171(10)80005-7}
  {\path{doi:10.1016/S0747-7171(10)80005-7}}.

\bibitem{tarski55}
Alfred Tarski.
\newblock A lattice-theoretical fixpoint theorem and its applications.
\newblock {\em Pacific Journal of Mathematics}, 5:285--309, 1955.

\bibitem{thomas02}
Wolfgang {\relax Th}omas.
\newblock Infinite games and verification.
\newblock In {\em Proc. 14th International Conference on Computer Aided
  Verification (CAV'02)}, volume 2404 of {\em Lecture Notes in Computer
  Science}, pages 58--64. Springer, 2002.
\newblock Invited Tutorial.

\bibitem{vardi85}
Moshe~Y. Vardi.
\newblock Automatic verification of probabilistic concurrent finite-state
  programs.
\newblock In {\em Proc. 26th Annual Symposium on Foundations of Computer
  Science (FOCS'85)}, pages 327--338. IEEE Computer Society Press, 1985.

\bibitem{vonNeuman}
John von Neumann and Oskar Morgenstern.
\newblock {\em Theory of Games and Economic Behavior}.
\newblock Princeton Univ. Press, Princeton, 1944.

\bibitem{zielonka04}
Wies{\l}aw Zielonka.
\newblock Perfect-information stochastic parity games.
\newblock In {\em Proc. 7th International Conference on Foundations of Software
  Science and Computation Structures (FoSSaCS'04)}, volume 2987 of {\em Lecture
  Notes in Computer Science}, pages 499--513. Springer, 2004.

\end{thebibliography}
	
	\appendix

\section{Complements on Section~\ref{sec:gameForms}}
We make an straightforward remark that comes directly from the definition of optimal strategies
\begin{remark}
	\label{rmq:optimal_strategies}
	In a game in normal form $\formNF = \langle \St_\A,\St_\B,[0,1],\outCNF 
	\rangle$, an optimal strategy $\sigma_\A \in \Dist(\St_\A)$ (resp. $\sigma_\B \in \Dist(\St_\A)$) for Player $\A$ (resp. $\B$) ensures that for all strategy $\sigma_\B \in \Dist(B)$ (resp. $\sigma_\A \in \Dist(A)$), we have $\va_\formNF \leq \outM_\formNF(\sigma_\A,\sigma_\B)$ (resp. $\outM_\formNF(\sigma_\A,\sigma_\B) \leq \va_\formNF$).
\end{remark}

We also have the following observation.
\begin{observation}
	\label{obs:smaller_valuation_smaller_outcome}
	Consider a game form $\formNF = \langle \St_\A,\St_\B,\outComeNF,\outCNF \rangle$, two valuations $v,v' \in [0,1]^\outComeNF$ and $x \in \R$ such that $v + x \preceq v'$. Consider an arbitrary Player $\A$ strategy $\sigma_\A \in \Dist(\St_\A)$. Then, the value of the strategy $\sigma_\A$ in $\formNF$ w.r.t. $v$ plus $x$ is lower than or equal its value w.r.t. $v'$: $\va_{\gameNF{\formNF}{v}}(\sigma_\A) + x \leq \va_{\gameNF{\formNF}{v'}}(\sigma_\A)$. Following, this also holds for the value of the game: $\va_{\gameNF{\formNF}{v}} + x \leq \va_{\gameNF{\formNF}{v'}}$.
\end{observation}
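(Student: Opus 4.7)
The plan is to unfold the definition of outcome, use linearity of convex combinations together with the pointwise inequality $v + x \preceq v'$, and then propagate the resulting inequality through $\inf$ and $\sup$.

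First I would fix arbitrary mixed strategies $\sigma_\A \in \Dist(\St_\A)$ and $\sigma_\B \in \Dist(\St_\B)$, and compute the outcomes in $\gameNF{\formNF}{v}$ and $\gameNF{\formNF}{v'}$ via Definition~\ref{def:outcome_game_form}. Each outcome is a convex combination of values at cells: $\outM_{\gameNF{\formNF}{v}}(\sigma_\A,\sigma_\B) = \sum_{a,b} \sigma_\A(a)\sigma_\B(b) \cdot v(\outCNF(a,b))$, and similarly for $v'$. From $v + x \preceq v'$ I get $v(\outCNF(a,b)) + x \leq v'(\outCNF(a,b))$ cell by cell. Multiplying by the non-negative weights $\sigma_\A(a)\sigma_\B(b)$, summing, and using $\sum_{a,b}\sigma_\A(a)\sigma_\B(b) = 1$ to absorb the constant $x$, I obtain the pointwise inequality on outcomes:
\[
\outM_{\gameNF{\formNF}{v}}(\sigma_\A,\sigma_\B) + x \leq \outM_{\gameNF{\formNF}{v'}}(\sigma_\A,\sigma_\B).
\]

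Next I would take the infimum over $\sigma_\B \in \Dist(\St_\B)$ on both sides. Since the constant $x$ does not depend on $\sigma_\B$ and infimum commutes with addition of a constant, the left-hand side becomes $\va_{\gameNF{\formNF}{v}}(\sigma_\A) + x$ by Definition~\ref{def:alternative_value_game_normal_form}, and the right-hand side becomes $\va_{\gameNF{\formNF}{v'}}(\sigma_\A)$. This yields the strategy-level inequality claimed in the observation.

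Finally, to obtain the statement on the values of the games themselves, I would take the supremum over $\sigma_\A \in \Dist(\St_\A)$. Again $x$ is independent of $\sigma_\A$, so the supremum of the left-hand side is $\va_{\gameNF{\formNF}{v}} + x$; the supremum of the right-hand side is $\va_{\gameNF{\formNF}{v'}}$ (both suprema equal the value of the corresponding game by Definition~\ref{def:alternative_value_game_normal_form}, assuming they are defined as they are throughout the paper in the finite setting via von Neumann's minimax theorem).

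I do not expect any real obstacle here; the only point that deserves care is the bookkeeping of the additive constant $x$ through $\sum$, $\inf$ and $\sup$, which relies crucially on the fact that the mixed-strategy weights sum to $1$ and that $x$ is independent of the quantified variable at each step.
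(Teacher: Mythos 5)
Your proof is correct and matches the paper's own argument almost line for line: both unfold the outcome as a convex combination over cells, absorb the constant $x$ using the fact that the mixed-strategy weights sum to $1$, and pass the resulting inequality through the infimum over $\sigma_\B$ to get $\va_{\gameNF{\formNF}{v}}(\sigma_\A) + x \leq \va_{\gameNF{\formNF}{v'}}(\sigma_\A)$. The only (immaterial) difference is the final step: you take the supremum over $\sigma_\A$ on both sides, whereas the paper instantiates $\sigma_\A$ at an optimal strategy of $\gameNF{\formNF}{v}$ and bounds its value in $\gameNF{\formNF}{v'}$ by $\va_{\gameNF{\formNF}{v'}}$ --- both steps are valid, yours being marginally more self-contained since it does not invoke existence of optimal strategies.
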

\begin{proof}
	Consider two arbitrary strategies $\sigma_\A,\sigma_\B$ for Player $\A$ and $\B$. We have:
	\begin{align*}
		\outM_{\gameNF{\formNF}{v'}}(\sigma_\A,\sigma_\B) & = \sum_{a \in \St_\A} \sum_{b \in \St_\B} \sigma_\A(a) \cdot \sigma_\B(b) \cdot \underbrace{v' \circ \outCNF(a,b)}_{\geq v \circ \outCNF(a,b) + x} \\ & \geq \sum_{a \in \St_\A} \sum_{b \in \St_\B} \sigma_\A(a) \cdot \sigma_\B(b) \cdot v \circ \outCNF(a,b) + x \\ & = \outM_{\gameNF{\formNF}{v'}}(\sigma_\A,\sigma_\B) + x
	\end{align*}
	It follows that $\va_{\gameNF{\formNF}{v}}(\sigma_\A) + x \leq \va_{\gameNF{\formNF}{v'}}(\sigma_\A)$ for any arbitrary strategy $\sigma_\A$. In particular, for an optimal strategy $\sigma_\A \in \Opt_\A(\gameNF{\formNF}{v})$ in the game form $\formNF$ w.r.t. the valuation $v$, we have: $\va_{\gameNF{\formNF}{v}} + x = \va_{\gameNF{\formNF}{v}}(\sigma_\A) + x \leq \va_{\gameNF{\formNF}{v'}}(\sigma_\A) \leq \va_{\gameNF{\formNF}{v'}}$. 
\end{proof}


\subsection{Complements on Definition~\ref{def:partial_assignment}}
\label{proof:valuation_sequence_converging}
Consider a game form $\formNF$ and a partial valuation $\alpha: \outComeNF \setminus E \rightarrow [0,1]$ for some subset of outcomes $E$. For all $y,y' \in [0,1]$, if $y \leq y'$, then $\alpha[y] \preceq \alpha[y']$, hence, by Observation~\ref{obs:smaller_valuation_smaller_outcome}, we have:
\begin{displaymath}
f^\formNF_\alpha(y) = \va_{\gameNF{\formNF}{\alpha[y]}} \leq \va_{\gameNF{\formNF}{\alpha[y']}} \leq f^\formNF_\alpha(y')
\end{displaymath}
That is, the function $f^\formNF_\alpha: [0,1] \rightarrow [0,1]$ preserves the relation $\leq$ (i.e. is non-decreasing). By Knaster-Tarski theorem \cite{tarski55}, $f^\formNF_\alpha$ admits a least fixed point in $[0,1]$.


\section{Complement on Section~\ref{sec:concurrentGames}}
In the following, we will be using the lemma below relating the transition probability $\probTrans{q,q'}$ between two states $q,q'$ and the valuation on Nature states lifting a valuation on states.
\begin{proposition}
	\label{prop:valuation}
	Consider a valuation of the states $v \in [0,1]^Q$, a state $q \in Q$ and strategies $\sigma_\A,\sigma_\B \in \Dist(A) \times \Dist(B)$ for both players in the game form $\formNF_q$. We have the following relation:
	\[\sum_{q' \in Q} \probTrans{q,q'}(\sigma_\A,\sigma_\B)
	\cdot v(q') = \outM_{\gameNF{\formNF_q}{\mu_v}}(\sigma_\A,\sigma_\B)\]
\end{proposition}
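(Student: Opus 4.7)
The plan is to unfold both sides using the definitions that have already been introduced, then swap the order of summation (which is legitimate since $Q$ is finite and all summands are non-negative and bounded).

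First I would expand the right-hand side. By Definition~\ref{def:outcome_game_form} applied to the game in normal form $\gameNF{\formNF_q}{\mu_v}$, whose outcome function is $\mu_v \circ \delta(q,\cdot,\cdot)$, we have
\[
\outM_{\gameNF{\formNF_q}{\mu_v}}(\sigma_\A,\sigma_\B) \;=\; \sum_{a \in A}\sum_{b \in B} \sigma_\A(a)\,\sigma_\B(b)\, \mu_v(\delta(q,a,b)).
\]
Then I would use the definition of $\mu_v$ (Definition~\ref{def:operator}) to rewrite $\mu_v(\delta(q,a,b)) = \sum_{q' \in Q} \distribFunc(\delta(q,a,b))(q') \cdot v(q')$, giving
\[
\outM_{\gameNF{\formNF_q}{\mu_v}}(\sigma_\A,\sigma_\B) \;=\; \sum_{a,b} \sigma_\A(a)\,\sigma_\B(b) \sum_{q' \in Q} \distribFunc(\delta(q,a,b))(q')\cdot v(q').
\]

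Next I would expand the left-hand side. By Definition~\ref{def:mu_state}, for each $q' \in Q$,
\[
\probTrans{q,q'}(\sigma_\A,\sigma_\B) \;=\; \outM_{\gameNF{\formNF_q}{\distribFunc(\cdot)(q')}}(\sigma_\A,\sigma_\B) \;=\; \sum_{a,b} \sigma_\A(a)\,\sigma_\B(b)\,\distribFunc(\delta(q,a,b))(q').
\]
Multiplying by $v(q')$ and summing over $q'$ yields
\[
\sum_{q' \in Q} \probTrans{q,q'}(\sigma_\A,\sigma_\B)\cdot v(q') \;=\; \sum_{q' \in Q} v(q') \sum_{a,b} \sigma_\A(a)\,\sigma_\B(b)\, \distribFunc(\delta(q,a,b))(q').
\]

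Finally I would swap the two summations. Since the inner sum over $(a,b)$ is finite and the outer sum over $Q$ has only countably many non-zero terms (the support of the distributions involved), with all terms non-negative and the total bounded by $1$, Fubini/Tonelli for sums lets us freely reorder. After the swap, the resulting expression coincides term-by-term with the expansion of the right-hand side, which closes the proof. There is no real obstacle here; the only point that deserves a word is the justification of the reordering, which is immediate because every summand is non-negative.
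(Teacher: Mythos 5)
Your proof is correct and follows essentially the same route as the paper's: unfold both sides via the definitions of $\probTrans{q,q'}$ and $\mu_v$, then exchange the order of summation over $q'$ and over $(a,b)$. The extra justification you give for the reordering is fine but not needed here, since $Q$ and $A,B$ are finite, so the exchange is elementary.
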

\begin{proof}
	The result comes immediately by writting the definitions, by inverting sums over the states and the actions:
	\begin{align*}
		\sum_{q' \in Q} \probTrans{q,q'}(\sigma_\A,\sigma_\B)
		\cdot v(q') & = \sum_{q' \in Q} \outM_{\gameNF{\formNF_q}{\distribFunc(\cdot)(q')}}(\sigma_\A,\sigma_\B)
		\cdot v(q') \\ & = \sum_{q' \in Q} \left( \sum_{a \in A} \sum_{b \in B} \sigma_\A(a) \cdot \sigma_\B(b) \cdot \distribFunc(\delta(q,a,b))(q')
		\right) \cdot v(q') \\
		& = \sum_{a \in A} \sum_{b \in B} \sigma_\A(a) \cdot \sigma_\B(b) \cdot \left( \sum_{q' \in Q} \distribFunc(\delta(q,a,b))(q')
		\cdot v(q') \right)\\
		& = \sum_{a \in A} \sum_{b \in B} \sigma_\A(a) \cdot \sigma_\B(b) \cdot \mu_v(\delta(q,a,b))\\
		& = \outM_{\gameNF{\formNF_q}{\mu_v}}(\sigma_\A,\sigma_\B)
	\end{align*}
\end{proof}


\subsection{Complements on Value of the game and least fixed point}
\label{appen:value_lfp}


Consider $\langle \Aconc,\top \rangle$ a concurrent reachability game and the operator $\Delta$ from Definition~\ref{def:operator}. Let $V$ denote the set of valuations $V := \{ v \in [0,1]^Q \mid v(\top) = 1 \}$. We state the proposition below giving some properties on $\Delta$ and $V$.
\begin{proposition}
	\label{prop:property_Delta}
	The operator $\Delta$ and the set $V$ ensure the following properties:
	\begin{enumerate}
		\item $(V,\preceq)$ is a complete lattice with minimal element denoted $v_0$;
		\item $\Delta[V] \subseteq V$;
		\item $\Delta$ is non-decreasing;
		;
		\item $\Delta$ is 1-Lipschitz w.r.t. $\dNorm$ (the infinity norm on $[0,1]^Q$).
	\end{enumerate}
\end{proposition}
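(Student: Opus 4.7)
The plan is to dispatch the four items in sequence, each of them fairly short once the right tool is identified. Throughout I would rely on the key linearity fact that for any valuation $v \in [0,1]^Q$, the lifted Nature valuation $\mu_v$ is a convex combination of the $v(q)$'s (since $\sum_{q \in Q} \distribFunc(d)(q) = 1$ for each Nature state $d$), together with Observation~\ref{obs:smaller_valuation_smaller_outcome} which already packages the two essential monotonicity/translation properties of $\va_{\gameNF{\formNF_q}{\cdot}}$.

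For item~(1), I would simply observe that $V$ is obtained from the product complete lattice $([0,1]^Q, \preceq)$ by fixing the single coordinate $\top$ to $1$. Arbitrary coordinatewise suprema and infima of elements of $V$ remain in $V$ (the $\top$-coordinate stays at $1$), so $V$ is a sub-complete-lattice. The minimal element $v_0$ is then the valuation with $v_0(\top) = 1$ and $v_0(q) = 0$ for $q \neq \top$. Item~(2) is immediate from the definition, since $\Delta(v)(\top) = 1$ by construction.

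For item~(3), let $v \preceq v'$ in $V$. First I would check that $\mu_v \preceq \mu_{v'}$: for every $d \in \distribSet$, $\mu_v(d) = \sum_{q \in Q} \distribFunc(d)(q) \cdot v(q)$ is a nonnegative linear combination of the $v(q)$'s, so replacing $v$ with $v'$ can only increase each coordinate. Then by Observation~\ref{obs:smaller_valuation_smaller_outcome} (taking $x = 0$), for each $q \neq \top$, $\va_{\gameNF{\formNF_q}{\mu_v}} \leq \va_{\gameNF{\formNF_q}{\mu_{v'}}}$, i.e.\ $\Delta(v)(q) \leq \Delta(v')(q)$; at $q = \top$ both are $1$. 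Hence $\Delta(v) \preceq \Delta(v')$.

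For item~(4), which I expect to be the slightly trickier step, I would fix $v, v' \in V$ and set $\eta := \dNorm(v,v') = \|v - v'\|_\infty$, so that $v - \eta \preceq v' \preceq v + \eta$ coordinatewise. Again using $\sum_{q} \distribFunc(d)(q) = 1$, this translates to $\mu_v - \eta \preceq \mu_{v'} \preceq \mu_v + \eta$ on the Nature states. Applying Observation~\ref{obs:smaller_valuation_smaller_outcome} with $x = -\eta$ in both directions yields, for every $q \neq \top$, $\va_{\gameNF{\formNF_q}{\mu_v}} - \eta \leq \va_{\gameNF{\formNF_q}{\mu_{v'}}} \leq \va_{\gameNF{\formNF_q}{\mu_v}} + \eta$, that is, $|\Delta(v)(q) - \Delta(v')(q)| \leq \eta$; and at $\top$ the difference is $0$. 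Taking the $\sup$ over $q$ gives $\dNorm(\Delta(v), \Delta(v')) \leq \dNorm(v,v')$, as required. The only real content here is the translation lemma on game forms, which has already been established.
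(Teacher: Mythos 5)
Your proof is correct, and items (1)--(3) follow exactly the paper's own argument: $V$ is a sub-complete-lattice of the product lattice $[0,1]^Q$ with the $\top$-coordinate pinned to $1$, item (2) is immediate from the definition of $\Delta$, and item (3) goes through $\mu_v \preceq \mu_{v'}$ and Observation~\ref{obs:smaller_valuation_smaller_outcome}. For item (4) you take a genuinely more economical route than the paper. The paper re-derives the Lipschitz bound from scratch: it first proves, for an arbitrary fixed pair of mixed strategies, that $|\outM_{\gameNF{\formNF_q}{\mu_v}}(\kappa_\A,\kappa_\B) - \outM_{\gameNF{\formNF_q}{\mu_{v'}}}(\kappa_\A,\kappa_\B)| \leq \dNorm(v,v')$, and then runs a two-sided minimax sandwich using optimal strategy pairs for \emph{both} players in \emph{both} games $\gameNF{\formNF_q}{\mu_v}$ and $\gameNF{\formNF_q}{\mu_{v'}}$ to transfer the bound to the values. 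You instead observe that $\mu_v - \eta \preceq \mu_{v'} \preceq \mu_v + \eta$ for $\eta = \dNorm(v,v')$ and invoke Observation~\ref{obs:smaller_valuation_smaller_outcome} twice (once in each direction, with $x = -\eta$); since that Observation is stated for arbitrary $x \in \R$ and its proof already contains both the outcome estimate and the passage to values via an optimal Player $\A$ strategy, this is legitimate and collapses the paper's page-long computation to a few lines. The only thing to be mindful of is that you are applying the Observation with a negative shift, which its statement and proof do permit; given that, both arguments are sound, and yours has the advantage of not duplicating work already done, while the paper's self-contained sandwich makes the symmetry of the bound explicit without relying on the sign-generality of the Observation.
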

\begin{proof}
	\begin{enumerate}
		\item The relation $\preceq \subseteq [0,1]^Q \times [0,1]^Q$ is a partial order on $V$. All subset $A \subseteq V$ of $V$ has an infimum $m_A \in V$, defined by, for all $x \in Q$, we have $m_A(x) = \inf_{v \in A} v(x)$, and a supremum $M_A \in V$, defined by, for all $x \in Q$, we have $M_A(x) = \sup_{v \in A} v(x)$. The minimal element $v_0$ of $V$ is defined by $v_0(x) = 0$ for all $x \in Q \setminus \{ \top \}$.
		\item For all $v \in V$ and $q \in Q$, we have:
		\begin{itemize}
			\item if $q = \top$, $\Delta(v)(q) = 1$;
			\item otherwise $\Delta(v)(q) = \va_{\gameNF{\formNF_q}{\mu_v}} \in [0,1]$ since $\mu_v \in [0,1]^\distribSet$.
		\end{itemize}
		That is, $\Delta(v) \in V$. It follows that $\Delta(V) \subseteq V$.
		\item Consider two elements $v,v' \in [0,1]^Q$ such that $v \preceq v'$. For all $d \in \distribSet$, we have $\mu_v(d) = \sum_{q \in Q} \mu_q(d) \cdot v(q) \leq \sum_{q \in Q} \mu_q(d) \cdot v'(q) = \mu_{v'}(d)$. Therefore, $\mu_v \preceq \mu_{v'}$. Hence, by Observation~\ref{obs:smaller_valuation_smaller_outcome}, for all $q \in Q$, we have $\va_{\gameNF{\formNF_q}{\mu_v}} \leq \va_{\gameNF{\formNF_q}{\mu_{v'}}}$. It follows that $\Delta(v) \preceq \Delta(v')$.
		\item Let $q \in Q \setminus \{ \top \}$. Let us prove that the $q$-th component of $\Delta$ is 1-Lipschitz. Let us denote by $\formNF$ the game form $\formNF_q
		$. Consider 
		two valuations $v,v' \in [0,1]^Q$
		. First, consider an arbitrary pair of strategies $(\kappa_\A,\kappa_\B) \in \Dist(A) \times \Dist(B)$. Then, we have the following (that we will refer to as $(1)$):
		\begin{align*}
			|\outM_{\gameNF{\formNF_q}{\mu_v}}(\kappa_\A,\kappa_\B) - \outM_{\gameNF{\formNF_q}{\mu_{v'}}}(\kappa_\A,\kappa_\B)| & = |\sum_{a \in \St_\A} \sum_{b \in \St_\B} \kappa_\A(a) \cdot \kappa_\B(b) \cdot (\mu_v(\outCNF(a,b)) - \mu_{v'}(\outCNF(a,b)))| \\
			& \leq \sum_{a \in \St_\A} \sum_{b \in \St_\B} \kappa_\A(a) \cdot \kappa_\B(b) \cdot |\underbrace{\mu_v(\outCNF(a,b)) - \mu_{v'}(\outCNF(a,b))}_{=\sum_{q \in Q} \distribFunc(d)(q) \cdot (v(q) - v'(q))}| \\
			& \leq \sum_{a \in \St_\A} \sum_{b \in \St_\B} \kappa_\A(a) \cdot \kappa_\B(b) \cdot \left( \sum_{q \in Q} \distribFunc(d)(q) \cdot \underbrace{|v(q) - v'(q)|}_{\leq \dNorm(v,v')
			} \right)\\
			& \leq \left( \sum_{a \in \St_\A} \sum_{b \in \St_\B} \kappa_\A(a) \cdot \kappa_\B(b) \cdot \left( \sum_{q \in Q} \distribFunc(d)(q) \right) \right) \cdot \dNorm(v,v') \\
			& = \left( \sum_{a \in \St_\A} \sum_{b \in \St_\B} \kappa_\A(a) \cdot \kappa_\B(b) \right) \cdot \dNorm(v,v') \\
			& = \dNorm(v,v')
		\end{align*}
		Now, consider two pairs of strategies $(\sigma_\A,\sigma_\B) \in \Dist(A)  \times \Dist(B)$ and $(\sigma_\A',\sigma_\B') \in \Dist(A) \times \Dist(B)$ that are optimal for both players in the games in normal form $\gameNF{\formNF_q}{\mu_{v}}$ and $\gameNF{\formNF_q}{\mu_{v'}}$ respectively. We have the following:
		\begin{align*}
			\outM_{\gameNF{\formNF_q}{\mu_{v}}}(\sigma_\A,\sigma_\B) & = \va_{\gameNF{\formNF_q}{\mu_{v}}} & \text{ since } \sigma_\A,\sigma_\B \text{ are optimal in } \gameNF{\formNF_q}{\mu_{v}}\\
			& \leq \outM_{\gameNF{\formNF_q}{\mu_{v}}}(\sigma_\A,\sigma_\B') & \text{ since } \sigma_\A \text{ is optimal in } \gameNF{\formNF_q}{\mu_{v}} \\
			& \leq \outM_{\gameNF{\formNF_q}{\mu_{v'}}}(\sigma_\A,\sigma_\B') + \dNorm(v,v') & \text{ by } (1)\\
			& \leq \va_{\gameNF{\formNF_q}{\mu_{v'}}} + \dNorm(v,v') & \text{ since } \sigma_\B' \text{ is optimal in } \gameNF{\formNF_q}{\mu_{v'}}\\
			& = \outM_{\gameNF{\formNF_q}{\mu_{v'}}}(\sigma_\A',\sigma_\B') + \dNorm(v,v') & \text{ since } \sigma_\A',\sigma_\B' \text{ are optimal in } \gameNF{\formNF_q}{\mu_{v'}}
		\end{align*}
		Similarly, we obtain: $\outM_{\gameNF{\formNF_q}{\mu_{v'}}}(\sigma_\A',\sigma_\B') \leq \outM_{\gameNF{\formNF_q}{\mu_{v}}}(\sigma_\A,\sigma_\B) + \dNorm(v,v')$. 
		It follows that:
		\begin{displaymath}
		|\outM_{\gameNF{\formNF_q}{\mu_{v}}}(\sigma_\A,\sigma_\B) - \outM_{\gameNF{\formNF_q}{\mu_{v'}}}(\sigma_\A',\sigma_\B')| \leq \dNorm(v,v')
		\end{displaymath}
		
		Then, we have:
		\begin{align*}
			|\Delta(v)(q) - \Delta(v')(q)| = |\va_{\gameNF{\formNF_q}{\mu_{v}}} - \va_{\gameNF{\formNF_q}{\mu_{v'}}}|  = 
			|\outM_{\gameNF{\formNF_q}{\mu_{v}}}(\sigma_\A,\sigma_\B) - \outM_{\gameNF{\formNF_q}{\mu_{v'}}}(\sigma_\A',\sigma_\B')| \leq \dNorm(v,v') 
		\end{align*}
		
	\end{enumerate}
	Therefore, all $q$-th component of the function $\Delta$ are 1-Lipschitz 
	, and therefore the whole function $\Delta$ is 1-Lipschitz with regard to the distance $\dNorm$.
\end{proof}

The value of the game is now given by least fixed point of the function $\Delta$ on $V$. 
\begin{definition}
	Let $\lfp$ be the least fixed point of the function $\Delta$ on $V$. Note that its existence is ensured by Knaster-Tarski \cite{tarski55} theorem with points $1$ and $2$.
\end{definition}

In the following, we prove (the already existing) Theorem~\ref{thm:m_delta}. First, let us state the useful proposition below allowing us to express the probability to reach the target $\top$ in less than $n$ steps inductively.
\begin{proposition}
	\label{rmq:relation_proba}
	Consider two strategies $\nu_\A,\nu_\B \in \SetStrat{\Aconc}{\A} \times \SetStrat{\Aconc}{\B}$ for Player $\A$ and $\B$ and a state $q \in Q \setminus \{ \top \}$. Then, for all $n \geq 0$, we have the following relation
	:
	\begin{displaymath}
	\prob{q}{\nu_\A,\nu_\B}(n+1,\top) = \sum_{q' \in Q}
	\probTrans{q,q'}(\nu_\A(q),\nu_\B(q)) \cdot \prob{q'}{\nu_\A^q,\nu_\B^q}(n,\top)
	\end{displaymath}
\end{proposition}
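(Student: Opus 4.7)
The plan is to prove this identity by expanding $\prob{q}{\nu_\A,\nu_\B}(n+1,\top)$ as a sum of cylinder probabilities, decomposing each contributing path along its first transition $q \to q'$, and then rearranging the double sum so that the outer sum ranges over $q'$.

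First, I would use that $\top$ is a self-looping sink, so the event ``reach $\top$ within the first $n+1$ transitions from $q$'' coincides with the disjoint union of cylinders of finite paths of length exactly $n+2$ whose last state is $\top$. This gives
\[
\prob{q}{\nu_\A,\nu_\B}(n+1,\top) \;=\; \sum_{\pi} \prob{q}{\nu_\A,\nu_\B}(\pi),
\]
where $\pi$ ranges over such finite paths. Each $\pi$ decomposes uniquely as $\pi = q \cdot q' \cdot \pi''$, with $\pi''$ a finite path of length $n+1$ starting at $q'$ and ending at $\top$.

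Next, I would apply the inductive clause of Definition~\ref{def:value_graph_game}, which gives
\[
\prob{q}{\nu_\A,\nu_\B}(q \cdot q' \cdot \pi'') \;=\; \probTrans{q,q'}(\nu_\A(q),\nu_\B(q)) \cdot \prob{q'}{\nu_\A^q,\nu_\B^q}(q' \cdot \pi''),
\]
and then group the resulting double sum according to $q'$ (the interchange is valid since all summands are non-negative and $Q$ is finite):
\[
\prob{q}{\nu_\A,\nu_\B}(n+1,\top) \;=\; \sum_{q' \in Q} \probTrans{q,q'}(\nu_\A(q),\nu_\B(q)) \cdot \sum_{\pi''} \prob{q'}{\nu_\A^q,\nu_\B^q}(q' \cdot \pi''),
\]
where for each fixed $q'$ the inner sum ranges over paths $q' \cdot \pi''$ of length $n+1$ ending at $\top$. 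Applying the first step in the opposite direction, this inner sum is exactly $\prob{q'}{\nu_\A^q,\nu_\B^q}(n,\top)$, which yields the claimed identity.

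The main obstacle is purely bookkeeping: one has to verify that the event ``reach $\top$ in at most $k$ steps'' can indeed be written as a disjoint union of cylinders indexed by finite paths of length $k+1$ ending at $\top$, and that the decomposition of each such path via its first transition is a bijection with pairs $(q',\pi'')$ where $\pi''$ is a length-$k$ path from $q'$ to $\top$. The self-looping-sink property of $\top$ allows us to reformulate first-hitting-time events as final-state events, and the assumption $q \neq \top$ ensures that the decomposition along a genuine first transition is well-defined. Beyond this, no measure-theoretic subtlety arises: everything reduces to Tonelli's theorem for non-negative summands on a finite-by-countable product.
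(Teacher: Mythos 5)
Your proposal is correct and follows essentially the same route as the paper: expand the $(n+1)$-step reachability probability as a sum over finite paths, split each path at its first transition $q \to q'$, apply the recursive clause of Definition~\ref{def:value_graph_game} to factor out $\probTrans{q,q'}(\nu_\A(q),\nu_\B(q))$, and regroup the inner sum as $\prob{q'}{\nu_\A^q,\nu_\B^q}(n,\top)$. The only cosmetic difference is that you reformulate the event via paths of fixed length ending in the absorbing state $\top$, while the paper sums directly over the sets $\mathsf{Rch}^{k}(\cdot,\top)$; both bookkeeping conventions lead to the same computation.
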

\begin{proof}
	We fix $\nu_\A,\nu_\B,n$ and $q$ as in the proposition. We have:
	\begin{align*}
		\prob{q}{\nu_\A,\nu_\B}(n+1,\top) & = \sum_{\pi \in \mathsf{Rch}^{n+1}(q,\top)} \prob{q}{\nu_\A,\nu_\B}(\pi) & \text{ by definition of }\prob{q}{\nu_\A,\nu_\B}(n+1,\top) \\
		& = \sum_{q' \in Q} \left(\sum_{\pi \in \mathsf{Rch}^n(q',\top)} \prob{q}{\nu_\A,\nu_\B}(q \cdot \pi) \right) & \text{ by definition of } \mathsf{Rch}(q,\top)\\
		& = \sum_{q' \in Q} \probTrans{q,q'}(\nu_\A(q),\nu_\B(q)) \cdot \left(\sum_{\pi \in \mathsf{Rch}^n(q',\top)} \prob{q'}{\nu_\A^q,\nu_\B^q}(\pi) \right) 
		& \text{ by definition of } \prob{q'}{\nu_\A^q,\nu_\B^q}(q \cdot \pi)\\
		& = \sum_{q' \in Q} \probTrans{q,q'}(\nu_\A(q),\nu_\B(q)) \cdot \prob{q'}{\nu^q_\A,\nu^q_\B}(n,\top) 
		& \text{ by definition of } \prob{q'}{\nu^q_\A,\nu^q_\B}(n,\top)\\
	\end{align*}
\end{proof}

Now, we state in the lemma below that Player $\B$ has a positional strategy $\s_\B$ whose value is less than or equal to $\lfp(q)$ from all state $q \in Q$.
\begin{lemma}
	\label{lem:leq_m_delta}
	Consider a concurrent reachability game $\langle \Aconc,\top \rangle$ 
	and $\Delta: [0,1]^Q \rightarrow [0,1]^Q$ the operator on values defined in Definition~\ref{def:operator}. There exists a positional strategy $\s_\B \in \SetPosStrat{\B}{\Aconc}$ for Player $\B$ such that, for all $q \in Q$: $\val{\Aconc}{\s_\B}(q) \leq \lfp(q)$.
\end{lemma}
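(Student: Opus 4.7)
The plan is to construct $\s_\B$ locally at each state by picking an optimal Player $\B$ strategy in the normal form game weighted by $\mu_\lfp$, then to show by induction on the number of steps that this bounds the probability of reaching $\top$.

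First, for each $q \in Q \setminus \{\top\}$, the local interaction $\formNF_q$ is finite, so by von Neumann's minimax theorem the game in normal form $\gameNF{\formNF_q}{\mu_\lfp}$ has a value and admits optimal strategies for both players. Pick any $\sigma^q_\B \in \Opt_\B(\gameNF{\formNF_q}{\mu_\lfp})$ and define the positional strategy $\s_\B$ by $\s_\B(q) := \sigma^q_\B$ for $q \neq \top$ (and arbitrarily at $\top$). By Remark~\ref{rmq:optimal_strategies}, for every $\sigma_\A \in \Dist(A)$,
\[\outM_{\gameNF{\formNF_q}{\mu_\lfp}}(\sigma_\A,\s_\B(q)) \leq \va_{\gameNF{\formNF_q}{\mu_\lfp}} = \Delta(\lfp)(q) = \lfp(q),\]
where the last equality uses that $\lfp$ is a fixed point of $\Delta$.

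Next I would prove by induction on $n \in \N$ that, for every Player $\A$ strategy $\s_\A \in \SetStrat{\Aconc}{\A}$ and every $q \in Q$, $\prob{q}{\s_\A,\s_\B}(n,\top) \leq \lfp(q)$. The base case $n = 0$ is immediate: the left-hand side is $1$ if $q = \top$ and $0$ otherwise, and $\lfp(\top) = 1$ while $\lfp \geq 0$. For the induction step at $q = \top$ the inequality is $1 \leq 1$. For $q \neq \top$, Proposition~\ref{rmq:relation_proba} gives
\[\prob{q}{\s_\A,\s_\B}(n+1,\top) = \sum_{q' \in Q} \probTrans{q,q'}(\s_\A(q),\s_\B(q)) \cdot \prob{q'}{\s_\A^q,\s_\B^q}(n,\top).\]
Crucially, $\s_\B$ is positional, so $\s_\B^q = \s_\B$, and the induction hypothesis applies to the residual strategy $\s_\A^q$, giving $\prob{q'}{\s_\A^q,\s_\B^q}(n,\top) \leq \lfp(q')$. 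Substituting this bound and invoking Proposition~\ref{prop:valuation} yields
\[\prob{q}{\s_\A,\s_\B}(n+1,\top) \leq \sum_{q' \in Q} \probTrans{q,q'}(\s_\A(q),\s_\B(q)) \cdot \lfp(q') = \outM_{\gameNF{\formNF_q}{\mu_\lfp}}(\s_\A(q),\s_\B(q)) \leq \lfp(q),\]
where the final inequality is the optimality bound established above.

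Finally I would pass to the limit. The events ``reach $\top$ in at most $n$ steps'' are increasing in $n$ with union the event ``reach $\top$'', so by monotone convergence $\prob{q}{\s_\A,\s_\B}(\top) = \lim_{n \to \infty} \prob{q}{\s_\A,\s_\B}(n,\top) \leq \lfp(q)$. Taking the supremum over Player $\A$ strategies gives $\val{\Aconc}{\s_\B}(q) = \sup_{\s_\A} \prob{q}{\s_\A,\s_\B}(\top) \leq \lfp(q)$, as required. There is no real obstacle here: the argument is an induction driven by the fixed-point equation $\lfp = \Delta(\lfp)$, with positionality of $\s_\B$ ensuring that the induction hypothesis applies to the unchanged residual strategy $\s_\B^q = \s_\B$.
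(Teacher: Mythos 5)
Your proposal is correct and follows essentially the same argument as the paper: choose $\s_\B(q)$ optimal in $\gameNF{\formNF_q}{\mu_\lfp}$ at each state, prove $\prob{q}{\s_\A,\s_\B}(n,\top) \leq \lfp(q)$ by induction on $n$ using Proposition~\ref{rmq:relation_proba}, Proposition~\ref{prop:valuation}, positionality ($\s_\B^q = \s_\B$) and the fixed-point equation $\lfp = \Delta(\lfp)$, then pass to the limit. No gaps.
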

\begin{proof}
	We consider a positional strategy $\s_\B: Q^+ \rightarrow \Dist(B)$ for Player $\B$ ensuring, for all $q \in Q$, $\s_\B(q)$ is an optimal strategy for Player $\B$ in the game form $\gameNF{\formNF_q}{\mu_{\lfp}}$: $\s_\B(q) \in \Opt_\B(\gameNF{\formNF_q}{\mu_{\lfp}}) \neq \emptyset$. 
	
	Consider now some state $k \in Q$ and let us show that for all Player $\A$ strategies $\s_\A: Q^+ \rightarrow \Dist(A)$, we have $\prob{k}{\s_\A,\s_\B}(\top) \leq \lfp(k)$.
	In fact, we show by induction on $n \in \N$ the property $\mathcal{P}(n)$: for all strategies $\s_\A: Q^+ \rightarrow \Dist(A)$ and for all $q \in Q$, $\prob{q}{\s_\A,\s_\B}(n,\top) \leq \lfp(q)$. 
	
	The case $n = 0$ is straightforward, since regardless of the strategy $\s_\A: Q^+ \rightarrow \Dist(A)$ considered, for all $q \in Q$, we have $\prob{q}{\s_\A,\s_\B}(0,\top) = 0$ if $q \neq \top$ and $\prob{q}{\s_\A,\s_\B}(0,\top) = 1$ if $q = \top$. It follows that $\prob{q}{\s_\A,\s_\B}(0,\top) \leq \lfp(q)$ since $\lfp \in V$.
	
	Let us now assume that the property $\mathcal{P}(n)$ holds for some $n \in \N$. For all strategies $\s_\A: Q^+ \rightarrow \Dist(A)$ we have $\prob{\top}{\s_\A,\s_\B}(n+1,\top) = 1 = \lfp(\top)$. Consider now a state $q \in Q \setminus \{ \top \}$ and a Player $\A$ strategy $\s_\A: Q^+ \rightarrow \Dist(A)$. Since the strategy $\s_\B$ is positional, it is equal to its residual strategy: $\s_\B^q = \s_\B$. Therefore, we have the following:
	\begin{align*}
		\prob{q}{\s_\A,\s_\B}(n+1,\top) & = \sum_{q' \in Q}
		\probTrans{q,q'}(\s_\A(q),\s_\B(q))
		\cdot \prob{q'}{\s_\A^q,\s_\B}(n,\top) & \text{ by Proposition}~\ref{rmq:relation_proba} \text{ and since } \s_\B^q = \s_\B \\
		& \leq \sum_{q' \in Q} \probTrans{q,q'}(\s_\A(q),\s_\B(q))
		\cdot \lfp(q') & \text{ by } \mathcal{P}(n) \\
		& = \outM_{\gameNF{\formNF_q}{\mu_{\lfp}}}(\s_\A(q),\s_\B(q)) & \text{ by Proposition~\ref{prop:valuation} } \\
		& \leq \va_{\gameNF{\formNF_q}{\mu_{\lfp}}} & \text{ since } \s_\B(q) \in \Opt_\B(\gameNF{\formNF_q}{\mu_{\lfp}}) \\
		& = \Delta(\lfp)(q) & \text{ by definition of } \Delta\\
		& = \lfp(q) & \text{ since } \lfp \text{ is a fixed point of } \Delta
	\end{align*}
	
	We can conclude that $\mathcal{P}(n+1)$ holds. It follows that $\mathcal{P}(n)$ holds for all $n \in \N$. 
	
	If we consider an arbitrary strategy $\s_\A: Q^+ \rightarrow \Dist(A)$ for Player $\A$, we have that for all $n \in \N$, $\prob{k}{\s_\A,\s_\B}(n,\top) \leq \lfp(k)$. Therefore, $\prob{k}{\nu_\A,\nu_\B}(\top) = \lim\limits_{n \rightarrow \infty} \prob{k}{\s_\A,\s_\B}(n,\top) \leq \lfp(k)$. Hence, the positional strategy $\s_\B$ for Player $\B$ ensures:
	\begin{displaymath}
	\val{\Aconc}{\s_\B}(k) \leq \lfp(k)
	\end{displaymath}
\end{proof}

The case of Player $\A$ is not symmetric as she does not have an optimal strategy in the general case, however, for all $\varepsilon > 0$, she has strategies guaranteeing the value $\lfp(q) - \varepsilon$ from all state $q \in Q$. This is stated in the lemma below. 

\begin{lemma}
	\label{lem:geq_m_delta}
	Consider a concurrent stochastic game $\langle \Aconc,\top \rangle$ with $\Aconc = \AConc$ and $\Delta: [0,1]^Q \rightarrow [0,1]^Q$ the operator on values defined in Definition~\ref{def:operator}. For all $\varepsilon > 0$, there exists a Player $\A$ strategy $\s_\A \in \SetPosStrat{\A}{\Aconc}$ 
	and $n \in \N$ such that, for all $q \in Q$ and $\s_\B \in \SetStrat{\Aconc}{\B}$: $\prob{q}{\s_\A,\s_\B}(n,\top) \geq \lfp(q) - \varepsilon$. Hence, $\val{\Aconc}{\s_\A}(q) \geq \lfp(q) - \varepsilon$ and 
	$\val{\Aconc}{\A}(q) \geq \lfp(q)$.
\end{lemma}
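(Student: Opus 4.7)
The plan is to combine Kleene iteration of $\Delta$ with a backward-induction construction of a finite-horizon strategy, and to finish by upgrading the witness to a positional one.

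First, $\Delta$ is monotone and $1$-Lipschitz on the complete lattice $(V,\preceq)$ by Proposition~\ref{prop:property_Delta}, hence continuous for the metric $\dNorm$. Starting from the minimum $v_0 \in V$, the iterates $w_k := \Delta^k(v_0)$ form a nondecreasing sequence, Kleene's theorem gives $\sup_k w_k = \lfp$, and Lipschitz continuity upgrades this to $\dNorm(w_k, \lfp) \to 0$. For the given $\varepsilon > 0$, fix $n \in \N$ such that $w_n \succeq \lfp - \varepsilon$.

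Next I build a time-dependent strategy. For each state $q$ and each $j \in \{0,\ldots,n-1\}$, pick $\tau_{q,j} \in \Opt_\A(\gameNF{\formNF_q}{\mu_{w_j}})$ (well-defined by von Neumann's minimax theorem). Define the history-dependent strategy $\widetilde{\s}_\A$ that, on a history of length $k+1 \leq n$ ending in state $q$, plays $\tau_{q,\,n-k-1}$. I claim $\prob{q}{\widetilde{\s}_\A,\s_\B}(n,\top) \geq w_n(q)$ for every Player $\B$ strategy $\s_\B$, and prove it by backward induction on $n$. The base case $n=0$ is immediate because $w_0(q) = \mathbf{1}[q=\top]$. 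For the step, one notes that the residual strategy $\widetilde{\s}_\A^{\,q}$ after the first step coincides with the analogous time-dependent strategy for horizon $n-1$; then Proposition~\ref{rmq:relation_proba} expands the probability after one step, the inductive hypothesis applies to the residual, Proposition~\ref{prop:valuation} rewrites the resulting sum as $\outM_{\gameNF{\formNF_q}{\mu_{w_{n-1}}}}(\tau_{q,n-1}, \s_\B(q))$, and optimality of $\tau_{q,n-1}$ (Remark~\ref{rmq:optimal_strategies}) lower-bounds this outcome by $\va_{\gameNF{\formNF_q}{\mu_{w_{n-1}}}} = w_n(q)$. Hence $\prob{q}{\widetilde{\s}_\A,\s_\B}(n,\top) \geq w_n(q) \geq \lfp(q) - \varepsilon$.

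To promote this witness to a positional one, I would pick a small $\eta > 0$ and take the positional $\s_\A$ that at each state $q$ plays an optimal local strategy in $\gameNF{\formNF_q}{\mu_{\lfp - \eta}}$ (with the $\top$-coordinate of $\lfp - \eta$ kept equal to $1$). Observation~\ref{obs:smaller_valuation_smaller_outcome} and the Lipschitz bound on $\Delta$ yield $\Delta(\lfp - \eta) \succeq \lfp - \eta$, so $\s_\A$ locally dominates $\lfp - \eta$. The bound $\val{\Aconc}{\s_\A}(q) \geq \lfp(q) - \eta$ then follows once one rules out non-target end components of positive value in the Markov chain induced by $\s_\A$ and any positional $\s_\B$; compactness of the positional $\s_\B$ together with the monotonicity of $\prob{q}{\s_\A,\s_\B}(m,\top)$ in $m$ then delivers a uniform horizon $n$ for which $\prob{q}{\s_\A,\s_\B}(n,\top) \geq \lfp(q) - \varepsilon$ holds for every $\s_\B$ (an MDP reduction justifies restricting to positional $\s_\B$ once $\s_\A$ is fixed).

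The hard part is precisely this positional upgrade: the \emph{efficient strategies} machinery of Section~\ref{sec:optimal_strat} addresses exactly this obstruction, but invoking it directly would be circular since Theorem~\ref{thm:m_delta} relies on the present lemma. A self-contained workaround is the Everett-style perturbation~\cite{everett57}: mix the locally-optimal distribution at each $q$ with a small-weight full-support distribution on $A$, so the per-step loss is $O(\eta)$ while every action carries positive probability; this forces any Markov chain with a fixed positional $\s_\B$ to reach $\top$ from any state of positive value within $|Q|$ steps with positive probability, yielding the required geometric-tail estimate. The two ``Hence'' clauses of the statement follow at once: $\prob{q}{\s_\A,\s_\B}(\top) \geq \prob{q}{\s_\A,\s_\B}(n,\top) \geq \lfp(q) - \varepsilon$ implies $\val{\Aconc}{\s_\A}(q) \geq \lfp(q) - \varepsilon$, and sending $\varepsilon \to 0$ yields $\val{\Aconc}{\A}(q) \geq \lfp(q)$.
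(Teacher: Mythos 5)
Your first two paragraphs reproduce the paper's own proof almost verbatim: the paper also takes the Kleene iterates $v_{n+1}=\Delta(v_n)$ from the bottom of $V$, defines a stage-dependent strategy $\s_{n+1}$ with $\s_{n+1}(q)\in\Opt_\A(\gameNF{\formNF_q}{\mu_{v_n}})$ and residual $\s_{n+1}^q:=\s_n$, and proves $\prob{q}{\s_n,\s_\B}(n,\top)\geq v_n(q)$ by exactly the induction you describe (Proposition~\ref{rmq:relation_proba}, then Proposition~\ref{prop:valuation}, then local optimality). That part is correct and is the whole of the paper's argument. Note that the paper's witness is \emph{not} positional either --- despite the $\SetPosStrat{\A}{\Aconc}$ in the statement, the constructed strategy depends on the number of steps elapsed --- and the lemma is only ever invoked (e.g.\ in the proof of Proposition~\ref{prop:sufficient}) for the existence of \emph{some} strategy with a uniform horizon $n$. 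So the positional upgrade you attempt in your last two paragraphs is not needed to match what the paper actually establishes here.

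That said, the upgrade as you sketch it does not work, and it is worth seeing why. If $v=\lfp-\eta$ is a constant shift, then $\mu_v=\mu_\lfp-\eta$, so $\va_{\gameNF{\formNF_q}{\mu_v}}(\sigma)=\va_{\gameNF{\formNF_q}{\mu_\lfp}}(\sigma)-\eta$; hence ``$\sigma$ locally dominates $\lfp-\eta$ at $q$'' is \emph{equivalent} to ``$\sigma$ is exactly optimal in $\gameNF{\formNF_q}{\mu_\lfp}$''. Your Everett-style perturbation destroys exact optimality (the perturbed strategy only achieves $\lfp(q)-O(\eta)$ against the worst column), so it does not locally dominate $\lfp-\eta$ and Proposition~\ref{prop:sufficient} cannot be applied to it. Repairing this requires a non-constant valuation $v\preceq\lfp$ on which $\Delta$ strictly increases where needed --- this is precisely the content of Proposition~\ref{prop:increasing_valuation} and the careful choice of the gap $\eta_q$ between optimal and non-optimal columns in the proof of Lemma~\ref{lem:pos_optimal_strat}, which is where the paper actually obtains positional $\varepsilon$-optimal (and partially optimal) strategies. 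Since you correctly flag this step as the hard part and it is not required for the lemma as the paper uses it, I would simply drop the last two paragraphs.
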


Before proving this lemma, let us consider a sequence of vectors in $V$. Let $v_0 \in [0,1]^Q$ be least element of $V$ with regard to the relation $\preceq$ (see Proposition~\ref{prop:property_Delta}). Then, for all $n \in \N$, we define $v_{n+1} = \Delta(v_n) \in V$ since $\Delta[V] \subseteq V$. We have the following proposition:
\begin{proposition}
	\label{prop:lim_vn}
	The sequence $(v_n)_{n \in \N}$ has a limit, with regard to the infinity norm $\dNorm$ on $[0,1]^q$, and it is equal to $\lfp$: $	v_n \underset{n \rightarrow \infty}{\rightarrow} \lfp$.
	\label{prop:limit_sequence_fix_point}
\end{proposition}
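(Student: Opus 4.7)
The plan is to apply a standard Kleene-style iteration argument, exploiting the three properties of $\Delta$ established in Proposition~\ref{prop:property_Delta}: stability of $V$, monotonicity, and $1$-Lipschitz continuity. First, I would argue that the sequence $(v_n)_{n \in \N}$ is non-decreasing for $\preceq$ and bounded above by $\lfp$. Since $v_0$ is the minimal element of $V$, we have $v_0 \preceq v_1$; then by monotonicity of $\Delta$ a straightforward induction yields $v_n \preceq v_{n+1}$ for every $n$. Similarly, $v_0 \preceq \lfp$ by minimality, and using the fact that $\lfp$ is a fixed point of $\Delta$ together with monotonicity, one obtains $v_n \preceq \lfp$ for every $n$ by induction.

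Next, since $(v_n)$ is coordinate-wise non-decreasing and bounded in the compact set $[0,1]^Q$, it converges pointwise to some $v^\star \in [0,1]^Q$ with $v_0 \preceq v^\star \preceq \lfp$. Because $Q$ is finite, pointwise convergence coincides with convergence in the infinity norm $\dNorm$, so $v_n \to v^\star$ in $\dNorm$. Note also that $v^\star \in V$ since each $v_n$ satisfies $v_n(\top) = 1$ and this is preserved in the limit.

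It remains to show $v^\star = \lfp$. Here I would invoke the $1$-Lipschitz property of $\Delta$, which entails continuity with respect to $\dNorm$. Passing to the limit in the recurrence $v_{n+1} = \Delta(v_n)$ gives $v^\star = \Delta(v^\star)$, so $v^\star$ is a fixed point of $\Delta$ on $V$. Since $\lfp$ is the \emph{least} such fixed point, we get $\lfp \preceq v^\star$; combined with $v^\star \preceq \lfp$ already established, this yields $v^\star = \lfp$ as desired.

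I do not anticipate any serious obstacle: every ingredient is either minimality of $v_0$, monotonicity of $\Delta$, or continuity of $\Delta$, all of which have been recorded in Proposition~\ref{prop:property_Delta}. The only mildly delicate point is making sure the limit lies in $V$ so that the notion of least fixed point on $V$ applies, which is immediate from the closedness of $V$ under $\dNorm$-limits.
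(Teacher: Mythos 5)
Your proof is correct and is essentially the paper's argument spelled out: the paper simply invokes Kleene's fixed-point theorem using the properties of $\Delta$ from Proposition~\ref{prop:property_Delta}, and your monotone-iteration-plus-continuity argument is exactly the standard unpacking of that theorem (ascending chain from the bottom element of $V$, convergence in $[0,1]^Q$, continuity from the $1$-Lipschitz bound to pass to the limit, and leastness of $\lfp$ to conclude).
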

\begin{proof}
	This is given by Kleene fixed-point theorem with points 1,2 and 4 of Proposition~\ref{prop:property_Delta}.
\end{proof}

We can proceed to the proof of Lemma~\ref{lem:geq_m_delta}.
\begin{proof}
	We exhibit a sequence of Player $\A$ strategies $(\s_{n})_{n \in \N} \in (\SetStrat{\A}{\Aconc})^\N$ whose values are arbitrarily close to $\lfp$. 
	Let $\s_0: Q^+ \rightarrow \Dist(A)$ be an arbitrary Player $\A$ strategy and for all $n \in \N$, let $\s_{n+1}: Q^+ \rightarrow \Dist(A)$ be such that, for all $q \in Q$, $\s_{n+1}(q)$ is an optimal strategy for Player $\A$ in the game form $\gameNF{\formNF_q}{\mu_{v_n}}$: $\s_{n+1}(q) \in \Opt_\A(\gameNF{\formNF_q}{\mu_{v_n}}) \neq \emptyset$. Furthermore, for all $q \in Q$, we set the residual strategy of $\s_{n+1}$ to be equal to $\s_n$: $\s_{n+1}^q := \s_n$.
	
	Let us prove by induction the property $\mathcal{R}(n)$: for all states $q \in Q$ and for all strategies $\s_\B: Q^+ \rightarrow \Dist(B)$ for Player $\B$, we have $\prob{q}{\s_n,\s_\B}(n,\top) \geq v_n(q)$. The case $n = 0$ is straightforward since $v_0$ is such that $v_0(q) = 0$ if $q \neq \top$ and $v_0(\top) = 1$. Let us now assume that $\mathcal{R}(n)$ holds for some $n \in \N$. Consider a state $q \in Q \setminus \{ \top \}$ and a strategy $\s_\B: Q^+ \rightarrow \Dist(B)$ for Player $\B$. We have the following:
	\begin{align*}
		\prob{q}{\s_{n+1},\s_\B}(n+1,\top) & = \sum_{q' \in Q}
		\probTrans{q,q'}(\s_{n+1}(q),\s_\B(q))
		\cdot \prob{q'}{\s_n,\s_\B^q}(n,\top) & \text{ by Proposition}~\ref{rmq:relation_proba} \text{ and since } \s_{n+1}^q = \s_n \\
		& \geq \sum_{q' \in Q} \probTrans{q,q'}(\s_{n+1}(q),\s_\B(q))
		\cdot v_n(q') & \text{ by } \mathcal{R}(n) \\
		& = \outM_{\gameNF{\formNF_q}{\mu_{v_n}}}(\s_{n+1}(q),\s_\B(q)) & \text{ by Proposition~\ref{prop:valuation} } \\
		& \geq \va_{\gameNF{\formNF_q}{\mu_{v_n}}} & \text{ since } \s_{n+1}(q) \in \Opt_\A(\gameNF{\formNF_q}{\mu_{v_n}}) \\
		& = \Delta(v_n)(q) & \text{ by definition of } \Delta\\
		& = v_{n+1}(q) & \text{ by definition of } v_{n+1}
	\end{align*}
	
	We can conclude that $\mathcal{R}(n+1)$ holds. It follows that $\mathcal{R}(n)$ holds for all $n \in \N$. 
	
	Let $n \in \N$.	For all states $q \in Q$ and Player $\B$ strategies $\s_\B: Q^+ \rightarrow \Dist(B)$, we have $\prob{q}{\s_n,\s_\B}(n,\top) \geq v_n(q)$. 
	Now, if we consider some $\varepsilon > 0$, we have by Proposition~\ref{prop:limit_sequence_fix_point} that there exists some $N \in \N$ such that for all $n \geq N$, we have $\dNorm(v_n,\lfp) \leq \varepsilon$. In that case, the strategy $s_N$ ensures: 
	\begin{displaymath}
	\prob{q}{\s_N,\s_\B}(\top) \geq \prob{q}{\s_N,\s_\B}(N,\top) \geq v_N(q) \geq \lfp(q) - \varepsilon
	\end{displaymath}
	It follows that: 
	\begin{displaymath}
	\val{\Aconc}{\A}(q) = \sup_{\s_\A \in \SetStrat{\Aconc}{\A}} \val{\Aconc}{\s_\A}(q) \geq \lfp(q)
	\end{displaymath}
\end{proof}	

The combination of these two lemmas proves Theorem~\ref{thm:m_delta}.


\subsection{An expansion of Proposition~\ref{rmq:relation_proba}}
\label{subsec:expan_prop}
\begin{proposition}
	\label{prop:relation_proba_inf}
	Consider two strategies $\s_\A,\s_\B \in \SetStrat{\Aconc}{\A} \times \SetStrat{\Aconc}{\B}$ for Player $\A$ and $\B$ and a starting state $q \in Q$. Then, we have the following relation:
	\begin{displaymath}
	\prob{q}{\s_\A,\s_\B}(\top) = \sum_{q' \in Q}
	\probTrans{q,q'}(\s_\A(q),\s_\B(q)) \cdot \prob{q'}{\s_\A^q,\s_\B^q}(\top)
	\end{displaymath}
\end{proposition}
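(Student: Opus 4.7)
\begin{proofs}
The plan is to obtain the infinite-horizon identity as the limit of the finite-horizon identity already proved in Proposition~\ref{rmq:relation_proba}. First, handle the trivial case $q = \top$ separately: since $\top$ is a self-looping sink, we have $\probTrans{\top,\top}(\s_\A(\top),\s_\B(\top)) = 1$ and $\probTrans{\top,q'}(\cdot) = 0$ for $q' \neq \top$, while $\prob{\top}{\s_\A,\s_\B}(\top) = 1 = \prob{\top}{\s_\A^\top,\s_\B^\top}(\top)$, so both sides equal $1$.

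For $q \neq \top$, I would start from the finite-step recurrence given by Proposition~\ref{rmq:relation_proba}: for every $n \geq 0$,
\[
\prob{q}{\s_\A,\s_\B}(n+1,\top) = \sum_{q' \in Q} \probTrans{q,q'}(\s_\A(q),\s_\B(q)) \cdot \prob{q'}{\s_\A^q,\s_\B^q}(n,\top).
\]
Then I would take the limit as $n \to \infty$. Observe that the sequence $(\prob{q}{\s_\A,\s_\B}(n,\top))_{n \in \N}$ is non-decreasing in $n$ (adding more allowed steps can only enlarge the set of finite paths reaching $\top$) and bounded above by $1$, and by Definition~\ref{def:value_graph_game} its limit is exactly $\prob{q}{\s_\A,\s_\B}(\top)$; the same applies to $\prob{q'}{\s_\A^q,\s_\B^q}(n,\top)$.

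Finally, since $Q$ is finite, the sum on the right-hand side is a finite sum, so the limit commutes with the sum (no measure-theoretic tool is required, just continuity of addition on a finite number of arguments). This yields
\[
\prob{q}{\s_\A,\s_\B}(\top) = \sum_{q' \in Q} \probTrans{q,q'}(\s_\A(q),\s_\B(q)) \cdot \prob{q'}{\s_\A^q,\s_\B^q}(\top),
\]
as desired. There is no real obstacle here; the only point to be careful about is the separate treatment of $q = \top$, which is needed because Proposition~\ref{rmq:relation_proba} is only stated for $q \neq \top$.
\end{proofs}
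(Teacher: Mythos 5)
Your proof is correct and follows essentially the same route as the paper: both pass to the limit $n \to \infty$ in the finite-horizon recurrence of Proposition~\ref{rmq:relation_proba}, using that $\prob{q}{\s_\A,\s_\B}(n,\top) \to \prob{q}{\s_\A,\s_\B}(\top)$. The paper spells this out with explicit two-sided $\varepsilon$-estimates, whereas you invoke the algebra of limits over the finite sum, which is a slightly cleaner formulation of the same argument; your separate treatment of $q = \top$ matches the paper's.
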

\begin{proof}
	This holds straightforwardly if $q = \top$ since $\top$ is self-looping. Assume now that $q \neq \top$. Let $\varepsilon > 0$. For all $q' \in Q$, let us denote $n_{q'} \in \N$ an index such that $\prob{q}{\s_\A^q,\s_\B^q}(n_{q'},\top) \geq \prob{q'}{\s_\A^q,\s_\B^q}(\top) - \varepsilon$ which exists since $\lim\limits_{n \rightarrow \infty} \prob{q'}{\s_\A^q,\s_\B^q}(n,\top) = \prob{q'}{\s_\A^q,\s_\B^q}(\top)$. Let $n := \max_{q \in Q} n_q \in \N$ since $Q$ is finite. We have, by Proposition~\ref{rmq:relation_proba}:
	\begin{align*}
		\prob{q}{\s_\A,\s_\B}(\top) & \geq \prob{q}{\s_\A,\s_\B}(n+1,\top) = \sum_{q' \in Q}	\probTrans{q,q'}(\s_\A(q),\s_\B(q)) \cdot \prob{q'}{\s_\A^q,\s_\B^q}(n,\top) \\
		& \geq \sum_{q' \in Q}	\probTrans{q,q'}(\s_\A(q),\s_\B(q)) \cdot \prob{q'}{\s_\A^q,\s_\B^q}(n_{q'},\top) \\
		& \geq \sum_{q' \in Q}	\probTrans{q,q'}(\s_\A(q),\s_\B(q)) \cdot (\prob{q'}{\s_\A^q,\s_\B^q}(\top) - \varepsilon) \\
		& = \sum_{q' \in Q}	\probTrans{q,q'}(\s_\A(q),\s_\B(q)) \cdot \prob{q'}{\s_\A^q,\s_\B^q}(\top) - \varepsilon
	\end{align*}
	As this holds for all $\varepsilon > 0$, it follows that $\prob{q}{\s_\A,\s_\B}(\top) \geq \sum_{q' \in Q}
	\probTrans{q,q'}(\s_\A(q),\s_\B(q)) \cdot \prob{q'}{\s_\A^q,\s_\B^q}(\top)$.
	
	Reciprocally, let $\varepsilon > 0$ and let $n \in \N$ be such that $\prob{q}{\s_\A,\s_\B}(n,\top) \geq \prob{q}{\s_\A,\s_\B}(\top) - \varepsilon$. Then:
	\begin{align*}
		\prob{q}{\s_\A,\s_\B}(\top) - \varepsilon & \leq \prob{q}{\s_\A,\s_\B}(n+1,\top) = \sum_{q' \in Q}	\probTrans{q,q'}(\s_\A(q),\s_\B(q)) \cdot \prob{q'}{\s_\A^q,\s_\B^q}(n,\top) \\
		& \leq \sum_{q' \in Q}	\probTrans{q,q'}(\s_\A(q),\s_\B(q)) \cdot \prob{q'}{\s_\A^q,\s_\B^q}(\top)
	\end{align*}
	As this holds for all $\varepsilon > 0$, we have $\prob{q}{\s_\A,\s_\B}(\top) \leq \sum_{q' \in Q}
	\probTrans{q,q'}(\s_\A(q),\s_\B(q)) \cdot \prob{q'}{\s_\A^q,\s_\B^q}(\top)$.
\end{proof}

\subsection{Complement on Proposition~\ref{prop:end_in_EC}}
\label{proof:prop_end_in_EC}
Once two strategies $\s_\A,\s_\B$ are fixed, we obtain a Markov chain. In this setting, BSCCs correspond to strongly connected components from which it is impossible to exit (i.e. given the two strategies $\s_\A$ and $\s_\B$). Note that any BSCC in the resulting Markov chain is (in terms of the set of states) an EC in the MDP induced by the Player $\A$ strategy $\s_\A$. Then, by Theorem 10.27 from \cite{BK08} (for instance), with probability 1, the set of states seen infinitely often in an infinite path forms a BSCC with probability 1. Therefore, from all states of the Markov chain, there is a non-zero probability to reach a BSCC. It follows that, for all states $q \in Q$ of the Markov chain, there is a finite path, from $q$, of length at most $n = |Q|$ with a non-zero probability to occur that, ends up in a BSCC.

\section{Complements on Section~\ref{sec:crucial_prop}}
\label{appen:crucial_prop}

\subsection{Proof of Proposition~\ref{prop:sufficient}}
\label{proof:lem_sufficient_cond}
\begin{proof}
	We consider a concurrent stochastic reachability game $\langle \Aconc,\top \rangle$, a valuation $v \in [0,1]^Q$ of the states such that $v \preceq \lfp$ and a Player $\A$ positional strategy $\s_\A \in \SetPosStrat{\Aconc}{\A}$ such that $\va_{\gameNF{\formNF_q}{\mu_{v}}}(\s_\A(q)) \geq v(q)$ for all $q \in Q$. Furthermore, we assume that for all end component $H = (Q_H,\beta)$ in the Markov decision process $\Gamma$ induced by the strategy $\s_\A$, if $Q_H \neq \{ \top \}$ then for all $q \in Q_H$, we have $\lfp(q) = 0$.
	
	Let us prove that, for all $q \in Q$, we have $\val{\Aconc}{\s_\A}(q) \geq v(q)$. In the Markov decision process induced by the strategy $\s_\A$, Player $\B$ plays alone a safety game. Hence, she has an optimal positional strategy $\s_\B$ (this is given by Theorem~\ref{thm:m_delta}), which ensures that, for all $q \in Q$, we have $\prob{q}{\s_\A,\s_\B}(\top) = \val{\Aconc}{\s_\A}(q)$. Let $\varepsilon > 0$. Let us prove that we have $\prob{q}{\s_\A,\s_\B}(\top) \geq v(q) - \varepsilon$. Since this would hold for all $\varepsilon > 0$, we would have $\val{\Aconc}{\s_\A}(q) = \prob{q}{\s_\A,\s_\B}(\top) \geq v(q)$.
	
	Let $v' \in [0,1]^Q$ be the valuation such that, for all $q \in Q$, we have $v'(q) = \max(0,v(q) - \varepsilon/2)$. Since $v \preceq \lfp$ 
	by Lemma~\ref{lem:geq_m_delta}, there exists $N \in \N$ and a strategy $\kappa_\A \in \SetStrat{\Aconc}{\A}$ such that for all states $q \in Q$ and Player $\B$ strategies $\s_\B' \in \SetStrat{\Aconc}{\B}$, we have $\prob{q}{\kappa_\A,\s_\B'}(N,\top) \geq v'(q)$.
	For all $l \geq 0$, we define the strategy $s_\A^l \in \SetStrat{\Aconc}{\A}$ by, for all $\pi \in Q^+$
	: 
	\[ s_\A^l(\pi) = \begin{cases} 
	\s_\A(\pi) & \text{ if } |\pi| \leq l \\
	\kappa_\A(\pi'') & \text{ otherwise, for } \pi = \pi' \cdot \pi'',\; |\pi'| = l
	\end{cases}
	\]
	Let us show, for all $l \geq 0$, the property $\mathcal{P}(l)$ holds: for all $q \in Q$ and strategy $\tilde{\s}_\B \in \SetStrat{\Aconc}{\B}$, we have $\prob{q}{\s_\A^l,\tilde{\s}_\B}(l+N,\top) \geq v'(q)$. First, note that $v - \varepsilon/2 \preceq v'$, hence $\mu_v - \varepsilon/2 \preceq \mu_{v'}$. Thus, for $q \in Q$, we have, by Observation~\ref{obs:smaller_valuation_smaller_outcome} and assumption of the lemma, that:
	\begin{displaymath}
	\va_{\gameNF{\formNF_q}{\mu_{v'}}}(\s_\A(q)) \geq \va_{\gameNF{\formNF_q}{\mu_{v}}}(\s_\A(q)) - \varepsilon/2 \geq v(q) - \varepsilon/2
	\end{displaymath}
	In addition, $0 \preceq v'$ and $0 \preceq \mu_{v'}$. Hence, $\va_{\gameNF{\formNF_q}{\mu_{v'}}}(\s_\A(q)) \geq 0$. It follows that: $$\va_{\gameNF{\formNF_q}{\mu_{v'}}}(\s_\A(q)) \geq v'(q)$$
	
	Now, by choice of the strategy $\kappa_\A$, the property $\mathcal{P}(0)$ holds. Assume now that $\mathcal{P}(l)$ holds for some $l \geq 0$. Consider a strategy $\tilde{\s}_\B \in \SetStrat{\Aconc}{\B}$ for player $\B$ and a state $q \in Q$. Note that we have $(s_\A^{l+1})^q = s_\A^l$. Furthermore:
	\begin{align*}
		\prob{q}{\s_\A^{l+1}}{\tilde{\s}_\B}(l+1+N,\top) & = \sum_{q' \in Q}
		\probTrans{q,q'}(\s_\A^{l+1}(q),\tilde{\s}_\B(q))
		\cdot \prob{q'}{\s_\A^{l}}{\tilde{\s}_\B^q}(l+N,\top) & \text{ by Remark}~\ref{rmq:relation_proba} \\
		& \geq \sum_{q' \in Q}
		\probTrans{q,q'}(\s_\A^{l+1}(q),\tilde{\s}_\B(q))
		\cdot v'(q) & \text{ by }\mathcal{P}(l)\\
		& = \outM_{\gameNF{\formNF_q}{\mu_{v'}}}(\s^{l+1}_\A(q),\tilde{\s}_\B(q)) & \text{ by Proposition~\ref{prop:valuation} } \\
		& = \outM_{\gameNF{\formNF_q}{\mu_{v'}}}(\s_\A(q),\tilde{\s}_\B(q)) & \text{ by definition of }\s_\A^{l+1} \\
		& \geq \va_{\gameNF{\formNF_q}{\mu_{v'}}}(\s_\A(q)) & \text{ by definition of }\va \\
		& \geq v'(q)
	\end{align*}
	
	Therefore, the property holds for all $l \geq 0$. 
	
	Consider now a strategy $\kappa_\B \in \SetStrat{\Aconc}{\B}$ that is optimal against $\kappa_\A$. That is, for all $q \in Q$, we have $\val{\Aconc}{\kappa_\A}(q) = \prob{q}{\kappa_\A,\kappa_\B}(\top)$. Then, for $l \geq 0$, we define a strategy $\s_\B^l$ for Player $\B$ similarly to how we define $\s_\A^l$, for all $\pi \in Q^+$: 
	\[ \s_\B^l(\pi) = \begin{cases} 
	\s_\B(\pi) & \text{ if } |\pi| \leq l \\
	\kappa_\B(\pi'') & \text{ otherwise, if } \pi = \pi' \cdot \pi'',\; |\pi'| = l
	\end{cases}
	\] 
	Now, in the MDP $\Gamma$ induced by the strategy $\s_\A$, let us denote by $\mathcal{H}_0$ the set of ECs $H = (Q_H,\beta_H)$ such that $Q_H \neq \{ \top \}$ and by $H_1$ the EC whose set of states is $\{ \top \}$. 
	
	
	In the game $\Aconc$, 
	for a state $q \in Q$, a subset of states $S \subseteq Q$ and some $k \geq 0$, we denote by $\mathsf{In}^k(q,S)$ the set of paths $\mathsf{In}^k(q,S) = \{ \pi \in Q^+ \mid \pi_0 = q,\; |\pi| \leq k,\; \pi_l \in S \}$ of length less than or equal to $k$ whose $l$-th state is in $S$. Then, we have the following partition for $k \geq l$:
	\begin{align*}
		\mathsf{Rch}^{k}(q,\top) & = \biguplus_{H \in \mathcal{H}_0} \mathsf{Rch}^{k}(q,\top) \cap \mathsf{In}^{k}(q,H) \\
		& \uplus \mathsf{Rch}^{k}(q,\top) \cap \mathsf{In}^{k}(q,H_1) \\
		& \uplus \mathsf{Rch}^{k}(q,\top) \setminus \mathsf{In}^{k}(q,\cup_{H \in \mathcal{H}} H)
	\end{align*}
	Consider some EC $H_0 \in \mathcal{H}_0$. By assumption of the lemma, we have, 
	for all $q \in H_0$:
	$$0 = \val{\Aconc}{}(q) \geq \val{\Aconc}{\kappa_\A}(q) = \prob{q}{\kappa_\A,\kappa_\B}(\top)$$
	
	Furthermore:
	$$1 = \val{\Aconc}{}(\top) = \val{\Aconc}{\kappa_\A}(\top) = \prob{\top}{\kappa_\A,\kappa_\B}(\top)$$
	
	For an EC $H \in \mathcal{H}$, a state $q \in Q$ and some $l \geq 0$, let us denote by $I^l(q,H)$ the set $\mathsf{Rch}^{N+l}(q,\top) \cap \mathsf{In}^{N+l}(q,H)$. For all EC $H \in \mathcal{H}$, by definition of the strategies $\s_\A^l$ and $\s_\B^l$, we have:
	$$\sum_{\pi \in I^l(q,H)} \prob{q}{\s_\A^l,\s_\B^l}(\pi) = \sum_{\pi \in \mathsf{In}^{N+l}(q,H)} \prob{q}{\s_\A^l,\s_\B^l}(\pi_0 \cdots \pi_l) \cdot \prob{\pi_l}{\kappa_\A,\kappa_\B}(N,\top)$$
	If $H \in \mathcal{H}_0$, for all $\pi \in I^l(q,H)$, we have $\prob{\pi_l}{\kappa_N,\kappa_\B}(\top) = 0$ since $\pi_l \in H$. Therefore:
	\begin{equation}
	\label{eqn:Il_Hzero}
	\sum_{\pi \in I^l(q,H)} \prob{q}{\s_\A^l,\s_\B^l}(\pi) = 0
	\end{equation}
	Similarly, for all $\pi \in I^l(q,H_1)$, we have $\prob{\pi_l}{\kappa_N,\kappa_\B}(\top) = 1$ since $\pi_l = \top$. Hence:
	\begin{equation}
	\label{eqn:Il_Hone}
	\sum_{\pi \in I^l(q,H)} \prob{q}{\s_\A^l,\s_\B^l}(\pi) = \sum_{\pi \in \mathsf{In}^{N+l}(q,H)} \prob{q}{\s_\A^l,\s_\B^l}(\pi_0 \cdots \pi_l) = \sum_{\pi \in \mathsf{Rch}^{l}(q,H)} \prob{q}{\s_\A^l,\s^l_\B}(\pi) = \sum_{\pi \in \mathsf{Rch}^{l}(q,H)} \prob{q}{\s_\A,\s_\B}(\pi)
	\end{equation}		
	Finally, for a state $q \in Q$, let $J^l(q) = \mathsf{Rch}^{N+l}(q,\top) \setminus \mathsf{In}^{N+l}(q,\cup_{H \in \mathcal{H}} H)$. Recall the set of BSCCs $\mathcal{H}_{\s_\B} \subseteq \mathcal{H}$ that is impossible to exit (from Proposition~\ref{prop:end_in_EC}). We have:
	\begin{align*}
		\sum_{\pi \in J^l(q)} \prob{q}{\s_\A^l,\s_\B^l}(\pi) & \leq 1 - \sum_{\pi \in \mathsf{In}^{N+l}(q,\cup_{H \in \mathcal{H}} H)} \prob{q}{\s_\A^l,\s_\B^l}(\pi) \leq 1 - \sum_{\pi \in \mathsf{In}^{N+l}(q,\cup_{H \in \mathcal{H}_{\s_\B}} H)} \prob{q}{\s_\A^l,\s_\B^l}(\pi) \\ & = 1 - \sum_{\pi \in \mathsf{Rch}^{l}(q,\cup_{H \in \mathcal{H}_{\s_\B}} H)} \prob{q}{\s_\A,\s_\B}(\pi)
	\end{align*}
	
	Let $U = \cup_{H \in \mathcal{H}_{\s_\B}} H$ and $n = |Q|$. For all $q \in Q \setminus U$, we have $p_q = \prob{q}{\s_\A,\s_\B}(n,U) > 0$. Let $p = \min_{q \in Q} p_q > 0$ (since $Q$ is finite) the minimum of such probabilities. It follows that, for all $q \in Q \setminus U$, we have:
	$$\sum_{\pi \notin \mathsf{Rch}^{n}(q,U) \wedge |\pi| = n} \prob{q}{\s_\A,\s_\B}(\pi) \leq (1 - p)$$
	Then:
	\begin{align*}
		\sum_{\pi \notin \mathsf{Rch}^{2n}(q,U) \wedge |\pi| = 2n} \prob{q}{\s_\A,\s_\B}(\pi) & = \sum_{\pi \notin \mathsf{Rch}^{n}(q,U) \wedge |\pi| = n} \prob{q}{\s_\A,\s_\B}(\pi) \cdot \left( \sum_{\pi' \notin \mathsf{Rch}^{n}(\head(\pi),U) \wedge |\pi'| = n} \prob{\pi_n}{\s_\A,\s_\B}(\pi') \right) \\
		& \leq \sum_{\pi \notin \mathsf{Rch}^{n}(q,U) \wedge |\pi| = n} \prob{q}{\s_\A,\s_\B}(\pi) \cdot (1 - p) \\
		& \leq (1 -p)^2
	\end{align*}
	In fact, for all $k \geq 0$, we have:
	$$\sum_{\pi \notin \mathsf{Rch}^{k \cdot n}(q,U) \wedge |\pi| = k \cdot n} \prob{q}{\s_\A,\s_\B}(\pi) \leq (1 - p)^k$$
	Let $l \geq 0$ be such that $(1 - p)^{l/n} \leq \varepsilon/2$ (which exists since $1 - p < 1$). In that case, for all $q \in Q$, we have: 
	\begin{equation}
	\label{eqn:Il_Hother}
	\sum_{\pi \in J^l(q)} \prob{q}{\s_\A^l,\s_\B^l}(\pi) \leq 1 - (1 - \sum_{\pi \notin \mathsf{Rch}^{l}(q,U) \wedge |\pi| = l} \prob{q}{\s_\A,\s_\B}(\pi) ) \leq (1 - p)^{l/n} \leq \varepsilon/2
	\end{equation}
	
	Finally, let $q \in Q$. We have the following, by definition of $I^l(q,H)$ and $J^l(q)$:
	\begin{equation}
	\label{eqn:partition_rch}
	\mathsf{Rch}^{N+l}(q,\top) = \biguplus_{H \in \mathcal{H}_0} I^l(q,H) \uplus I^l(q,H_1) \uplus J^l(q)
	\end{equation}
	
	In addition, we have:
	\begin{align*}
		\sum_{\pi \in \mathsf{Rch}^{N+l}(q,\top)} \prob{q}{\s^l_\A,\s^l_\B}(\pi) & = \sum_{H \in \mathcal{H}_0} \sum_{\pi \in I^l(q,H)} \prob{q}{\s^l_\A,\s^l_\B}(\pi) + \sum_{\pi \in I^l(q,H_1)} \prob{q}{\s^l_\A,\s^l_\B}(\pi) + \sum_{\pi \in J^l(q)} \prob{q}{\s^l_\A,\s^l_\B}(\pi) & \text{ by }(\ref{eqn:partition_rch})\\
		& = \sum_{\pi \in I^l(q,H_1)} \prob{q}{\s^l_\A,\s^l_\B}(\pi) + \sum_{\pi \in J^l(q)} \prob{q}{\s^l_\A,\s^l_\B}(\pi) & \text{ by }(\ref{eqn:Il_Hzero}) \\
		& = \sum_{\pi \in \mathsf{Rch}^{l}(q,\top)} \prob{q}{\s_\A,\s_\B}(\pi) + \sum_{\pi \in J^l(q)} \prob{q}{\s^l_\A,\s^l_\B}(\pi) & \text{ by }(\ref{eqn:Il_Hone}) \\
		& \leq \sum_{\pi \in \mathsf{Rch}^{l}(q,\top)} \prob{q}{\s_\A,\s_\B}(\pi) + \varepsilon/2 &\text{ by }(\ref{eqn:Il_Hother}) \\
		& \leq \sum_{\pi \in \mathsf{Rch}(q,\top)} \prob{q}{\s_\A,\s_\B}(\pi) + \varepsilon/2 \\
		& = \prob{q}{\s_\A,\s_\B}(\top) + \varepsilon/2
	\end{align*}
	Furthermore, by $\mathcal{P}(l)$, we have:
	$$\prob{q}{\s_\A^l,\tilde{\s}_\B}(N+l,\top) = \sum_{\pi \in \mathsf{Rch}^{N+l}(q,\top)} \prob{q}{\s^l_\A,\s^l_\B}(\pi) \geq v'(q) \geq v(q) - \varepsilon/2$$
	
	Overall, we have:
	$$v(q) - \varepsilon/2 \leq \sum_{\pi \in \mathsf{Rch}^{N+l}(q,\top)} \prob{q}{\s^l_\A,\s^l_\B}(\pi) \leq \prob{q}{\s_\A,\s_\B}(\top) + \varepsilon/2$$
	
	Finally, by choice of the strategy $\s_\B$ for Player $\B$, we have $\prob{q}{\s_\A,\s_\B}(\top) = \val{\Aconc}{\s_\A}(q)$. That is:
	$$v(q) - \varepsilon \leq \val{\Aconc}{\s_\A}(q)$$
	
	Since this holds for all $\varepsilon > 0$, this shows that:
	$$\val{\Aconc}{\s_\A}(q) \leq v(q)$$	
\end{proof}

\subsection{Proof of Proposition~\ref{prop:value_end_component}}
\label{proof:value_end_component}
\begin{proof}
	Consider a Player $\A$ positional strategy $\s_\A \in \SetPosStrat{\Aconc}{\A}$ locally dominating a valuation $v \in [0,1]^Q$ and an end component $H = (Q_H,\beta)$ in the MDP induced by the strategy $\s_\A$. Let $x \in Q_H$ be such that $v(x) = \max_{q \in Q_H} v(q)$ (which exists since $Q_H$ is finite). We set $v_H := v(x)$. Then, for any $b \in \beta(x)$, we have:
	\begin{displaymath}
	v_H = v(x) \leq \va_{\gameNF{\formNF_x}{\mu_v}}(\s_\A(x)) \leq \outM_{\gameNF{\formNF_x}{\mu_v}}(\s_\A(x),b)
	\end{displaymath}
	In addition, by Proposition~\ref{prop:valuation}, we have:
	\begin{displaymath}
	\outM_{\gameNF{\formNF_x}{\mu_v}}(\s_\A(x),b) = \sum_{q \in Q} \probTrans{q,q'}(\s_\A(x),b) \cdot v(q) = \sum_{q \in Q} \iota(x,b)(q)
	\cdot v(q)\\
	\end{displaymath}
	Since $H$ is an end component and $b \in \beta(x)$, we have $\Supp(\iota(x,b)) \subseteq Q_H$. Furthermore, $v_H$ is the maximum of $v$ over states in $Q_H$, which implies:
	\begin{displaymath}
	\sum_{q \in Q} \iota(x,b)(q) \cdot v(q) =
	\sum_{q \in Q_H} \iota(x,b)(q) \cdot v(q) \leq \sum_{q \in Q_H} \iota(x,b)(q) \cdot v_H 
	= v_H
	\end{displaymath}
	Overall, we get:
	\begin{displaymath}
	v_H = v(x) \leq \va_{\gameNF{\formNF_x}{\mu_v}}(\s_\A(x)) \leq \sum_{q \in Q_H} \iota(x,b)(q) \cdot v(q) \leq \sum_{q \in Q_H} \iota(x,b)(q) \cdot v_H = v_H
	\end{displaymath}
	Hence, all the above inequalities are in fact equalities, which means that we have $v(x) = \va_{\gameNF{\formNF_x}{\mu_v}}(\s_\A(x))$ and, for all $q \in \Supp(\iota(x,b))$, $v(q) = v_H$. This holds for all $x \in Q_H$ such that $v(x) = v_H$ and for all $b \in \beta(x)$.
	
	Consider now the underlying graph $G_H = (Q_H,E)$ of the end component $H$ such that for $q,q' \in Q_H$, we have $(q,q') \in E$ if and only if $q' \in \Supp(\iota(q,\beta(q)))$. What we have proven is that all successors $q$ of $x$ in the graph $G_H$ are such that $v(q) = v_H$. By propagating the property, we have that all states $q$ reachable from $x$ in the graph $G_H$ are such that $v(q) = v_H$. As the graph $G_H$ is strongly connected (since $H$ is an end component), this implies that, for all $q \in Q_H$, we have $v_H = v(q) = \va_{\gameNF{\formNF_x}{\mu_v}}(\s_\A(q))$.
\end{proof}

\section{Complements on Section~\ref{sec:optimal_strat}}
\label{appen:optimal_strat}

\subsection{Proof of Lemma~\ref{lem:bad_increase}}
\label{proof:bad_increase}
Before proving this lemma, we introduce the following notation: for a finite path $\pi = \pi_1 \cdots \pi_n \in Q^+$ and for $1 \leq i \leq n$, we denote by $\pi_{\leq i}$ the finite paths $\pi_{\leq i} := \pi_1 \cdots \pi_i \in Q^+$. Now, let us state and prove a necessary condition for a Player $\A$ strategy to be optimal. First, we define the notion, given a Player $\A$ strategy $\sigma_\A$, of relevant paths w.r.t. the strategy $\s_\A$ which informally are paths with non-zero probability to occur if Player $\B$ locally plays optimal actions against the strategy $\s_\A$. That is:
\begin{definition}
	Consider 
	a Player $\A$ strategy $\s_\A \in \SetStrat{\Aconc}{\A}$ and a state $q$. The set $\mathsf{RP}_{\s_\A}(q)$ of \emph{relevant paths} w.r.t. the strategy $\s_\A$ from $q$ is equal to: 
	\begin{displaymath}
		\mathsf{RP}_{\s_\A}(q) := \{ \pi = \pi_1 \cdots \pi_n \in Q^+ \mid \pi_1 = q,\; \forall 1 \leq i \leq n-1,\; \exists b \in B_{\s_\A(\pi_{\leq i})},\; \probTrans{\pi_i,\pi_{i+1}}(\s_\A(\pi_{\leq i}),b) > 0 \}
	\end{displaymath}
\end{definition}
Then, to be optimal, a Player $\A$ strategy needs, on all relevant paths w.r.t. $\s_\A$ to plays optimally both locally (in the local interaction) and globally (in the concurrent game), that is:
\begin{proposition}
	\label{prop:neccessary_condition_optimal_stratey}
	Consider a Player $\A$ strategy $\s_\A \in \SetStrat{\Aconc}{\A}$ and assume that it is optimal from a state $q$, i.e. $\val{\Aconc}{\s_\A}(q) = \val{\Aconc}{}(q) = \lfp(q)$. (Recall that the value of the states is given by the vector $\lfp$). Then, for any compatible path $\pi = \pi' \cdot q' \in \mathsf{RP}_{\s_\A}(q)$, we have:
	\begin{itemize}
		\item the (local) strategy $\s_\A(\pi)$ is optimal in the game form $\formNF_q$ w.r.t. the valuation $\mu_\lfp$: $\s_\A(\pi) \in \Opt_\A(\gameNF{\formNF_q}{\mu_\lfp})$;
		\item for all $b \in B_{\s_\A(\pi)}$, for all $q'' \in Q$ such that $\probTrans{q',q''}(\s_\A(\pi),b) > 0$, the residual strategy $\s_\A^{\pi}$ is optimal from $q''$: $\val{\Aconc}{\s_\A^{\pi}}(q'') = \val{\Aconc}{}(q'')$.
	\end{itemize}
\end{proposition}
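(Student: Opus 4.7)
The plan is to prove this by induction on the length $n$ of $\pi$. The base case $n=1$ (so $\pi = q$) will contain all of the conceptual content: optimality of $\s_\A$ at $q$ will force $\s_\A(q)$ to be locally optimal in $\gameNF{\formNF_q}{\mu_\lfp}$ and, simultaneously, will force the residual $\s_\A^q$ to be optimal from every successor $q''$ with $\probTrans{q,q''}(\s_\A(q),b) > 0$ for some $b \in B_{\s_\A(q)}$. The inductive step will then peel off one step of the path using precisely condition~(2) from the previous level. The key enabler is the one-step recursive decomposition $\prob{q}{\s_\A,\s_\B}(\top) = \sum_{q'} \probTrans{q,q'}(\s_\A(q),\s_\B(q)) \cdot \prob{q'}{\s_\A^q,\s_\B^q}(\top)$ from Proposition~\ref{prop:relation_proba_inf}.

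For the base case, I would first take the infimum of this decomposition over $\s_\B$; since Player $\B$ can pick her post-$q$ residual strategy independently at each successor $q'$, and since $\sum_{q'} \probTrans{q,q'}(\s_\A(q),\sigma_\B) \cdot v(q') = \outM_{\gameNF{\formNF_q}{\mu_v}}(\s_\A(q),\sigma_\B)$ by Proposition~\ref{prop:valuation}, this yields the identity $\val{\Aconc}{\s_\A}(q) = \va_{\gameNF{\formNF_q}{\mu_{\val{\Aconc}{\s_\A^q}}}}(\s_\A(q))$. Using $\val{\Aconc}{\s_\A^q} \preceq \lfp$ (no strategy beats the game value) together with Observation~\ref{obs:smaller_valuation_smaller_outcome}, I obtain the chain $\val{\Aconc}{\s_\A}(q) \leq \va_{\gameNF{\formNF_q}{\mu_\lfp}}(\s_\A(q)) \leq \va_{\gameNF{\formNF_q}{\mu_\lfp}} = \lfp(q)$. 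The hypothesis that $\s_\A$ is optimal at $q$ collapses the chain into equalities; the equality $\va_{\gameNF{\formNF_q}{\mu_\lfp}}(\s_\A(q)) = \va_{\gameNF{\formNF_q}{\mu_\lfp}}$ is exactly condition~(1) for $\pi = q$.

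To derive condition~(2) I would then fix $b \in B_{\s_\A(q)}$ and repeat the chain with the pure action $b$ played against $\s_\A(q)$: the sandwich $\lfp(q) \leq \outM_{\gameNF{\formNF_q}{\mu_{\val{\Aconc}{\s_\A^q}}}}(\s_\A(q),b) \leq \outM_{\gameNF{\formNF_q}{\mu_\lfp}}(\s_\A(q),b) = \lfp(q)$ forces both outcomes to agree. Writing their difference as the nonnegative sum $\sum_{a,q'} \s_\A(q)(a) \cdot \distribFunc(\delta(q,a,b))(q') \cdot (\lfp(q') - \val{\Aconc}{\s_\A^q}(q'))$, every term must vanish; equivalently, $\val{\Aconc}{\s_\A^q}(q') = \lfp(q')$ whenever some $a \in \Supp(\s_\A(q))$ puts positive probability on $q'$ via $\delta(q,a,b)$, i.e.\ whenever $\probTrans{q,q'}(\s_\A(q),b) > 0$, which is condition~(2).

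For the inductive step, given a relevant path $\pi = q_1 \cdots q_n$ of length $n \geq 2$ with $q_1 = q$, the prefix $\pi_{\leq n-1}$ is still a relevant path from $q$ and the induction hypothesis applies to it. Applying its condition~(2) at the relevance witness $b_{n-1} \in B_{\s_\A(\pi_{\leq n-1})}$ and the successor $q_n$ shows that $\s_\A^{\pi_{\leq n-1}}$ is optimal from $q_n$. I would then apply the base case to this residual strategy at the single-state path $q_n$; using the straightforward identities $\s_\A^{\pi_{\leq n-1}}(q_n) = \s_\A(\pi)$ and $(\s_\A^{\pi_{\leq n-1}})^{q_n} = \s_\A^\pi$, the two obtained conditions translate directly into conditions~(1) and~(2) for $\pi$. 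The main obstacle I anticipate is the pointwise-extraction step in the base case: passing from equality of expectations to equality of values on individual successors, which works only because the difference decomposes as a sum of nonnegative terms; once this is organized the rest is essentially bookkeeping.
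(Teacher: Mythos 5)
Your proposal is correct and follows essentially the same route as the paper: induction on the length of the relevant path, a base case that sandwiches $\lfp(q)$ between the one-step decomposition of $\prob{q}{\s_\A,\s_\B}(\top)$ (via Proposition~\ref{prop:relation_proba_inf} and Proposition~\ref{prop:valuation}) and $\va_{\gameNF{\formNF_q}{\mu_\lfp}}$, collapsing all inequalities to equalities and extracting condition~(2) termwise from a sum of nonnegative differences, and an inductive step that peels off the first state using condition~(2) of the previous level. The only cosmetic difference is that you first establish the identity $\val{\Aconc}{\s_\A}(q) = \va_{\gameNF{\formNF_q}{\mu_{\val{\Aconc}{\s_\A^q}}}}(\s_\A(q))$ by exchanging the infimum with the sum, whereas the paper fixes $b \in B_{\s_\A(q)}$ up front and plays an explicit Player~$\B$ strategy (action $b$ followed by an optimal response to the residual); the resulting chain of (in)equalities is the same.
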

\begin{proof}
	Let us prove by induction on $n \in \N^*$ the property $\mathcal{P}(n)$ stating this proposition for all paths of length at most $n$. Let us show $\mathcal{P}(1)$. 
	
	Consider an optimal strategy $\s_\A$ from a state $q$. The only relevant path w.r.t. the strategy $\s_\A$ of length $1$ is $\pi = q \in \mathsf{RP}_{\s_\A}(q)$. Consider some $b \in B_{\s_\A(q)}$. Now, let $\s'_\B \in \SetStrat{\Aconc}{\B}$ be a Player $\B$ strategy that is optimal against the residual strategy $\s_\A^q$ of Player $\A$: for all $q' \in Q$, we have $\prob{q'}{\s_\A^q,\s_\B'}(\top) = \val{\Aconc}{\s_\A^q}(q')$. Let us now define the strategy $\s_\B \in \SetStrat{\Aconc}{\B}$ by $\s_\B(q) := b$ and $\s_\B^q := \s_\B'$. Now, since the strategy $\s_\A$ is optimal from $q$, we have: 
	\begin{displaymath}
		\prob{q}{\s_\A,\s_\B}(\top) \geq \val{\Aconc}{\s_\A}(q) = \val{\Aconc}{}(q) = \lfp(q)
	\end{displaymath}
	Furthermore, by Proposition~\ref{prop:relation_proba_inf}, we have:
	\begin{align*}
		\lfp(q) \leq \prob{q}{\s_\A,\s_\B}(\top) & = \sum_{q' \in Q}
		\probTrans{q,q'}(\s_\A(q),b) \cdot \prob{q'}{\s_\A^q,\s_\B^q}(\top) & \text{ by Proposition~\ref{prop:relation_proba_inf} }\\
		& = \sum_{q' \in Q}
		\probTrans{q,q'}(\s_\A(q),b) \cdot \prob{q'}{\s_\A^q,\s_\B'}(\top) & \text{ since }\s_\B^q=\s_\B' \\
		& = \sum_{q' \in Q}
		\probTrans{q,q'}(\s_\A(q),b) \cdot \val{\Aconc}{\s_\A^q}(q') & \text{ by definition of }\s_\B' \\
		& \leq \sum_{q' \in Q}
		\probTrans{q,q'}(\s_\A(q),b) \cdot \lfp(q) & \text{ since }\val{\Aconc}{\s_\A}(q) \leq \val{\Aconc}{}(q) = \lfp(q)\\
		& = \outM_{\gameNF{\formNF_q}{\mu_\lfp}}(\s_\A(q),b) & \text{ by Proposition~\ref{prop:valuation} } \\
		& = \va_{\gameNF{\formNF_q}{\mu_\lfp}}(\s_\A(q)) & \text{ since }\s_\B(q) \in B_{\s_\A(q)} \\
		& \leq \va_{\gameNF{\formNF_q}{\mu_\lfp}}  = \lfp(q) 
	\end{align*}
	All the above inequalities are in in fact equalities. In particular, we have $\va_{\gameNF{\formNF_q}{\mu_\lfp}}(\s_\A(q)) = \va_{\gameNF{\formNF_q}{\mu_\lfp}}$, that is, $\s_\A(q) \in \Opt_\A(\gameNF{\formNF_q}{\mu_\lfp})$. Furthermore, for all $q' \in Q$ such that $\probTrans{q,q'}(\s_\A(q),b) > 0$, we have $\val{\Aconc}{\s_\A^q}(q') = \val{\Aconc}{}(q')$. Since this holds for all $b \in B_{\s_\A(q)}$, this proves $\mathcal{P}(1)$. 
	
	Let us now assume that $\mathcal{P}(n)$ holds for some $n \in \N^*$ and consider a relevant path $\pi = \pi_1 \cdots \pi_{n+1} \in \mathsf{RP}_{\s_\A}(q)$ of length $n+1$. In particular, $\pi_1 = q$ is a relevant path of length $1 \leq n$. Furthermore, there exists an optimal action $b \in B_{\s_\A(\pi_1)}$ such that $\probTrans{\pi_1,\pi_{2}}(\s_\A(\pi_{\leq 1}),b) > 0$. Hence, by $\mathcal{P}(n)$, the residual strategy $\s_\A^q$ is optimal from $\pi_2$. Then, the path $\pi' = \pi_2 \cdots \pi_{n+1}$ is of length $n$ and is a relevant path from $\pi_2$ w.r.t. the strategy $\s_\A^q$: $\pi' \in \mathsf{RP}_{\s_\A^q}(\pi_2)$. Hence, by $\mathcal{P}(n)$: 
	\begin{itemize}
		\item the (local) strategy $\s_\A^q(\pi') = \s_\A(q \cdot \pi') = \s_\A(\pi)$ is optimal in the game form $\formNF_{\pi_{n+1}}$ w.r.t. the valuation $\mu_\lfp$: $\s_\A(\pi) \in \Opt_\A(\gameNF{\formNF_{\pi_{n+1}}}{\mu_\lfp})$;
		\item for all $b \in B_{\s_\A^q(\pi')} = B_{\s_\A(\pi)}$, for all $q'' \in Q$ such that $\probTrans{q',q''}(\s_\A^q(\pi'),b) = \probTrans{q',q''}(\s_\A(\pi),b) > 0$, the residual strategy $(\s_\A^q)^{\pi'} = \s_\A^{q \cdot \pi'} = \s_\A^\pi$ is optimal from $q''$: $\val{\Aconc}{\s_\A^{\pi}}(q'') = \val{\Aconc}{}(q'')$.
	\end{itemize}
	Hence, $\mathcal{P}(n+1)$. In fact, $\mathcal{P}(i)$ holds for all $i \in \N^*$, which proves Proposition~\ref{prop:neccessary_condition_optimal_stratey}.
\end{proof}

We can now proceed to the proof of Lemma~\ref{lem:bad_increase}.
\begin{proof}
	We consider the concurrent reachability game $\langle \Aconc,\top \rangle$ and assume that the set of states $\Bex$ is sub-maximizable, i.e. such that $\Bex \subseteq \SubOS_\A$. Now, consider a state $q \in Q \setminus \Scr(\Bex)$. Let us assume towards a contradiction that there exists a Player $\A$ strategy $\s_\A \in \SetStrat{\Aconc}{\A}$ that is optimal from $q$: $\val{\Aconc}{\s_\A}(q) = \val{\Aconc}{}(q)$. We exhibit a sequence of Player $\B$ strategies $(\s_i)_{i \in \N} \in (\SetStrat{\Aconc}{\B})^\N$ such that, for all $i \in \N$, the strategy $\s_i$ ensures the following property $\mathcal{P}(i)$:
	\begin{itemize}
		\item for all $\pi \in Q^+$ of length at most $i-1$, we have $\s_i(\pi) = \s_{i-1}(\pi)$ (irrelevant for $i = 0,1$);
		\item for all $\pi \in Q^+$ compatible with the strategies $\s_\A,\s_i$ from $q$ (i.e. such that $\prob{q}{\s_\A,\s_i}(\pi) > 0$) and of length at most $i+1$:
		\begin{itemize}
			\item $\pi$ is a relevant path from $q$ w.r.t. the strategy $\s_\A$: $\pi \in \mathsf{RP}_{\s_\A}(q)$;
			\item $\pi \in ((Q \setminus \Scr(\Bex)) \cup \lfp^{-1}[0])^+$.
		\end{itemize}
	\end{itemize}
	An arbitrary strategy $\s_0 \in \SetStrat{\Aconc}{\B}$ ensures the property $\mathcal{P}(0)$ since $q \in Q \setminus \Scr(\Bex)$. Assume now the property $\mathcal{P}(n)$ for some $n \in \N$ is ensured by the strategy $\s_n$. We want to define $\s_{n+1}$. For all path $\pi \in Q^+$ of length at most $n$, we set $\s_{n+1}(\pi) := \s_n(\pi)$ thus ensuring the first item. Note that this also ensures that, for all path $\pi$ of length at most $n+1$, we have $\prob{q}{\s_\A,\s_{n+1}}(\pi) = \prob{q}{\s_\A,\s_n}(\pi)$. Then, consider some path $\pi = \pi_1 \cdots \pi_{n+1} \in Q^+$ of length $n+1$. If $\pi$ is not compatible with the strategies $\s_\A,\s_n$, we define $\s_{n+1}(\pi)$ arbitrarily. Otherwise,  by $\mathcal{P}(n)$, $\pi$ is a relevant path w.r.t. the strategy $\s_\A$ from $q$. Hence, by Proposition~\ref{prop:neccessary_condition_optimal_stratey}, since the strategy $\s_\A$ is assumed optimal from $q$, we have $\s_\A(\pi) \in \Opt_\A(\gameNF{\formNF_{\pi_{n+1}}}{\mu_\lfp})$ and $\s_\A(\pi) \notin \Rsk_{\pi_{n+1}}(\Bex)$ since all states $\Bex$ are sub-maximizable. In addition, also by $\mathcal{P}(n)$, we have: $\pi_{n+1} \in (Q \setminus \Scr(\Bex)) \cup \lfp^{-1}[0]$, thus there are two possibilities:
	\begin{itemize}
		\item assume that $\pi_{n+1} \in Q \setminus \Scr(\Bex)$. It follows that $\Eff_{\pi_{n+1}}(\Scr(\Bex),\Bex) = \emptyset$. Hence, $\s_\A(\pi) \in \Opt_\A(\gameNF{\formNF_{\pi_{n+1}}}{\mu_\lfp})$ and $\s_\A(\pi) \notin \Prg_{\pi_{n+1}}(\Scr(\Bex))$. Therefore, by definition of $\Prg_{\pi_{n+1}}$, there exists an optimal action $b \in B_{\s_\A(\pi)}$ such that $\delta(q,\Supp(\s_\A(\pi)),b) \cap (\Scr(\Bex))_\distribSet = \emptyset$. We set $\s_{n+1}(\pi) := b$. It follows that all states $q' \in Q$ such that $\probTrans{\pi_{n+1},q'}(\s_\A(\pi),\s_{n+1}(\pi)) > 0$ ensure $q \in Q \setminus \Scr(\Bex)$.
		\item assume now $\pi_{n+1} \in \lfp^{-1}[0]$. Let $b \in B$ be such that its value, w.r.t. the valuation $\mu_\lfp$, in the game form $\formNF_{\pi_{n+1}}$ is 0: $\va_{\gameNF{\formNF_{\pi_{n+1}}}{\mu_\lfp}}(b) = 0$. We set $\s_{n+1}(\pi) := b$. In particular, we have $\s_{n+1}(\pi) \in B_{\s_\A(\pi)}$. Furthermore, for all states $q' \in Q$ such that $\probTrans{\pi_{n+1},q'}(\s_\A(\pi),\s_{n+1}(\pi)) > 0$, we have $q \in \lfp^{-1}[0]$.
	\end{itemize}
	The strategy $\s_\B$ is defined arbitrarily on all other paths. With these choices, the property $\mathcal{P}(n+1)$ is ensured by the strategy $\s_{n+1}$. (Note that the second item holds since, on all compatible paths, the strategy $\s_\B$ plays an optimal action w.r.t. the strategy $\s_\A$). Therefore, $\mathcal{P}(i)$ is ensured by the strategy $\s_i$ for all $i \in \N$. We can then consider the limit-strategy $\s_\B \in \SetStrat{\Aconc}{\B}$ such that, for all $\pi  = \pi_1 \cdots \pi_n \in Q^+$ of length $n$ for some $n \in \N$, we have $\s_\B(\pi) = \s_n(\pi)$. Note that, the first item of the property $\mathcal{P}$ ensures that $\s_\B(\pi_{\leq i}) = \s_{n}(\pi_{\leq i})$ for all $i \leq n$. Then, any finite path $\pi \in Q^+$ compatible with the strategies $\s_\A,\s_\B$ is such that $\pi \in ((Q \setminus \Scr(\Bex)) \cup \lfp^{-1}[0])^+$. 
	
	Hence, for all $k \in \N$, for all paths $\pi \in \mathsf{Rch}^k(q,\top)$, we have $\prob{q}{\s_\A,\s_\B}(\pi) = 0$. It follows that: $\prob{q}{\s_\A,\s_\B}(k,\top) = \sum_{\pi \in \mathsf{Rch}^k(q,\top)} \prob{q}{\s_\A,\s_\B}(\pi) = 0$. Thus, we have $\prob{q}{\s_\A,\s_\B}(\top) = \lim\limits_{k \rightarrow \infty} \prob{q}{\s_\A,\s_\B}(k,\top) = 0$. Therefore, $\val{\Aconc}{\s_\A}(q) = 0$ and	the strategy $\s_\A$ is not optimal from $q$ since $\val{\Aconc}{}(q) > 0$ (as $\lfp^{-1}[0] \subseteq \Scr(\Bex)$).

		.
\end{proof}

\subsection{Proof of Lemma~\ref{lem:pos_optimal_strat}}
\label{proof:pos_optimal_strat}
Before proving this lemma, let us consider the following proposition ensuring, for all $\varepsilon > 0$ the existence of a valuation $\varepsilon$-close to $\lfp$ that strictly increases -- w.r.t. $\Delta$ -- on a given set. This will be used to specify the valuation $v$ we consider on the set of bad states. Specifically:
\begin{proposition}
	\label{prop:increasing_valuation}
	Consider a concurrent reachability game $\langle \Aconc,\top \rangle$ with its values given by the least fixed point $\lfp \in [0,1]^Q$ of the operator $\Delta: [0,1]^Q \rightarrow [0,1]^Q$, a set of states $G \subseteq Q$ such that $(Q \setminus G) \cap \lfp^{-1}[0] = \emptyset$ and $\varepsilon > 0$. There exists a valuation $v \in [0,1]^Q$ such that $v \preceq \lfp$, $\norm{\lfp}{v} \leq \varepsilon$ (the infinity norm on $[0,1]^Q$), $\restriction{v}{G} = \restriction{\lfp}{G}$ and for all $q \in Q \setminus G$: 
	$\Delta(v)(q) = \va_{\gameNF{\formNF_q}{\mu_{v}}}
	> v(q)$.
\end{proposition}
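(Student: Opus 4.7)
If $Q \setminus G$ is empty, $v := \lfp$ vacuously satisfies the conclusion. Otherwise, finiteness of $Q$ together with the hypothesis gives $m := \min_{q \in Q \setminus G} \lfp(q) > 0$. The plan is to construct $v$ as $\lfp$ perturbed downward on $Q \setminus G$ by a suitably chosen (non-uniform) vector, keeping $v|_G = \lfp|_G$.

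\textbf{Why a uniform perturbation is not enough.} Setting $v = \lfp - \eta \mathbf{1}_{Q \setminus G}$ only yields the weak bound $\Delta(v)(q) \geq v(q)$, via the $1$-Lipschitzness of $\Delta$ (Proposition~\ref{prop:property_Delta}) together with $\Delta(\lfp) = \lfp$. Equality with $v(q)$ does occur: whenever at $q$ every Player-$\A$ optimal local strategy in $\gameNF{\formNF_q}{\mu_\lfp}$ can be forced by Player $\B$ into Nature states supported entirely on $Q \setminus G$, the Lipschitz bound is tight and $\Delta(v)(q) = v(q)$. A weighted, non-uniform perturbation is therefore necessary.

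\textbf{Core construction.} Seek a weight vector $u \in [0,1]^Q$ with $u|_G = 0$ and $u(q) > 0$ for $q \in Q \setminus G$, then set $v := \lfp - \eta u$ for $\eta > 0$ small enough that $\eta \, \|u\|_\infty \leq \varepsilon$ and $v \succeq 0$. From $\mu_v(d) = \mu_\lfp(d) - \eta \sum_{q'} d(q') u(q')$, for any $q \in Q \setminus G$ and any Player-$\A$ strategy $\sigma^q_\A$ that is optimal in $\gameNF{\formNF_q}{\mu_\lfp}$, Proposition~\ref{prop:valuation} yields
\[
\Delta(v)(q) \;\geq\; \inf_{\sigma_\B} \outM_{\gameNF{\formNF_q}{\mu_v}}(\sigma^q_\A,\sigma_\B) \;\geq\; \lfp(q) - \eta \, \sup_{\sigma_\B} \sum_{q' \in Q \setminus G} \probTrans{q,q'}(\sigma^q_\A,\sigma_\B) \, u(q') .
\]
Consequently $\Delta(v)(q) > v(q) = \lfp(q) - \eta u(q)$ whenever one can choose, simultaneously at every $q \in Q \setminus G$, a local $\lfp$-optimum $\sigma^q_\A$ making $u$ \emph{strictly super-harmonic}, i.e.\ $u(q) > \sup_{\sigma_\B} \sum_{q' \in Q \setminus G} \probTrans{q,q'}(\sigma^q_\A,\sigma_\B) \, u(q')$.

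\textbf{Main obstacle.} The crux is producing such a $u$, and this is precisely where the hypothesis $(Q \setminus G) \cap \lfp^{-1}[0] = \emptyset$ enters. Since $\lfp(q) > 0$ for every $q \in Q \setminus G$, no Player-$\A$ positional strategy $\s_\A$ that plays locally $\lfp$-optimally at every state can admit a Player-$\B$ response giving rise to a bottom strongly connected component contained in $Q \setminus G$: such a component would be an end component satisfying Proposition~\ref{prop:value_end_component}, trapped away from $\top$, and would therefore force value $0$ at all its states, contradicting $\lfp > 0$ there. Hence some positional $\s^*_\A$, with $\s^*_\A(q)$ chosen among local $\lfp$-optima, makes $Q \setminus G$ transient against every positional Player-$\B$ strategy (Proposition~\ref{prop:end_in_EC}). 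A strict super-harmonic $u$ is then built explicitly from this transience, for instance by setting $u(q) := 1 - \max_{\s_\B} \prob{q}{\s^*_\A,\s_\B}(\text{the play stays in } Q \setminus G \text{ for the first } N \text{ steps})$, with $N$ chosen large enough (using finiteness of the set of positional Player-$\B$ strategies and of $Q$) that this maximum is uniformly $< 1$; the super-harmonic inequality for this $u$ then follows by conditioning on the first step. Taking $\eta \in (0, \min(\varepsilon, m)/\|u\|_\infty)$ delivers a $v$ meeting all four required properties, and the strict inequality on $Q \setminus G$ follows from the displayed estimate.
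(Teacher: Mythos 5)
There is a genuine gap in the \textbf{Main obstacle} step, and it is fatal to the construction. You claim that some positional Player-$\A$ strategy $\s^*_\A$ choosing, at every $q \in Q\setminus G$, a local optimum of $\gameNF{\formNF_q}{\mu_\lfp}$ makes $Q\setminus G$ transient, and you justify this by saying that a BSCC inside $Q\setminus G$ ``would force value $0$ at all its states, contradicting $\lfp>0$ there.'' That is not what Proposition~\ref{prop:value_end_component} gives: it only says that $\lfp$ is \emph{constant} on an end component of the induced MDP (and equals the value of the local strategy there), not that it is zero. A trapped end component forces $\val{\Aconc}{\s^*_\A}=0$ on its states, but this does not contradict $\lfp>0$ — that is exactly the phenomenon the whole paper is about. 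The paper's own running example (Figure~\ref{fig:arbitrarilyClose}, with $G=\{\top,\bot\}$, $\lfp(q_0)=1$) is a counterexample to your transience claim: the \emph{unique} local optimum at $q_0$ w.r.t.\ $\mu_\lfp$ plays the first row with probability $1$, and Player $\B$'s first column then traps the play in $\{q_0\}\subseteq Q\setminus G$. Consequently no strictly super-harmonic $u$ supported on $Q\setminus G$ can exist relative to $\lfp$-optimal local strategies (one would need $u(q_0)>u(q_0)$), so the weight vector your argument hinges on does not exist in general. Note that the proposition is still true in that example, but only because the optimal strategy of the \emph{perturbed} game $\gameNF{\formNF_{q_0}}{\mu_v}$ puts positive mass on the second row, i.e.\ is not $\lfp$-optimal; your lower bound $\Delta(v)(q)\geq\inf_{\sigma_\B}\outM_{\gameNF{\formNF_q}{\mu_v}}(\sigma_\A^q,\sigma_\B)$ with a fixed $\lfp$-optimal $\sigma_\A^q$ throws away precisely this slack.

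For comparison, the paper avoids committing to any Player-$\A$ strategy altogether: it replaces $\Delta$ by the operator $\tilde\Delta$ that is frozen to $\lfp$ on $G$, and proves the abstract Proposition~\ref{prop:increasing_valuation_aux} for any monotone $1$-Lipschitz self-map of $[0,1]^n$ with a positive least fixed point, by starting from $w=\lfp-\iota$ (which satisfies $w\preceq f(w)$ by Lipschitzness), using Kleene iteration $f^k(w)\to\lfp\succ w$ to get $w\prec f^k(w)$ for some $k$, and then an induction on $k$ that interpolates a point $w'$ with $w'\prec f(w')$ componentwise. If you want to salvage a strategy-based proof, you would have to allow the local strategies to be slightly sub-optimal for $\mu_\lfp$ (optimal for the perturbed valuation instead), at which point the bookkeeping essentially reproduces the operator-level argument.
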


The proof of this proposition is rather technical and relies on the analytical properties of the function $\Delta$
, hence we proceed to the proof of Lemma~\ref{lem:pos_optimal_strat} while admitting (for now) Proposition~\ref{prop:increasing_valuation}.
\begin{proof}
	Let $\varepsilon_i > 0$. We want to exhibit a Player $\A$ positional strategy $\s_\A \in \SetPosStrat{\Aconc}{\A}$ that is optimal from all states $q \in \Scr(\Bad)$: $\val{\Aconc}{\s_\A}(q) = \val{\Aconc}{}(q) = \lfp(q)$. Consider some $q \in \Scr(\Bad)$. If $\lfp(q) = 0$ or $q = \top$, then any Player $\A$ strategy is optimal from $q$. Now, we assume that $\lfp(q) > 0$ and $q \neq \top$. In that case, we have $q \in \cup_{n \in \N} \Scr_n(\Bad)$. Let $i \in \N$ be the smallest index such that $q \in \Scr_{i}(\Bad)$. We have $i \geq 1$ since $q \neq \top$ and $\Scr_{0}(\Bad) = \{ \top \}$. Therefore, by definition of $\Scr_{i}(\Bad)$, there is an efficient strategy $\sigma_q \in \Eff_q(\Scr_{i-1}(\Bad),\Bad)$ at state $q$. We set $\s_\A(q) := \sigma_q$. 
	
	Now, we want to define the valuation $v \in [0,1]^Q$ of the states that the strategy $\s_\A$ will guarantee on all states (in $\Scr(\Bad)$ and $\Bad$). Since we want the strategy $\s_\A$ to be optimal from all states in $\Scr(\Bad)$, we will consider a valuation $v$ such that $\restriction{v}{\Scr(\Bad)} := \restriction{\lfp}{\Scr(\Bad)}$. However, any state in $\Bad$ is sub-maximizable, therefore the strategy $\s_\A$ cannot guarantee the valuation $\lfp$ from states in $\Bad$. However, it can guarantee a valuation that is $\varepsilon$-close to $\lfp$ for a well chosen $\varepsilon$. Specifically, consider a state $q \in \Scr(\Bad)$ and consider the smallest difference between the value of a non-optimal action (in $\formNF_q$) and the value of the local strategy $\s_\A(q)$. That is, we let:
	\begin{displaymath}
		\eta_q := \min_{b \in B \setminus B_{\s_\A(q)}} \outM_{\gameNF{\formNF_q}{\mu_\lfp}}(\s_\A(q),b) - \va_{\gameNF{\formNF_q}{\mu_{\lfp}}}(\s_\A(q)) > 0
	\end{displaymath}
	If the set $B \setminus B_{\s_\A(q)}$ is empty, then we set $\eta_q = 1$. Furthermore, $\eta_q > 0$ by definition of the set of optimal actions $B_{\s_\A(q)}$. We consider this construction on all states $q \in \Scr(\Bad)$ and we let $\eta := \min_{q \in \Scr(\Bad)} \eta_q > 0$ since there a finite number of states. Let $v$ be a valuation such as in Proposition~\ref{prop:increasing_valuation} for $G := \Scr(\Bad)$ (note that $(Q \setminus G) \cap \lfp^{-1}[0] = \emptyset$) and $\varepsilon = \min(\eta,\varepsilon_i) > 0$.
	
	For any state $q \in \Bad$
	, by definition of the valuation $v$, we have $\Delta(v)(q) = \va_{\gameNF{\formNF_q}{\mu_v}} > v(q)$. We set $\s_\A(q)$ such that $\va_{\gameNF{\formNF_q}{\mu_v}}(\s_\A(q)) > v(q)$. The strategy $\s_\A$ is now completely defined as it is positional and defined on all states in $Q$. Let us show that the strategy $\s_\A$ locally dominates the valuation $v$. Straightforwardly, for all states $q \in \Bad$, we have $\outM_{\gameNF{\formNF_q}{\mu_v}}(\s_\A(q),b) \geq v(q)$. Consider now some state $q \in \Scr(\Bad)$. Recall that the set $\Bad_\distribSet$ refers to the set of Nature states whose support intersect $\Bad$: $\Bad_\distribSet = \{ d \in \distribSet \mid \Supp(d) \cap \Bad \neq \emptyset \}$. In particular, for all Nature states $d \in \distribSet \setminus \Bad_\distribSet$ that is not in that set, we have $\Supp(d) \subseteq Q \setminus \Bad = \Scr(\Bad)$. Hence, since $\restriction{v}{\Scr(\Bad)} = \restriction{\lfp}{\Scr(\Bad)}$:
	\begin{displaymath}
		\mu_v(d) = \sum_{q \in Q} \distribFunc(q)(q) \cdot v(q) = \sum_{q \in \Supp(d)} \distribFunc(q)(q) \cdot \lfp(q) = \mu_\lfp(q)
	\end{displaymath}
	Furthermore, the value $\va_{\gameNF{\formNF_q}{\mu_v}}(\s_\A(q))$ of the local strategy $\s_\A(q)$ is equal to:
	\begin{displaymath}
		\va_{\gameNF{\formNF_q}{\mu_v}}(\s_\A(q)) = \min_{b \in B} \outM_{\gameNF{\formNF_q}{\mu_\lfp}}(\s_\A(q),b)
	\end{displaymath}
	Hence, we consider some $b \in B$. There are two possibilities:
	\begin{itemize}
		\item Assume that $b \in B_{\s_\A(q)}$. We have $\s_\A(q) \in \Eff_q(\Scr_{i-1}(\Bad),\Bad)$, therefore $\s_\A(q) \notin \Rsk_q(\Bad)$. Hence, by definition of $\Rsk$, we have $\delta(q,\Supp(\s_\A(q)),b) \cap \Bad_\distribSet = \emptyset$. That is, for all actions $a \in \Supp(\s_\A(q))$, we have $\delta(q,a,b) \notin \Bad_\distribSet$ and $\mu_v(\delta(q,a,b)) = \mu_\lfp(\delta(q,a,b))$. Therefore:
		\begin{align*}
			\outM_{\gameNF{\formNF_q}{\mu_v}}(\s_\A(q),b) & = \sum_{a \in A} \s_\A(q)(a) \cdot \mu_v \circ \delta(q,a,b)\\
			& = \sum_{a \in \Supp(\s_\A(q))} \s_\A(q)(a) \cdot \mu_v \circ \delta(q,a,b) \\
			& = \sum_{a \in \Supp(\s_\A(q))} \s_\A(q)(a) \cdot \mu_\lfp \circ \delta(q,a,b) \\
			& = \outM_{\gameNF{\formNF_q}{\mu_\lfp}}(\s_\A(q),b) =  \va_{\gameNF{\formNF_q}{\mu_{\lfp}}}(\s_\A(q)) = \lfp(q) = v(q)
		\end{align*}
		\item Assume now that $b \in B \setminus B_{\s_\A(q)}$. Since $\eta_q \geq \eta$ and by choice of $v$, we have $\lfp - \eta_q \preceq v$ and $\mu_\lfp - \eta_q \preceq \mu_v$. Thus, by Observation~\ref{obs:smaller_valuation_smaller_outcome}:
		\begin{displaymath}
			\outM_{\gameNF{\formNF_q}{\mu_v}}(\s_\A(q),b) \geq \outM_{\gameNF{\formNF_q}{\mu_\lfp}}(\s_\A(q),b) - \eta_q \geq \va_{\gameNF{\formNF_q}{\mu_{\lfp}}}(\s_\A(q)) = \lfp(q) = v(q)
		\end{displaymath}
	\end{itemize}
	Overall, we have $\outM_{\gameNF{\formNF_q}{\mu_v}}(\s_\A(q),b) \geq v(q)$. As this holds for all $b \in B$, we have $\va_{\gameNF{\formNF_q}{\mu_v}}(\s_\A(q)) \geq v(q)$. This holds for all states $q \in \Scr(\Bad)$. We obtain that the strategy $\s_\A$ locally dominates the valuation $v$.
	
	Let us now apply Proposition~\ref{prop:sufficient} to show that the strategy $\s_\A$ guarantees the valuation $v$. Consider an EC $H = (Q_H,\beta)$ in the MDP induced by the strategy $\s_\A$ such that $Q_h \neq \{ \top \}$. For all $q \in \Bad$, by definition of the strategy $\s_\A$, we have $\va_{\gameNF{\formNF_q}{\mu_v}}(\s_\A(q)) > v(q)$. Hence, by Proposition~\ref{prop:value_end_component}, we have $Q_H \cap \Bad = \emptyset$. Now, assume towards a contradiction that $Q_H \cap (\cup_{n \in \N} \Scr_n(\Bad)) \neq \emptyset$. Let $i \in \N$ be the smallest index such that $Q_H \cap \Scr_i(\Bad) \neq \emptyset$. Note that $i \geq 1$ since $\Scr_0(\Bad) = \{ \top \}$ and $Q_H \neq \{ \top \}$ by assumption. Recall that the local strategy $\s_\A(q)$ is chosen efficient, i.e.: $\s_\A(q) \in \Eff_q(\Scr_{i-1}(\Bad),\Bad)$. In particular, we have $\s_\A(q) \in \Prg_q(\Scr_{i-1}(\Bad))$. That is, for all $b \in B_{\s_\A(q)}$, we have $\delta(q,\Supp(\sigma_\A),b) \cap (\Scr_{i-1}(\Bad))_\distribSet \neq \emptyset$. Now, let $b \in \beta(q) \neq \emptyset$. 
	\begin{itemize}
		\item We argue that $b \in B_{\s_\A(q)}$. Proposition~\ref{prop:value_end_component} gives that there is a $v_H \in [0,1]$ such that all states $q' \in Q_H$ are such that $v_H = v(q') = \lfp(q')$. Therefore, any Nature state $d \in \distribSet_H$ that is compatible with the EC $H$ is such that: $\mu_v(d) = v_H = \mu_\lfp(d)$. Furthermore, note that $\delta(q,\Supp(\s_\A(q)),b) \subseteq \distribSet_H$. Therefore:
		\begin{displaymath}
			\outM_{\gameNF{\formNF_q}{\mu_\lfp}}(\s_\A(q),b) = \sum_{a \in \Supp(\s_\A(q))} \s_\A(q)(a) \cdot \mu_\lfp \circ \delta(q,a,b) = \sum_{a \in \Supp(\s_\A(q))} \s_\A(q)(a) \cdot v_H = v_H = \lfp(q)
		\end{displaymath}
		Furthermore, $\lfp(q) = \va_{\gameNF{\formNF_q}{\mu_{\lfp}}}(\s_\A(q))$ since $\s_\A(q) \in \Opt_\A(\gameNF{\formNF_q}{\mu_\lfp})$. That is, $\outM_{\gameNF{\formNF_q}{\mu_\lfp}}(\s_\A(q),b) = \va_{\gameNF{\formNF_q}{\mu_{\lfp}}}(\s_\A(q))$ and $b \in B_{\s_\A(q)}$.
		\item We argue that $\delta(q,\Supp(\sigma_\A),b) \cap (\Scr_{i-1}(\Bad))_\distribSet = \emptyset$. 	Let $d \in \delta(q,\Supp(\s_\A(q)),b)$ be a Nature state that is compatible with the EC $H$. We have $\Supp(d) \subseteq Q_H$. Furthermore, by minimality of $i$, we have $Q_H \cap \Scr_{i-1}(\Bad) = \emptyset$. Therefore, $\Supp(d) \cap \Scr_{i-1}(\Bad) = \emptyset$. That is, $d \notin (\Scr_{i-1}(\Bad))_\distribSet$. As this holds for all such Nature states $d \in \delta(q,\Supp(\s_\A(q)),b)$, it follows that $\delta(q,\Supp(\sigma_\A),b) \cap (\Scr_{i-1}(\Bad))_\distribSet = \emptyset$.		
	\end{itemize}
	Hence the contradiction. In fact, $Q_H \cap (\cup_{n \in \N} \Scr_n(\Bad)) = \emptyset$. That is, $Q_H \subseteq \lfp^{-1}[0]$. As this holds for all ECs that is not the target $\top$, we can conclude by applying Proposition~\ref{prop:sufficient}.
\end{proof}

Consider now Proposition~\ref{prop:increasing_valuation}. In fact, we prove a slightly more general result on arbitrary non-decreasing 1-Lipschitz functions. 
\begin{proposition}
	\label{prop:increasing_valuation_aux}
	Let $n \geq 1$. Consider a function $f: [0,1]^n \rightarrow [0,1]^n$ that is non-decreasing and $1$-Lipschitz
	. Assume that its lowest fixed point $m \in [0,1]$ is such that, for all $i \in \llbracket 1,n \rrbracket$, we have $m(i) > 0$. Then, for all $\varepsilon > 0$, there exists a valuation $v \in [0,1]^n$ such that $v \preceq \lfp$, $\norm{\lfp}{v} \leq \varepsilon$ and for all $i \in \llbracket 1,n \rrbracket$: $f(v)(i) > v(i)$.
\end{proposition}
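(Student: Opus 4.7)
The plan is to find a valuation $v \preceq m$ within $\varepsilon$ of $m$ on which the drift $f(v) - v$ is strictly positive in every coordinate; I construct $v$ from iterates of $f$ started near $m$. Fix a small $\delta > 0$ with $\delta \leq \min_i m(i)$, and set $v_0 := m - \delta \mathbf{1} \in [0,1]^n$. The $1$-Lipschitz property of $f$ in the infinity norm gives $f(v_0)(i) \geq m(i) - \delta = v_0(i)$ for every $i$, so $f(v_0) \succeq v_0$. By monotonicity, the iterates $v_k := f^k(v_0)$ form a non-decreasing sequence bounded above by $m$ and hence converge to some fixed point $m^\star$ of $f$ with $v_0 \preceq m^\star \preceq m$; minimality of $m$ forces $m^\star = m$, so $v_k(i) \to m(i)$ coordinatewise.

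Next, for each coordinate $i$, the non-decreasing sequence $v_k(i)$ has limit $m(i) > v_0(i)$, so some smallest index $N_i$ satisfies $v_{N_i}(i) > v_0(i)$. Letting $N := \max_i N_i$, monotonicity yields $v_N \succ v_0$ in every coordinate. Now define the cumulative residual $h := \sum_{k=0}^{N-1}(m - v_k)$ and set $v := m - h$. Since $0 \preceq m - v_k \preceq \delta \mathbf{1}$, we have $\|h\|_\infty \leq N\delta$; for $\delta$ chosen small enough so that $N\delta \leq \varepsilon$ and $N\delta \leq \min_i m(i)$ (the resulting $N$ can be bounded uniformly for small $\delta$ via the local behavior of $f$ near $m$), one ensures $v \in [0,1]^n$, $v \preceq m$, and $\|m - v\|_\infty \leq \varepsilon$.

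The main step is to verify $f(v) \succ v$ strictly coordinatewise. In the \emph{affine} case $f(w) = Aw + b$ with $Am + b = m$, a telescoping computation yields
\[
f(v) - v \;=\; (I - A)\,h \;=\; (m - v_0) - (m - v_N),
\]
which is strictly positive in every coordinate by the previous step. For a general monotone $1$-Lipschitz $f$, this clean identity becomes an inequality that must be verified via a coordinatewise bookkeeping argument combining the monotonicity of $f$ (to lower-bound the cumulative drift) and the $1$-Lipschitz constraint (to control the non-linear defect). Extending the affine identity rigorously to the general setting, that is, turning the ``eventual'' strict dominance $v_N \succ v_0$ from the iteration into a ``one-step'' strict dominance $f(v) \succ v$ via the cumulative perturbation $h$, is the hardest part of the proof; it is needed precisely because simpler choices such as $v = v_{N-1}$ can fail, as alternating examples (where each single iteration strictly increases only some coordinates) show.
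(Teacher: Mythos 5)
Your setup matches the paper's: you start from $v_0 = m - \delta\mathbf{1}$, use the $1$-Lipschitz property to get $f(v_0) \succeq v_0$, iterate, invoke minimality of $m$ to conclude $f^k(v_0) \to m$, and extract an $N$ with $f^N(v_0) \succ v_0$ coordinatewise. This is exactly how the paper's proof begins and ends. The problem is the middle: the entire content of the proposition is the passage from ``strict increase after $N$ iterations'' to ``strict increase after one iteration at some nearby point,'' and you only establish this for affine $f$ (where the telescoping identity $f(v)-v=(I-A)h=v_N-v_0$ is available) while explicitly conceding that the general monotone $1$-Lipschitz case ``must be verified via a coordinatewise bookkeeping argument'' that you do not supply. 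That unverified step is precisely where the paper does its work: it proves by induction on $k$ a statement $\mathcal{P}(k)$ reducing ``$w \prec f^{k+1}(w)$ coordinatewise'' to ``$w' \prec f^{k}(w')$ coordinatewise'' for a modified $w'$, obtained by splitting the coordinates into those where $w(i)=f^k(w)(i)$ and those where $w(i)<f^k(w)(i)$, and lifting the latter to $f^k(w)(i)-m/2$ for a carefully chosen margin $m$. Your cumulative-residual candidate $v = m - \sum_{k=0}^{N-1}(m-v_k)$ is not shown to satisfy $f(v)\succ v$ outside the affine case, so the proof is incomplete at its central point.

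There is a second, smaller gap: you need $N\delta \leq \varepsilon$ and $N\delta \leq \min_i m(i)$, where $N$ depends on $\delta$, and you assert without argument that ``$N$ can be bounded uniformly for small $\delta$ via the local behavior of $f$ near $m$.'' For a general monotone $1$-Lipschitz $f$ this is not obvious (nothing prevents the number of iterations needed for strict coordinatewise increase from blowing up as $v_0 \to m$), and without it the construction is circular: $\delta$ must be chosen before $N$ is known, yet the constraints couple them. The paper's inductive scheme sidesteps this entirely because its perturbations are controlled by the gaps $f^{k+1}(w)(i)-f^k(w)(i)$ at the current stage rather than by a product $N\delta$.
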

\begin{proof}
	First, 	let us show by induction on $k$ the following property $\mathcal{P}(k)$: assume that there exists a vector $w \in [0,1]^Q$ such that $w \preceq \lfp$, $w \preceq f(w)$ and for all $i \in \llbracket 1,n \rrbracket$, $w(q) < f^k(w)(q)$. Then, there exists $w' \in [0,1]^n$ such that $w \preceq w' \preceq \lfp$ and for all $i \in \llbracket 1,n \rrbracket$, $w'(i) < f(w')(i)$. 
	
	The property $\mathcal{P}(1)$ straightforwardly holds. Consider now some $k \geq 1$ and assume that $^\mathcal{P}(k)$ holds and assume that there is a $w \in [0,1]^n$ such that $w \preceq \lfp$, $w \preceq f(w)$ and for all $i \in \llbracket 1,n \rrbracket$, $w(i) < f^{k+1}(w)(i)$. Note that for all $j \in \N$, we have $f^{j}(w) \preceq \lfp$. Now, let $n_= = \{ i \in \llbracket 1,n \rrbracket \mid w(i) = f^{k}(i) \}$ and $n_\uparrow = \llbracket 1,n \rrbracket \setminus n_= = \{ i \in \llbracket 1,n \rrbracket \mid w(i) < f^{k}(w)(i) \}$. We define:
	\begin{displaymath}
		m_= := \min_{i \in n_=} f^{k+1}(w)(i) - f^{k}(w)(i) = \min_{i \in n_=} f^{k+1}(w)(q) - w(q) > 0 
	\end{displaymath}
	and:
	\begin{displaymath}
		m_\uparrow := \min_{i \in n_\uparrow} f^{k}(w)(i) - w(i) > 0 
	\end{displaymath}
	Let $m := \min(m_=,m_\uparrow)$ and $w' \in [0,1]^n$ be such that:
	\begin{itemize}
		\item $\restriction{w'}{n_=} = \restriction{w}{n_=} = \restriction{f^k(w)}{n_=}$;
		\item $\restriction{w'}{n_\uparrow} = \restriction{f^k(w)}{n_\uparrow} - m/2 \succeq \restriction{w}{n_\uparrow}$.
	\end{itemize}
	With this choice, we have $w' \preceq f^k(w) \preceq \lfp$. Furthermore, we have:
	\begin{itemize}
		\item $w \preceq w'$;
		\item $f^k(w) - m/2 \preceq w'$.
	\end{itemize}
	Furthermore, note that $\norm{f^{k+1}(w)}{f(f^k(w) - m/2)} \leq \norm{f^k(w)}{f^k(w)-m/2} = m/2$. Hence, for all $i \in \llbracket 1,n \rrbracket$, we have: $f^{k+1}(w)(i) - m/2 \leq f(f^k(w) - m/2)(i)$. Now, let us show that $w' \preceq f(w')$. Let $i \in \llbracket 1,n \rrbracket$:
	\begin{itemize}
		\item if $i \in n_=$: $w'(i) = w(i) \leq f(w)(i) \leq f(w')(i)$;
		\item if $i \in n_\uparrow$: $w'(i) = f^{k}(w)(i) - m/2 \leq f^{k+1}(w)(i) - m/2 \leq f(f^k(w) - m/2)(i) \leq f(w')(i)$.
	\end{itemize}
	Finally, let us show that, for all $i \in \llbracket 1,n \rrbracket$, we have $w'(i) < f^k(w')(i)$. Let $i \in \llbracket 1,n \rrbracket$.
	\begin{itemize}
		\item if $i \in n_=$: $w'(i) = w(i) \leq f^{k+1}(w)(i) - m < f^{k+1}(w)(i) - m/2 \leq f(f^{k}(w) - m/2)(i) \leq f(w')(i) \leq f^{k}(w')(i)$;
		\item if $i \in n_\uparrow$: $w'(i) = f^k(w)(i) - m/2 < f^k(w)(i) \leq f^k(w')(i)$.
	\end{itemize}
	We can then apply $\mathcal{P}(k)$ on $w'$ to exhibit a vector $w'' \in [0,1]^Q$ such that $w \preceq w' \preceq w'' \preceq \lfp$, $w'' \preceq f(w'')$ and for all $i \in \llbracket 1,n \rrbracket$, $w''(i) < f(w'')(i)$. Overall, $\mathcal{P}(k+1)$ holds and $\mathcal{P}(j)$ holds for all $j \in \N$.
	
	Now, let $\eta := \min_{i \in \llbracket 1,n \rrbracket} \lfp(i) > 0$, $\iota := \min(\eta,\varepsilon) > 0$ and $w \in [0,1]^n$ be the valuation such that for all $i \in \llbracket 1,n \rrbracket$, we have $w(i) := \lfp(i) - \iota < \lfp(i)$. First, let us argue that $w \preceq f(w)$. Assume towards a contradiction that there is some $i \in \llbracket 1,n \rrbracket$ such that $f(w)(i) < w(i)$. Then, $f(w)(i) \leq f(\lfp)(i)$ since $w \preceq \lfp$. Furthermore:
	\begin{displaymath}
		\lfp(i) = f(\lfp)(i) \leq f(w)(i) + \norm{\lfp}{w} < w(i) + \iota = \lfp(q)
	\end{displaymath}
	Hence the contradiction. In fact, $w(i) \leq f(w)(i)$ for all $i \in \llbracket 1,n \rrbracket$. Thus, $w \preceq f(w)$. Now, consider the sequence $(w_n)_{n \in \N}$ defined by $w_0 := w$ and for all $k \in \N$, $w_{k+1} := f(w_k) = f^{k+1}(w_0)$. We have, for all $k \in \N$, $w_k \preceq w_{k+1}$. Hence, this sequence converges. In fact, its limit is equal to $\lfp$ (this directly derives from Kleene fixed-point theorem).
	%
	
	We can conclude that there exists a $k \in \N$ such that, for all $i \in \llbracket 1,n \rrbracket$, we have $w(i) < w_k(i) = f^k(w)(i)$ since $w(i) < \lfp(i)$. We can then apply $\mathcal{P}(k)$ to obtain a valuation $v \in [0,1]^k$ such that $w \preceq v \preceq \lfp$ and for all $i \in \llbracket 1,n \rrbracket$, $f(v)(i) > v(i)$. Furthermore, since $\norm{\lfp}{v} \leq \varepsilon$, we have $\norm{\lfp}{v} \leq \varepsilon$.
\end{proof}

The proof of Proposition~\ref{prop:increasing_valuation} then follows.
\begin{proof}
	Consider some set of states $G \subseteq Q$ such that $(Q \setminus G) \cap \lfp^{-1}[0] = \emptyset$ and $\varepsilon > 0$. 
	The goal is to find a valuation $v \in [0,1]^Q$ such that $v \preceq \lfp$, $\norm{\lfp}{v} \leq \varepsilon$, $\restriction{v}{G} = \restriction{\lfp}{G}$ and for all $q \in Q \setminus G$, $\Delta(v)(q) = \va_{\gameNF{\formNF_q}{\mu_{v}}} > v(q)$.

	Let us define the function $\tilde{\Delta}: [0,1]^Q \rightarrow [0,1]^Q$ by, for all $v \in [0,1]^Q$ and $q \in Q$, $\tilde{\Delta}(v)(q) := \lfp(q)$ if $q \in G$ and $\tilde{\Delta}(v)(q) := \Delta(v)(q)$ otherwise. Note that, as the function $\Delta$, $\tilde{\Delta}$ is non-decreasing and 1-Lipschitz. We can then apply Proposition~\ref{prop:increasing_valuation_aux} to exhibit such a valuation $v$.

\end{proof}

\subsection{Proof of Theorem~\ref{thm:positional_optimal}}
\label{proof:positional_optimal}
\begin{proof}
	Initially, $\Bad_0 = \emptyset \subseteq \SubOS_\A$. Then, by Lemma~\ref{lem:bad_increase}, for all $i \geq 0$, we have $\Bad_{i+1} = Q \setminus \Scr(\Bad_i) \subseteq \SubOS_\A$. In particular, $\Bad = \Bad_n \subseteq \SubOS_\A$. Furthermore, by Lemma~\ref{lem:pos_optimal_strat}, there exists a Player $\A$ positional strategy from all states in $\Scr(\Bad) = Q \setminus \Bad$. Hence, $\Scr(\Bad) \subseteq \OS_\A$. As we have $Q = \Bad \uplus \Scr(\Bad) = \OS_\A \uplus \SubOS_\A$, it follows that: $\Bad = \SubOS_\A$ and $\Scr(\Bad) = \OS_\A$. Then it amounts to applying Lemma~\ref{lem:pos_optimal_strat}.
\end{proof}

\subsection{Complements on infinite games}
\label{appen:infinite_games}
First, note that the sequence of probabilities $(1/2 + 1/2^i)_{i \in \N^*}$ is decreasing and is then well-defined since we have $p_i \in [0,1]$ for all $i \in \N^*$. We consider now the game from the state $q_0$. For all $i \in \N^*$, there is a unique path $\pi_i \in Q^+$ (with a non-zero probability to occur) from $q_0$ to $c_i$ (regardless of the strategies considered): $\pi_i := \pi'_i \cdot c_i \in Q^+$ with $\pi' := q_0 \cdots q_i \in Q^+$. Consider two arbitrary strategies $\s_\A$ and $\s_\B$ for Player $\A$ and $\B$. Let us denote by $v_i \in [0,1]$ (resp. $w_i \in [0,1]$) the value w.r.t. the pair of strategies $(\s_\A,\s_\B)$ of the state $q_i$ (resp. $c_i$): 
\begin{displaymath}
	v_i = \prob{q_i}{\s_\A^{\pi'_i},\s_\B^{\pi'_i}}(\top)
\end{displaymath}
\begin{displaymath}
	w_i = \prob{c_i}{\s_\A^{\pi_i},\s_\B^{\pi_i}}(\top)
\end{displaymath}
Now, for all $k \in \N$ and $l \in \N^*$, we have the following relation between the values $v_k$ and $v_{k+l}$ of the states $q_k \in Q$ and $q_{k+l} \in Q$:
\begin{displaymath}
	v_k = \sum_{j = 0}^{l}
	\frac{1}{2^{j+1}} \cdot w_{k+j} + \frac{1}{2^{l+1}} \cdot v_{k+l}
\end{displaymath}
Given the game form at the states $c_i$, we have $w_i \leq \frac{1}{2}$ for all $i \in \N^*$. In fact, for all $l \in \N^*$, we have:
\begin{align*}
v_0 & = \sum_{j = 0}^{l}
\frac{1}{2^{j+1}} \cdot w_{j} + \frac{1}{2^{l+1}} \cdot v_{l} \\
& = \sum_{j = 0}^{l}
\frac{1}{2^{j+1}} \cdot \frac{1}{2} + \sum_{j = 0}^{l}
\frac{1}{2^{j+1}} \cdot (w_{j} - \frac{1}{2}) + \frac{1}{2^{l+1}} \cdot \frac{1}{2} + \frac{1}{2^{l+1}} \cdot (v_l - \frac{1}{2}) \\
& = \frac{1}{2} + \sum_{j = 0}^{l}
\frac{1}{2^{j+1}} \cdot (w_{j} - \frac{1}{2}) + \frac{1}{2^{l+2}} \cdot (v_l - \frac{1}{2})
\end{align*}
Note that, for all $l \in \N^*$, we have $\sum_{j = 0}^{l}
\frac{1}{2^{j+1}} \cdot (w_{j} - \frac{1}{2}) \leq 0$ with this inequality being strict if and only if there exists an $j \leq l$ such that $w_j < \frac{1}{2}$. Furthermore, $\frac{1}{2^{l+2}} \cdot (v_l - \frac{1}{2}) \underset{l \rightarrow \infty}{\longrightarrow} 0$. It follows that:
\begin{displaymath}
	v_0 \leq \frac{1}{2} \text{ and } (v_0 < \frac{1}{2} \Leftrightarrow \exists j \in \N^*,\; w_j < \frac{1}{2})
\end{displaymath}
Then, we can build a Player $\A$ strategy realizing the value $\frac{1}{2}$ from $q_0$. Indeed, it suffices to play, in $s$, for a sufficiently small $\varepsilon_i > 0$ a $\varepsilon_i$-optimal strategy (ensuring the value at least $1 - \varepsilon_i$) if the state $c_i$ has been previously seen, for some $i \in \N$. Specifically, $\varepsilon_i$ has to be chosen so that $(1 - \varepsilon_i) \cdot (1/2 + 1/2^i) \geq 1/2$. With this choice, we have $\val{\Aconc}{\s_\A^{\pi_i \cdot s_i}}(s_i) \geq \frac{1}{2}$ and it follows that $\val{\Aconc}{\s_\A^{\pi_i}}(c_i) = \frac{1}{2}$, which ensures $\val{\Aconc}{\s_\A}(q_0) = \frac{1}{2}$ for all $i \in \N$. However, a Player $\A$ positional strategy $\s_\A$ is $\varepsilon$-optimal in $s$ for some fixed $\varepsilon > 0$ that does not depend on the state $c_i$ seen. It follows that there is some $i$ such that $\val{\Aconc}{\s_\A}(s_i) < \frac{1}{2}$ and $\val{\Aconc}{\s_\A}(c_i) < \frac{1}{2}$. Then, we can conclude that $\val{\Aconc}{\s_\A}(q_0) < \frac{1}{2}$.

\subsection{Complements on computing the set of maximizable states (Theorem~\ref{thm:optimal_state_computable})}
\label{appen:computing_optimal_states}
First, we have that the value of a game in normal form can be encoded in the first order theory of the reals.
\begin{proposition}[Value of a game in normal form in the theory of the reals]
	\label{prop:value_gnf_theory_reals}
	Consider a game in normal form $\formNF$ and a value $v$. The fact that $v = \va_{\formNF}$ can be encoded in $\mathsf{FO}$-$\R$. This encodes the predicate $\mathsf{VAL}(\formNF,v)$.
\end{proposition}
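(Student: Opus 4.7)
The plan is to leverage von Neumann's minimax theorem together with the classical pure-best-response reduction, writing out the characterisation of the value as an explicit first-order sentence. Enumerating $\St_\A = \{a_1, \ldots, a_n\}$ and $\St_\B = \{b_1, \ldots, b_m\}$ (finiteness of the game form is implicit here since the value is spoken of), I would represent a mixed strategy of Player $\A$ (resp. $\B$) as a tuple of real variables $(x_1, \ldots, x_n)$ (resp. $(y_1, \ldots, y_m)$) constrained by $x_i \geq 0$ and $\sum_i x_i = 1$. The joint outcome $\outM_\formNF(\sigma_\A, \sigma_\B) = \sum_{i,j} x_i y_j \cdot \outCNF(a_i, b_j)$ is then a bilinear polynomial; treating the matrix entries $\outCNF(a_i, b_j)$ as additional free parameters ensures all coefficients are integers, as required for a genuine $\mathsf{FO}$-$\R$ formula.

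The key observation that keeps everything first-order is that best responses against a mixed strategy may always be taken pure: $\va_\formNF(\sigma_\A) = \min_{b \in \St_\B} \outM_\formNF(\sigma_\A, b)$, and dually $\va_\formNF(\sigma_\B) = \max_{a \in \St_\A} \outM_\formNF(a, \sigma_\B)$. This turns the ``against every mixed opponent'' quantification into a \emph{finite} conjunction of linear inequalities (one per pure action of the opponent), avoiding a second, nested block of real quantifiers. I would then define
\begin{align*}
\mathsf{VAL}(\formNF, v) \;:=\; & \Bigl(\exists x_1, \ldots, x_n.\; \textstyle\bigwedge_{i} x_i \geq 0 \;\wedge\; \sum_i x_i = 1 \;\wedge\; \bigwedge_{j=1}^m \sum_i x_i \cdot \outCNF(a_i, b_j) \geq v\Bigr) \\
& \wedge \Bigl(\exists y_1, \ldots, y_m.\; \textstyle\bigwedge_{j} y_j \geq 0 \;\wedge\; \sum_j y_j = 1 \;\wedge\; \bigwedge_{i=1}^n \sum_j y_j \cdot \outCNF(a_i, b_j) \leq v\Bigr).
\end{align*}
Every atom is a polynomial (in fact linear) inequality with integer coefficients in the free variables $v$, $x_i$, $y_j$ and $\outCNF(a_i, b_j)$, so the whole expression is syntactically a well-formed $\mathsf{FO}$-$\R$ sentence.

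Correctness then follows essentially for free from minimax: if both conjuncts hold with witnesses $\sigma_\A^\star, \sigma_\B^\star$, then $v \leq \va_\formNF(\sigma_\A^\star) \leq \va_\formNF \leq \va_\formNF(\sigma_\B^\star) \leq v$, forcing $v = \va_\formNF$; conversely, optimal strategies guaranteed by von Neumann's theorem witness both conjuncts whenever $v = \va_\formNF$. The only mild subtlety---and the reason this proposition is not entirely immediate from the definition---is the pure-best-response step, without which the adversarial quantifier would range over distributions rather than a finite action set; everything else is a direct transcription of Definition~\ref{def:alternative_value_game_normal_form}, so I do not anticipate a real obstacle.
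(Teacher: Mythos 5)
Your proposal is correct and matches the paper's proof essentially verbatim: the paper also encodes the two mixed strategies as existentially quantified tuples constrained to the simplex and replaces the adversarial quantification by the finite conjunction of pure-best-response inequalities $\sum_i p_i \cdot \outCNF(i,j) \geq v$ and $\sum_j q_j \cdot \outCNF(i,j) \leq v$. The only cosmetic difference is that the paper uses a single existential block for both players' strategies rather than two separate conjuncts, which is logically equivalent.
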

\begin{proof}
	A strategy for Player $\A$ is encoded via a probability associated with each available action with the constraint that the sum is equal to $1$, and similarly for Player $\B$. Then, we have $v = \va_{\formNF}$ if and only if there exists a Player $\A$ strategy $\sigma_\A$ whose value (i.e. the minimum over all actions available $b$ to Player $\B$ of the outcome of $\sigma_\A$ and $b$) is at least $v$, and similarly for a Player $\B$ strategy whose value has to be at most $v$. We assume that $\St_\A = \llbracket 1,n \rrbracket$ and $\St_\B = \llbracket 1,k \rrbracket$, the predicate $\mathsf{ValGF}(\formNF,v)$ can be encoded with the $\mathsf{FO}$-$\R$ formula:
	\begin{align*}
		\exists p_1,\ldots,p_n,\; & q_1,\ldots,q_k: \\
		& \bigwedge_{1 \leq i \leq n} (0 \leq p_i \leq 1) \wedge \sum_{i = 1}^{n} p_i = 1 \; \wedge \\
		& \bigwedge_{1 \leq j \leq k} (0 \leq q_j \leq 1) \wedge \sum_{j = 1}^{k} q_j = 1 \; \wedge \\ 
		& \bigwedge_{1 \leq j \leq k} \sum_{i = 1}^{n} p_i \cdot \outCNF(i,j) \geq v \; \wedge \\
		& \bigwedge_{1 \leq i \leq n} \sum_{j = 1}^{k} q_j \cdot \outCNF(i,j) \leq v
	\end{align*}
\end{proof}

Consider now a concurrent reachability game. We would like to encode, once two positional strategies for Player $\A$ and Player $\B$ are fixed, the value of the states in $\mathsf{FO}$-$\R$. A game where both strategies are fixed corresponds to a Markov chain. It can also be seen as another concurrent reachability game where the players have only one possible action. In that case, the value of the states is given by the least fixed point of the function $\Delta$, which can be encoded in $\mathsf{FO}$-$\R$, as stated below.
\begin{proposition}[Value in a Markov chain in the theory of the reals]
	\label{prop:value_reach_game_theory_reals}
	Consider a reachability game $\langle \Aconc,\top \rangle$ with rational distribution, a valuation $w \in [0,1]^Q$ of the states, and two positional strategies $\s_\A,\s_\B$ for both players. The fact that $w = \lfp = \val{\Aconc}{}$ can be encoded in $\mathsf{FO}$-$\R$. This encodes the predicate $\mathsf{ValReachGame}(\Aconc,\s_\A,\s_\B,w)$.
\end{proposition}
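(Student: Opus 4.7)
\begin{proofs}
The plan is to observe that, once both positional strategies $\s_\A$ and $\s_\B$ are fixed, the operator $\Delta$ of Definition~\ref{def:operator} restricts to a polynomial map $\Delta_{\s_\A,\s_\B}: [0,1]^Q \rightarrow [0,1]^Q$ whose coefficients are themselves polynomials in the probabilities defining $\s_\A$ and $\s_\B$, with rational constants. Indeed, for each $q \neq \top$ and $q' \in Q$,
\[
\probTrans{q,q'}(\s_\A(q),\s_\B(q)) \;=\; \sum_{a \in A,\, b \in B} \s_\A(q)(a) \cdot \s_\B(q)(b) \cdot \distribFunc(\delta(q,a,b))(q')
\]
is a polynomial of total degree two in the entries of $\s_\A(q),\s_\B(q)$, with rational coefficients (since $\distribFunc$ takes rational values by hypothesis). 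Hence $\Delta_{\s_\A,\s_\B}(v)(q) = \sum_{q' \in Q} \probTrans{q,q'}(\s_\A(q),\s_\B(q)) \cdot v(q')$ is jointly polynomial in the entries of $v$, $\s_\A$, and $\s_\B$, with rational coefficients.

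I will then express the predicate $\mathsf{ValReachGame}(\Aconc,\s_\A,\s_\B,w)$ as the conjunction of three $\mathsf{FO}$-$\R$ formulas with free variables corresponding to the real entries of $\s_\A$, $\s_\B$, and $w$: (i) $w \in V$, i.e.\ $w(\top) = 1$ and $0 \leq w(q) \leq 1$ for every $q \in Q$; (ii) $w$ is a fixed point of $\Delta_{\s_\A,\s_\B}$, namely $w(q) = \Delta_{\s_\A,\s_\B}(w)(q)$ for every $q \neq \top$; (iii) $w$ is the least such fixed point, encoded by a universal quantifier over $|Q|$ auxiliary reals indexed by $Q$: every $w' \in [0,1]^Q$ satisfying (i) and (ii) also satisfies $w \preceq w'$.

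Each of (i)--(iii) is a Boolean combination of polynomial (in)equalities with rational coefficients, so the whole formula lies in $\mathsf{FO}$-$\R$. By Theorem~\ref{thm:m_delta} applied to the degenerate game obtained by hard-wiring $\s_\A$ and $\s_\B$ (so that each player has a single available mixed action per state, the local interaction at $q$ being a $1 \times 1$ game form), the unique $w$ satisfying (i)--(iii) is exactly the value of the Markov chain induced by the two fixed strategies, which is what the predicate is meant to capture. I do not expect any substantive obstacle here: the only content is the polynomial encoding of $\Delta_{\s_\A,\s_\B}$, and everything else is bookkeeping.
\end{proofs}
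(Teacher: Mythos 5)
Your proposal is correct and follows essentially the same route as the paper: once both positional strategies are fixed, the fixed-point equation $w = \Delta_{\s_\A,\s_\B}(w)$ becomes a system of polynomial equalities with rational coefficients in the strategy entries and in $w$, and leastness is expressed by a universal quantification over candidate valuations. The only cosmetic difference is that the paper phrases minimality as ``no fixed point strictly below $w$ exists'' while you phrase it as ``$w$ is below every fixed point''; these coincide here by Knaster--Tarski since $\Delta$ is monotone, so both encodings are valid.
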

\begin{proof}
	We assume that $Q = \llbracket 1,n \rrbracket$ with $\top = n$. Consider an input valuation $i \in [0,1]^Q$. We encode the predicate $\mathsf{Val}^\Delta(i,o)$ stating that $o = \Delta(i)$ with the $\mathsf{FO}$-$\R$ formula below:
	\begin{align*}
		o(n) = 1 \; \wedge \bigwedge_{1 \leq i \leq n-1} 0 \leq o(i) \leq 1 \; \wedge \bigwedge_{1 \leq i \leq n-1} \mathsf{ValGF}(\formNF_i,o(i))
	\end{align*}
	with
	\begin{displaymath}
		\formNF_i = \langle A,B,\distribSet,\outCNF \rangle
	\end{displaymath}
	and\footnote{Note that this is in $\mathsf{FO}$-$\R$ because the distribution is rational.}
	\begin{displaymath}
		\outCNF(a,b) = \s_\A(i)(a) \cdot \s_\A(i)(b) \cdot \sum_{i = 1}^{n-1} \distribFunc(\delta(i,a,b)) \cdot v(i) \in [0,1]
	\end{displaymath}
	Note that this outcome function is directly encoded in the predicate $\mathsf{ValGF}(\formNF_i,o(i))$.
	
	It can then be encoded that $w$ is the least fixed point of the function $\Delta$, that is the predicate $\mathsf{ValReachGame}(\Aconc,\s_\A,\s_\B,w)$:
	\begin{align*}
		\mathsf{Val}^\Delta(w,w),\; \wedge \forall u, \bigwedge_{1 \leq i \leq n} 0 \leq u(i) \leq w(i) \wedge u \neq w \Rightarrow \lnot \mathsf{Val}^\Delta(u,u)
	\end{align*}
	Note that $u$ can be represented by a sequence $u_1,\ldots,u_n$ of values of the states. In this formula, we check that $w$ is a fixed point and that no point point smaller that $w$ is a fixed point. 
\end{proof}

We can proceed to the proof of Theorem~\ref{thm:optimal_state_computable}.
\begin{proof}
	Consider a concurrent reachability game $\langle \Aconc,\top \rangle$ with rational distribution and a state $q \in Q$. Theorem~\ref{thm:positional_optimal} gives that if $q \in \OS_\A$, there is a positional Player $\A$ strategy that is optimal from $q$. Furthermore, as already proved in~\cite{everett57}, Player $\A$ has positional $\varepsilon$-optimal strategy for all $\varepsilon > 0$. Hence:
	\begin{align*}
		q \in \OS_\A & \Leftrightarrow \exists \s_\A \in \SetPosStrat{\Aconc}{\A},\; \val{\Aconc}{\s_\A}(q) = \val{\Aconc}{}(q) \\
		& \Leftrightarrow \exists u \in [0,1],\; \exists \s_\A \in \SetPosStrat{\Aconc}{\A},\; (\val{\Aconc}{\s_\A}(q) \geq u) \wedge \forall \s_\A' \in \SetPosStrat{\Aconc}{\A},\; (u \geq \val{\Aconc}{\s_\A'}(q)) \\
	\end{align*}
	
	Now, Proposition~\ref{prop:value_reach_game_theory_reals} gives that the predicate $\mathsf{ValReachGame}(\Aconc,\s_\A,\s_\B,v)$ can be encoded as an $\mathsf{FO}$-$\R$ formula for all positional strategies $\s_\A$ and $\s_\B$ and valuation $v$ (since the distribution of Nature states is rational). This induces the following $\mathsf{FO}$-$\R$ formula:
	\begin{align*}
		\exists u,\; (\mathsf{Guarantee}(u,q) \;  \wedge \; \mathsf{AtMost}(u,q))
	\end{align*}
	with 
	\begin{displaymath}
		\mathsf{Guarantee}(u,q) := \exists \s_\A \in \SetPosStrat{\Aconc}{\A},\; \forall \s_\B \in \SetPosStrat{\Aconc}{\B},\; \exists v \in [0,1]^Q,\;  \mathsf{ValReachGame}(\Aconc,\s_\A,\s_\B,v) \wedge  u \leq v(q)
	\end{displaymath}
	and 
	\begin{displaymath}
		\mathsf{AtMost}(u,q) := 	\forall \s_\A' \in \SetPosStrat{\Aconc}{\A},\; \exists \s_\B \in \SetPosStrat{\Aconc}{\B},\; \exists v \in [0,1]^Q,\; \mathsf{ValReachGame}(\Aconc,\s_\A,\s_\B,v) \wedge v(q) \leq u
	\end{displaymath}
	We quantify over Player $\A$  and Player $\B$ positional strategies as a shortcut for quantifying over the $|Q| \cdot |A|$ and $|Q| \cdot |B|$ necessary variables to encode them (on all states, we have a set of probabilities on all actions whose sum is equal to $1$).
\end{proof}

\section{Complements on Section~\ref{sec:optimal_game_forms}}
\label{appen:optimal_game_forms}

\subsection{Proof of Lemma~\ref{prop:one_state_game_opt_equiv_correct}}
\label{proof:one_state_game_opt_equiv_correct}
First, let us formally define the three-state reachability game induced by a game form and a partial valuation of the outcomes.
\begin{definition}[One-shot reachability game]
	Let $\formNF = \langle \St_\A,\St_\B,\outComeNF,\outCNF \rangle$ be a game form and $\alpha: \outComeNF \setminus E \rightarrow [0,1]$ be a partial valuation of the outcomes. The three-state reachability game $\langle \Aconc_{(\formNF,\alpha)},\top \rangle$ induced by $\formNF$ and $\alpha$ is such that $\Aconc_{(\formNF,\alpha)} = \AConc$ with:
	\begin{itemize}
		\item $A := \St_\A$ and $B := \St_\B$;
		\item $Q := \{ q_0,\top,\bot \}$;
		\item $\distribSet := \distribSet_{q_0} \cup \{ d_{\mathsf{loop}},\top_{\mathsf{loop}},\bot_{\mathsf{loop}} \}$ with $\distribSet_{q_0} := \{ d_x \mid x \in \outComeNF \setminus E \}$;
		\item for $x \in \outComeNF \setminus E$, we have $\distribFunc(d_x)(\top) := \alpha(x)$ and $\distribFunc(d_x)(\bot) := 1 - \alpha(x)$; $\distribFunc(q_\mathsf{loop})(q_0) := 1$, $\distribFunc(\top_\mathsf{loop})(\top) := 1$ and $\distribFunc(\bot_\mathsf{loop})(\bot) := 1$.
		\item for all $a \in A$ and $b \in B$, we have $\delta(\top,a,b) := \top_\mathsf{loop}$, $\delta(\bot,a,b) := \bot_\mathsf{loop}$. Furthermore, let us define the function $g: \outComeNF \rightarrow \distribSet$ by, for all $x \in \outComeNF$, we have:
		\[ g(x) := \begin{cases} 
		d_\mathsf{loop} & \text{ if } x \in E \\
		d_{o} & \text{ otherwise }
		\end{cases}
		\]
		This function associates to each outcome its corresponding Nature state. For all $a \in A$ and $b \in B$, we set $\delta(q_0,a,b) := g \circ \outCNF(a,b)$.
	\end{itemize}
\end{definition}

Let us now proceed to the proof of Lemma~\ref{prop:one_state_game_opt_equiv_correct}.
\begin{proof}
	Let us consider a game form $\formNF$ and a partial valuation of the outcomes $\alpha: \outComeNF \setminus E \rightarrow [0,1]$. The values, in the one-shot reachability game $\langle \Aconc_{(\formNF,\alpha)},\top \rangle$, of the states are given by the valuation $\lfp \in [0,1]^Q$. First, let us show that the value $
	\lfp(q_0)$ of the state $q_0$ in the one-shot reachability game $\langle \Aconc_{(\formNF,\alpha)},\top \rangle$ is equal to the least fixed point $v_\alpha$ of the function $f_\alpha^\formNF$. For all $u \in [0,1]$, let us denote by $\lfp[u] \in [0,1]^Q$ the valuation of the states such that $\lfp[u](\top) := 1 = \lfp(\top)$, $\lfp[u](\bot) := 0 = \lfp(\bot)$ and $\lfp[u](q_0) := u$. In particular, we have $\lfp[\lfp(q_0)] = \lfp$. Furthermore:
	\begin{itemize}
		\item for all $x \in \outComeNF \setminus E$, $\mu_{\lfp[u]} \circ g(x) = \mu_{\lfp[u]}(d_x) = \alpha(x) \cdot \lfp[u](\top) + (1 - \alpha(x)) \cdot \lfp[u](\bot) = \alpha(x) = \alpha[u](x)$;
		\item for all $x \in E$, $\mu_{\lfp[u]} \circ g(x) = \mu_{\lfp[u]}(d_\mathsf{loop}) = \lfp[u](q_0) = u = \alpha[u](x)$ .
	\end{itemize}
	In fact:
	\begin{displaymath}
		\mu_{\lfp[u]} \circ g = \alpha[u]
	\end{displaymath}
	Now, let $u \in [0,1]$. We have $g \circ \outCNF = \delta(q_0,\cdot,\cdot)$ and $\mu_{\lfp[u]} \circ \delta(q_0,\cdot,\cdot) = \alpha[u] \circ \outCNF$. It follows that $\gameNF{\formNF}{\alpha[u]} = \gameNF{\formNF_{q_0}}{\mu_{\lfp[u]}}$. Hence:
	\begin{displaymath}
		f_\alpha^\formNF(u) = \va_{\gameNF{\formNF}{\alpha[u]}} = \va_{\gameNF{\formNF_{q_0}}{\mu_{\lfp[u]}}} = \Delta(\lfp[u])(q_0)
	\end{displaymath}
	That is, for all fixed point $l \in [0,1]$ of the function $f_\alpha^\formNF$, we have $\Delta(\lfp[l])(q_0) = l = \lfp[l](q_0)$. Thus, $\Delta(\lfp[l]) = \lfp(l)$. That is, the least fixed point of the function $f_\alpha^\formNF$ is equal to the value of the least fixed point of $\Delta$ in $q_0$: $v_\alpha = \lfp(q_0)$.
	
	Second, consider at which condition is the state $q_0$ maximizable (straightforwardly, all other states are maximizable in any case). We consider the first iteration of the construction of the set of bad states. We have $\Bad_0 = \emptyset$ and $\Bad_1 = Q \setminus \Scr(\Bad_0) = Q \setminus  \Scr(\emptyset)$. In fact, $q_0$ is maximizable if and only if $q_0 \in \Scr(\emptyset)$. In other words, $q_0$ is maximizable if and only if either $\lfp(q_0) = v_\alpha = 0$ or $\Eff_q(\top_\distribSet,\emptyset) = \Prg_q(\top_\distribSet) \neq \emptyset$. Assume that $\lfp(q_0) = v_\alpha > 0$. Let us show that progressive strategies are exactly reach-maximizing strategies w.r.t. to the valuation $\alpha$. Consider a strategy $\sigma_\A \in \Opt_\A(\gameNF{\formNF}{\limval{\alpha}}) = \Opt_\A(\gameNF{\formNF_{q_0}}{\mu_\lfp})$ and let $b \in B_{\sigma_\A}$. There are two possiblities:
	\begin{itemize}
		\item either, for all $a \in \Supp(\sigma_\A)$, we have $\limval{\alpha} \circ \outCNF(a,b) > 0$. In that case:
		\begin{align*}
			\outCNF(\Supp(\sigma_\A),b) \not \subseteq E & \Leftrightarrow \outCNF(\Supp(\sigma_\A),b) \cap (\outComeNF \setminus E) \neq \emptyset \\
			& \Leftrightarrow g \circ \outCNF(\Supp(\sigma_\A),b) \cap g[\outComeNF \setminus E] \neq \emptyset \\
			& \Leftrightarrow \delta(q_0,\Supp(\sigma_\A),b) \cap \distribSet_{q_0} \neq \emptyset \\
			& \Leftrightarrow \exists a \in \Supp(\sigma_\A), \delta(q_0,\Supp(\sigma_\A),b) \in \distribSet_{q_0} \\
			& \Leftrightarrow \exists a \in \Supp(\sigma_\A), \distribFunc(\delta(q_0,a,b))(\top) = \limval{\alpha}(\outCNF(a,b)) > 0 \\
			& \Leftrightarrow \delta(q_0,\Supp(\sigma_\A),b) \cap \top_{\distribSet} \neq \emptyset \\
		\end{align*}
		Overall, we have $\outCNF(\Supp(\sigma_\A),b) \not \subseteq E \Leftrightarrow \delta(q_0,\Supp(\sigma_\A),b) \cap \top_{\distribSet} \neq \emptyset$.
		\item or, there is $a_0 \in \Supp(\sigma_\A)$ such that $\limval{\alpha} \circ \outCNF(a_0,b) = 0$. Since we have $\outM_{\gameNF{\formNF}{\limval{\alpha}}}(\sigma_\A,b) = \va_{\gameNF{\formNF}{\limval{\alpha}}} = v_\alpha > 0$, it follows that there exists $a \in \Supp(\sigma_\A)$, such that $\limval{\alpha} \circ \outCNF(a,b) > v_\alpha > 0$. As $\limval{\alpha}[E] = \{ v_\alpha \}$ by definition of $\limval{\alpha}$, it follows that $\outCNF(a,b) \notin E$. Hence, $\outCNF(\Supp(\sigma_\A),b) \not \subseteq E$. Furthermore, $\distribFunc(\delta(q_0,a,b))(\top) = \limval{\alpha}(\outCNF(a,b)) > 0$. That is, $\delta(q_0,\Supp(\sigma_\A),b) \cap \top_{\distribSet} \neq \emptyset$.
	\end{itemize}
	Overall, for all $b \in B_{\sigma_\A}$, we have the equivalence $\outCNF(\Supp(\sigma_\A),b) \not \subseteq E \Leftrightarrow \delta(q_0,\Supp(\sigma_\A),b) \cap \top_{\distribSet} \neq \emptyset$. That is, the strategy $\sigma_\A$ is progressive if and only if it is reach-maximizing w.r.t. the valuation $\alpha$. In fact, we have that the state $q_0$ is maximizable if and only the game form $\formNF$ is RM w.r.t. the partial valuation $\alpha$.
\end{proof}



\subsection{Proof of Lemma~\ref{lem:correct_opt_everywhere}}
\label{proof:lem_correct_opt_everywhere}
Before proving this lemma, we state and prove a sufficient condition for the limit of a partial of the Nature states to be equal to $\mu_\lfp \in [0,1]^\distribSet$.
\begin{proposition}
	\label{prop:lim_partial_val}
	Consider a concurrent reachability game $\langle \Aconc,\top \rangle$ with the values of the states given by the valuation $\lfp \in [0,1]^Q$. Consider a value $x \in [0,1]$, a non-empty set of states $Q_x \subseteq Q$ such that, for all $q \in Q_x$ we have $\lfp(q) = x$ and a set of Nature states $\distribSet_x \subseteq \distribSet$ such that, for all $d \in \distribSet_x$, we have $\Supp(d) \subseteq Q_x$. Then, considering the partial valuation $\alpha: \distribSet \setminus \distribSet_x \rightarrow [0,1]$ of the Nature states such that $\alpha := \restriction{\mu_\lfp}{\distribSet \setminus \distribSet_x}$, there is a state $q \in Q_x$ such that $\limval{\alpha} = \mu_\lfp$ in the game form $\formNF_q$.
\end{proposition}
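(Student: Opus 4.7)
The plan is to exhibit some $q \in Q_x$ with $v^q_\alpha = x$; combined with the fact that $\limval{\alpha}=\alpha[v^q_\alpha]$ agrees with $\mu_\lfp$ on $\distribSet \setminus \distribSet_x$ by definition of $\alpha$, this gives the claimed equality everywhere.

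First I would establish the identity $\alpha[x] = \mu_\lfp$. On $\distribSet \setminus \distribSet_x$ this is by definition of $\alpha$; on $\distribSet_x$, the support hypothesis combined with $\restriction{\lfp}{Q_x} \equiv x$ yields $\mu_\lfp(d) = \sum_{q' \in \Supp(d)} \distribFunc(d)(q') \cdot x = x = \alpha[x](d)$. Consequently, for every $q \in Q_x$,
\[
f^{\formNF_q}_\alpha(x) = \va_{\gameNF{\formNF_q}{\mu_\lfp}} = \Delta(\lfp)(q) = \lfp(q) = x,
\]
so $x$ is a fixed point of each $f^{\formNF_q}_\alpha$.

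Next I would exploit that every $f^{\formNF_q}_\alpha$ is non-decreasing and $1$-Lipschitz (the value of a matrix game is $1$-Lipschitz in the payoff valuation with respect to the infinity norm, as in the proof of Proposition~\ref{prop:property_Delta}, and $\alpha[y]$ varies by $|y-y'|$ on $\distribSet_x$). Together with $f^{\formNF_q}_\alpha(x) = x$, the $1$-Lipschitz bound $|f^{\formNF_q}_\alpha(y)-x| \leq |y-x|$ gives $f^{\formNF_q}_\alpha(y) \geq y$ for all $y \in [0,x]$; and subtracting the double inequality $0 \leq f^{\formNF_q}_\alpha(y_2) - f^{\formNF_q}_\alpha(y_1) \leq y_2 - y_1$ shows that $y \mapsto f^{\formNF_q}_\alpha(y) - y$ is non-increasing on $[0,x]$. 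Hence the set of fixed points of $f^{\formNF_q}_\alpha$ in $[0,x]$ is the interval $[v^q_\alpha,\, x]$.

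Define now $F : [0,x] \to [0,x]$ by $F(y) := \max_{q \in Q_x} f^{\formNF_q}_\alpha(y)$. Since each $f^{\formNF_q}_\alpha(y) \geq y$ on $[0,x]$, a fixed point of $F$ is precisely a common fixed point of all $f^{\formNF_q}_\alpha$, so the fixed-point set of $F$ in $[0,x]$ equals $\bigcap_{q \in Q_x}[v^q_\alpha, x] = [\max_{q \in Q_x} v^q_\alpha,\, x]$. In particular, writing $v^* := \max_{q \in Q_x} v^q_\alpha$, the least fixed point of $F$ is $v^*$.

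It suffices to show $v^* = x$, which I would do by contradiction. Suppose $v^* < x$ and define $w \in [0,1]^Q$ by $w(q) := v^*$ for $q \in Q_x$ and $w(q) := \lfp(q)$ otherwise, so $w \prec \lfp$. For $q \notin Q_x$, monotonicity of $\Delta$ gives $\Delta(w)(q) \leq \Delta(\lfp)(q) = w(q)$. For $q \in Q_x$, the support hypothesis gives $\mu_w(d) = v^*$ for $d \in \distribSet_x$ and $\mu_w(d) \leq \mu_\lfp(d) = \alpha(d)$ otherwise, so $\mu_w \preceq \alpha[v^*]$ and by Observation~\ref{obs:smaller_valuation_smaller_outcome}
\[
\Delta(w)(q) \leq f^{\formNF_q}_\alpha(v^*) \leq F(v^*) = v^* = w(q).
\]
Thus $w$ is a pre-fixed point of $\Delta$ strictly below $\lfp$, contradicting that $\lfp$ is the least pre-fixed point of $\Delta$ by Knaster--Tarski. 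Therefore $v^* = x$, which yields some $q^* \in Q_x$ with $v^{q^*}_\alpha = x$, as required.

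The main obstacle I anticipate is the interval description of the fixed-point set of each $f^{\formNF_q}_\alpha$ on $[0,x]$, since it is this consequence of the $1$-Lipschitz property that allows the least fixed point of the pointwise maximum $F$ to coincide with the maximum of the individual least fixed points, and hence makes the pre-fixed point construction on $\Delta$ deliver $\max_{q \in Q_x} v^q_\alpha = x$ rather than merely bounding the lfp of $F$.
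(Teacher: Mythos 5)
Your proof is correct and shares the paper's skeleton: establish $\alpha[x]=\mu_\lfp$ so that $x$ is a fixed point of every $f^{\formNF_q}_\alpha$ (hence $v^q_\alpha\leq x$), show that $v^*:=\max_{q\in Q_x}v^q_\alpha$ satisfies $f^{\formNF_q}_\alpha(v^*)\leq v^*$ for every $q\in Q_x$, and then force $x\leq v^*$ using the least-fixed-point characterization of $\lfp$. The two places where you deviate are both legitimate variants. For the middle step, the paper applies Observation~\ref{obs:smaller_valuation_smaller_outcome} directly to get $f^{\formNF_q}_\alpha(v^*)\leq f^{\formNF_q}_\alpha(v^q_\alpha)+(v^*-v^q_\alpha)=v^*$, whereas you derive from $1$-Lipschitzness that the fixed-point set of each $f^{\formNF_q}_\alpha$ in $[0,x]$ is the full interval $[v^q_\alpha,x]$; your route proves slightly more but rests on the same Lipschitz estimate. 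For the final step, the paper runs the Kleene iteration $v_{n+1}=\Delta(v_n)$ and shows by induction that $v_n(q)\leq v^*$ on $Q_x$ (using $\mu_{v_n}\preceq\alpha[v^*]$), then passes to the limit $v_n\to\lfp$; you instead build the valuation $w$ equal to $v^*$ on $Q_x$ and to $\lfp$ elsewhere, verify $\Delta(w)\preceq w$ via the same inequality $\mu_w\preceq\alpha[v^*]$, and invoke the Knaster--Tarski least-pre-fixed-point characterization. Your version avoids the induction and the limit and is arguably cleaner. One shared caveat: your computation $\Delta(w)(q)=\va_{\gameNF{\formNF_q}{\mu_w}}$ for $q\in Q_x$ implicitly assumes $\top\notin Q_x$ (otherwise $\Delta(w)(\top)=1>v^*$ and $w$ is not a pre-fixed point); the paper's base case $v_0(q)\leq v^*$ makes the same implicit assumption, and it holds in the only place the proposition is used, so this is not a defect of your argument relative to the paper's.
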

\begin{proof}
	First, note that for all $d \in \distribSet_x$, we have:
	\begin{displaymath}
		\mu_\lfp(d) = \sum_{q \in Q} \distribFunc(d)(q) \cdot \lfp(q) = \sum_{q \in \Supp(d)} \distribFunc(d)(q) \cdot \lfp(q) = \sum_{q \in \Supp(q)} \distribFunc(d)(q) \cdot x = x = \alpha[x](d)
	\end{displaymath}
	Hence, we have $\alpha[x] = \mu_\lfp$. 
	
	Now, for any state $q \in Q$, we denote by $v_\alpha^q \in [0,1]$ the least fixed point of the function $f_\alpha^{\formNF_q}$, i.e. $\limval{\alpha} = \alpha[v_\alpha^q]$ in the game form $\formNF_q$. 
	Hence, for all $q \in Q_x$, we have $f_\alpha^{\formNF_q}(x) = \va_{\gameNF{\formNF_q}{\mu_\lfp}} = \lfp(q) = x$. That is, $x$ is a fixed point of the function $f_\alpha^{\formNF_q}$ and thus $v_\alpha^q \leq x$. Let us show that there exists a state $q \in Q_x$ such that $v_\alpha^q = x$. Let $v = \max_{q \in Q_x} v_\alpha^q$. For all $q \in Q_x$, by Observation~\ref{obs:smaller_valuation_smaller_outcome}, we have:
	\begin{equation}
	\label{eqn:f_one_LP}
	f_\alpha^{\formNF_q}(v) = \va_{\gameNF{\formNF_q}{\alpha[v]}} \leq \va_{\gameNF{\formNF_q}{\alpha[v_\alpha^q]}} + (v - v_\alpha^q) =  f_\alpha^{\formNF_q}(v_\alpha^q) + (v - v_\alpha^q) = v_\alpha^q + (v - v_\alpha^q) = v
	\end{equation}
	
	Let us show that this implies $x \leq v$. Consider the iterative sequence of valuations $(v_n)_{n \in \N} \in ([0,1]^Q)^\N$ whose limit by Proposition~\ref{prop:lim_vn} is equal to $\lfp$. (Recall that $v_0 \in [0,1]^Q$ is such that $v_0(\top) = 1$ and $v_0(q) = 0$ for all $q \neq \top$. Furthermore, for all $n \geq 0$, $v_{n+1} = \Delta(v_n)$). We show inductively that for all $n \in \N$, we have:
	\begin{displaymath}
	\forall q \in Q_x,\; v_n(q) \leq v
	\end{displaymath} 
	This holds straightforwardly for $n = 0$
	. Now, assume that this holds for some $n \geq 0$. Consider some 
	Nature state $d \in \distribSet_x$. By assumption, we have $\Supp(d) \subseteq Q_x$. Hence:
	\begin{displaymath}
	\mu_{v_{n}}(d) = \sum_{q \in Q_x} \distribFunc(d)(q) \cdot v_n(q) \leq \sum_{q \in Q_x} \distribFunc(d)(q) \cdot v = v = \alpha[v](d)
	\end{displaymath}
	
	This holds for all $d \in \distribSet_x$. In addition, we have $v_n \preceq \lfp$, therefore $\mu_{v_n} \preceq \mu_{\lfp}$. In particular, for all $d \in \distribSet \setminus \distribSet_x$, we have $\mu_{v_{n}}(d) \leq \mu_{\lfp}(d) = \alpha(d)$. Overall, we have: $\mu_{v_{n}} \preceq \alpha[v]$. Now, if we consider some $q \in Q_x$, we have:
	\begin{align*}
		v_{n+1}(q) & = \Delta(v_{n})(q) & \text{ by definition of }v_{n+1} \\
		& = \va_{\langle \formNF_q,\mu_{v_{n}} \rangle} & \text{ by definition of }\Delta\\
		& \leq \va_{\langle \formNF_q,\alpha[v] \rangle} & \text{ since }\mu_{v_{n}} \preceq  \alpha[v]\\ & = f_\alpha^{\formNF_q}(v) 
		\leq v & \text{ by (\ref{eqn:f_one_LP})}
	\end{align*}
	Therefore, for any state $q \in Q_x$, we have for all $n \in \N$, $v_{n}(q) \leq v$. It follows that $x = \lfp(q) = \lim\limits_{n \rightarrow \infty} v_n(q) \leq v$. That is, there exists a state $q \in Q_x$ such that $x \leq v_\alpha^q$. In fact, $v_\alpha^q = x$ and, in the game form $\formNF_q$, we have $\limval{\alpha} = \alpha[v_\alpha^q] = \alpha[x] = \mu_\lfp$.
\end{proof}

We can now proceed to the proof of Lemma~\ref{lem:correct_opt_everywhere}.
\begin{proof}
	Let us consider a concurrent reachability game $\langle \Aconc,\top \rangle$ whose local interactions are all RM game forms. We want to show that $Q = \OS_\A$. By theorem~\ref{thm:positional_optimal}, this is equivalent to having $\Bad = \SubOS_\A = Q \setminus \OS_\A = \emptyset$. Hence, let us show that $\Bad = \emptyset$. To do so, we prove that $\Bad_1 = \Bad_0 = \emptyset$ (see Definition~\ref{def:bad_states}). That is, we assume towards a contradiction that $Q \setminus (\Scr_n(\emptyset) \cup \mu_{\lfp}[0]) \neq \emptyset$ for $n = |Q|$. 
	We want to use the assumption that the local interactions are RM to exhibit a progressive strategy among states in $Q \setminus (\Scr_n(\emptyset) \cup \mu_{\lfp}[0])$. To achieve this, we consider the maximum $x$, over all states in $Q \setminus \Scr_n(\emptyset)$, of the valuation $\lfp$: $x := \max_{q \in Q \setminus \Scr(\emptyset)} \lfp(q) > 0$ (by assumption). We consider also the corresponding set of states $Q_x \subseteq Q \setminus \Scr_n(\emptyset)$ and Nature states $\distribSet_x \subseteq \distribSet \setminus \Scr_n(\emptyset)_\distribSet$ realizing this value w.r.t. $\lfp \in [0,1]^Q$ and $\mu_\lfp \in [0,1]^\distribSet$. (Recall that $\Scr_n(\emptyset)_\distribSet$ refers to the set of Nature states with a non-zero probability to reach a state in $\Scr_n(\emptyset)$: $\Scr(\emptyset)_\distribSet := \{ d \in \distribSet \mid \Supp(d) \cap \Scr_n(\emptyset) \neq \emptyset \}$). That is: 
	\begin{displaymath}
		Q_x := \lfp^{-1}[x] \setminus \Scr(\emptyset) \neq \emptyset
	\end{displaymath}
	and:
	\begin{displaymath}
		\distribSet_x := \mu_{\lfp}^{-1}[x] \setminus \Scr_n(\emptyset)_\distribSet
	\end{displaymath}
	
	For the partial valuation of the Nature states $\alpha: \distribSet \setminus \distribSet_x \rightarrow [0,1]$ ensuring $\alpha := \restriction{\mu_\lfp}{\distribSet \setminus \distribSet_x}$, we want to apply Proposition~\ref{prop:lim_partial_val} to show that there exists a state $q \in Q_x$ such that $\limval{\alpha} = \mu_\lfp$ in the game form $\formNF_q$. Let us show that the support of all Nature states $d \in \distribSet_x$ is included in $Q_x$. Let $d \in \distribSet_x$. We have:
	\begin{displaymath}
		x = \mu_\lfp(d) = \sum_{q \in Q} \distribFunc(d)(q) \cdot \lfp(q) = \sum_{q \in \Supp(d)} \distribFunc(d)(q) \cdot \underbrace{\lfp(q)}_{\leq x} \leq \sum_{q \in \Supp(q)} \distribFunc(d)(q) \cdot x = x
	\end{displaymath}
	Therefore, all the above inequalities are in fact equalities. That is, for all states $q \in \Supp(d)$, we have $\lfp(q) = x$ and $d \notin \Scr(\emptyset)$, since $\Supp(d) \cap \Scr_n(\emptyset) = \emptyset$ (otherwise, we would have $d \in \Scr_n(\emptyset)_\distribSet$). That is, $q \in Q_x$. Overall, we obtain $\Supp(d) \subseteq Q_x$. This holds for all $d \in \distribSet_x$.
	
	We can now apply Proposition~\ref{prop:lim_partial_val} to obtain a state $q \in Q_x$ such that $\limval{\alpha} = \mu_\lfp$ in the game form $\formNF_q$. Let us now exhibit a progressive strategy --  w.r.t. the set $\Gex = \Scr(\Bad)$ -- in the local interaction $\formNF_q$. Let $\sigma_\A \in \Opt_\A(\formNF_q^{\mu_{\lfp}})$ be a local strategy that is reach-maximizing w.r.t. the partial valuation $\alpha$. 
	Let $b \in B_{\sigma_\A}$. 
	That is, since $=v_\alpha = \lfp(q) > 0$, we have $\delta(q,\Supp(\sigma_\A),b) \subsetneq \distribSet_x$. Now, let $d \in \distribSet_x \setminus \delta(q,\Supp(\sigma_\A),b)$. There are two possibilities:
	\begin{itemize}
		\item either $\mu_\lfp(q) \geq x$. In that case, by maximality of $x$ and since $d \notin \distribSet_x$, this implies that $d \in \Scr_n(\emptyset)_\distribSet$.
		\item or $\mu_\lfp(d) < x = \va_{\gameNF{\formNF_q}{\mu_\lfp}} = \outM_{\gameNF{\formNF_q}{\mu_\lfp}}(\sigma_\A,b)$. In that case, this implies that there exists $d' \in \delta(q,\Supp(\sigma_\A),b)$ such that $\mu_\lfp(d') > x$. By maximality of $x$, this implies $d' \in \Scr_n(\emptyset)_\distribSet$.
	\end{itemize}
	In any case, we have $\delta(q,\Supp(\sigma_\A),b) \cap \Scr_n(\emptyset)_\distribSet \neq \emptyset$. As this holds for all optimal actions $b \in B_{\sigma_\A}$, it follows that the strategy $\sigma_\A$ is progressive, and therefore efficient: $\sigma_\A \in \Prg_q(\Scr_n(\emptyset)) = \Eff_q(\Scr_n(\emptyset),\Bad)$. Hence the contradiction with the fact that $q \notin \Scr(\emptyset)$. 
	
	In fact, we have $\Bad_1 = \Bad_0 = \emptyset$, i.e. $\Bad = \emptyset$. Overall, $Q \setminus \OS_\A = \emptyset$ or $\OS_\A = Q$.
\end{proof}


\subsection{Proof of Theorem~\ref{thm:main}}
\label{proof:main}
\begin{proof}
	Consider a set $\mathcal{G}$ of local interactions (or game forms). By Lemma~\ref{prop:one_state_game_opt_equiv_correct}, if it contains a game form $\formNF$ that is not RM, then we can build a three-state reachability game with $\formNF$ as local interaction in the initial state where that initial state is not maximizable. Furthermore, as soon as all local interactions in $\mathcal{G}$ are RM, by Lemma~\ref{lem:correct_opt_everywhere}, all concurrent reachability game built from local interactions in $\mathcal{G}$ have only maximizable states. This proves the equivalence.
\end{proof}

\subsection{Decidability of the fact that game forms are RM (Proposition~\ref{prop:decidability})}
\label{proof:decidability}

\begin{proof}
	Consider a game form $\formNF$ and assume that $\St_\A = \llbracket 1,n \rrbracket$, $\St_\B = \llbracket 1,k \rrbracket$ and $\outComeNF = \llbracket 1,l \rrbracket$. A partial valuation $\alpha: \outComeNF \setminus E \rightarrow [0,1]$ for some subset of outcomes $E \subseteq \outComeNF$ is encoded with a sequence $\alpha = \alpha_1,\ldots,\alpha_l$ of values of the outcomes and a sequence $e = e_1,\ldots,e_l$ of binary values encoding the fact that an outcome is in $\outComeNF \setminus E$. For a value $v \in [0,1]$, the valuation $\alpha[v]: \outComeNF \rightarrow [0,1]$ is equal to $\alpha \cdot e + v \cdot (1-e)$. Finally, recall that for a game in normal form $\formNF'$ and a value $v \in [0,1]$ the predicate $\mathsf{VAL}(\formNF,v)$ encodes in $\mathsf{FO}$-$\R$ the fact that $v = \va_{\formNF'}$. Now, the fact that the game form $\formNF$ is RM is expressed by the $\mathsf{FO}$-$\R$ formula:
	\begin{align*}
	\forall \alpha = & \alpha_1,\ldots,\alpha_l,\; \forall e = e_1,\ldots,e_l,\; \\ 
	& (\mathsf{IsPartialVal}(\alpha,e) \wedge \exists v,\;  \mathsf{IsLimPartialVal}(\alpha,e,v) \; \wedge \\
	& \exists p = p_1,\ldots,p_n,\; \mathsf{IsProba}(p) \; \wedge \; 	\mathsf{IsOptimal}(\alpha,e,v,p) \wedge \mathsf{IsReachMaximizing}(e,p))
	\end{align*}
	with
	\begin{displaymath}
		\mathsf{IsPartialVal}(\alpha,e) := \bigwedge_{1 \leq i \leq l} ((0 \leq \alpha_i \leq 1) \wedge (e_i = 1) \lor (e_i = 0)) \;
	\end{displaymath}
	and 
	\begin{align*}
		\mathsf{IsLimPartialVal}(\alpha,e,v) := \; & 0 \leq v \leq 1 \wedge \mathsf{VAL}(\gameNF{\formNF}{\alpha \cdot e + v \cdot (1-e)},v) \; \wedge \; \\ 
		& \forall v',\; 0 \leq v' < v \Rightarrow \lnot \mathsf{VAL}(\gameNF{\formNF}{\alpha \cdot e + v' \cdot (1-e)},v')
	\end{align*}
	and 
	\begin{displaymath}
		\mathsf{IsStrategy}(p) := \bigwedge_{1 \leq i \leq n} (0 \leq p_i \leq 1) \; \wedge \; \sum_{i = 1}^{n} p_i = 1
	\end{displaymath}
	and 
	\begin{displaymath}
		\mathsf{IsOptimal}(\alpha,e,v,p) := \bigwedge_{1 \leq j \leq k} \sum_{i = 1}^{n} p_i \cdot ( \alpha_{\outCNF(i,j)} \cdot e_{\outCNF(i,j)} + v \cdot (1 - e_{\outCNF(i,j)})) \geq v
	\end{displaymath}
	and 
	\begin{displaymath}
		\mathsf{IsReachMaximizing}(e,p) := \bigwedge_{1 \leq j \leq k} (\bigvee_{1 \leq i \leq n} p_i \wedge e_{\outCNF(i,j)})
	\end{displaymath}
	The formula consists in:
	\begin{itemize}
		\item a universal quantification over partial valuations;
		\item the existence of the least fixed point of the function $f_\alpha^\formNF$;
		\item the existence of a Player $\A$ strategy;
		\item that ensures at least $v$ w.r.t. the valuation $\alpha[v]$ (it is therefore optimal by the predicate $\mathsf{VAL}(\gameNF{\formNF}{\alpha \cdot e + v \cdot (1-e)},v)$);
		\item ensuring that for all columns, there is at least one outcome in the support of that strategy that is in $\outComeNF \setminus E$.
	\end{itemize}
\end{proof}

\subsection{Proof of Proposition~\ref{prop:determined_gf}}
\label{appen:determined_game_form}
\begin{proof}
	Assume that the game form $\formNF$ is determined and consider a partial valuation  $\alpha: \outComeNF \setminus E \rightarrow [0,1]$ of the outcomes. Let us prove that $v_\alpha = f^\formNF_\alpha(0)$ and that the game form $\formNF$ is RM w.r.t. $\alpha$. If $v_\alpha = f^\formNF_\alpha(v_\alpha) = 0$, this holds straightforwardly. Assume now that $v_\alpha > 0$, and therefore $f^\formNF_\alpha(0) > 0$. We set $v := f^\formNF_\alpha(0) \in [0,1]$ and $v' := f^\formNF_\alpha(v)$. Since $f^\formNF_\alpha$ is an non-decreasing function, it follows that $v = f^\formNF_\alpha(0) \leq f^\formNF_\alpha(v) = v'$. Assume towards a contradiction that $v < v'$. Let $E_v \subseteq \outComeNF$ denote the subset of outcomes $o$ such that $\alpha[v](o) \leq v < v'$. Note that, in particular, $E \subseteq E_v$. Since $v' = f^\formNF_\alpha(v)$, there is no $b \in \St_\B$ such that $\outCNF(\St_\A,b) \subseteq E_v$ (which would imply $f^\formNF_\alpha(v) < v'$). Hence, by determinacy of the game form $\formNF$, there is some $a \in \St_\A$ such that $\outCNF(a,\St_\B) \subseteq \outComeNF \setminus E_v \subseteq \outComeNF \setminus E$. Hence, the valuations $\alpha[0]$ and $\alpha[v]$ coincide on $\outCNF(a,\St_\B) \subseteq \outComeNF$ and for all $o \in \outCNF(a,\St_\B)$, we have $\alpha[0](o) > v$. That is, $v = f^\formNF_\alpha(0) \leq \va_{\gameNF{\formNF}{\alpha[0]}}(a) 
	> v$. Hence, the contradiction. Thus, $v = v' = f^\formNF_\alpha(v)$. It follows that $v_\alpha \leq v$. Moreover, $v_\alpha = f^\formNF_\alpha(v_\alpha) \geq f^\formNF_\alpha(0) = v$. That is, $v_\alpha = v = f^\formNF_\alpha(0)$. Furthermore, for $E_0$ the set of outcomes $o$ such that $\alpha[0](o) < v$ (which includes $E$), we can show, by determinacy of the game form $\formNF$, that there is some $a \in \St_\A$ such that $\outCNF(a,\St_\B) \subseteq \outComeNF \setminus E_0$. The strategy $a \in \St_\A$ is then reach-maximizing w.r.t. the partial valuation $\alpha$.
	
	Assume now that the game form $\formNF$ is not determined. There exists a subset of outcomes $E \subseteq \outComeNF$ such that:
	\begin{itemize}
		\item for all $a \in \St_\A$, there exists $b_a \in \St_\B$, such that $\outCNF(a,b_a) \in \outComeNF \setminus E$;
		\item for all $b \in \St_\B$, there exists $a_b \in \St_\A$, such that $\outCNF(a_b,b) \in E$.
	\end{itemize}
	Consider the partial valuation $\alpha: E = \outComeNF \setminus (\outComeNF \setminus E) \rightarrow [0,1]$ such that, for all $o \in E$, we have $\alpha(o) := 1$. Straightforwardly, we have $f_\alpha^\formNF(1) = 1$. Furthermore, consider some $v \in [0,1]$ such that $v < 1$. We have:
	\begin{itemize}
		\item let $n := |\St_\A|$. We define the strategy $\sigma_\A \in  \Dist(\St_\A)$ playing uniformly over all lines of the game form: for all $a \in \St_\A$, we set $\sigma_\A(a) := \frac{1}{n}$. Note that, for all $o \in \outComeNF$, we have $\alpha[v](o) \geq v$. Consider now some $b \in \St_\B$. Recall that $\outCNF(a_b,b) \in E$ and $\alpha[v] \circ \outCNF(a_b,b)  = 1$. In fact:
		\begin{displaymath}
			\outM_{\gameNF{\formNF}{\alpha[v]}}(\sigma_\A,b) = \sum_{a \in \St_\A} \frac{1}{n} \cdot \alpha[v] \circ \outCNF(a,b) = \frac{1}{n} \cdot \sum_{a \in \St_\A \setminus \{ a_b \}} \alpha[v] \circ \outCNF(a,b) + \frac{1}{n} \cdot \alpha[v] \circ \outCNF(a_b,b) \geq \frac{n-1}{n} \cdot v + \frac{1}{n} > v
		\end{displaymath}
		This holds for all $b \in \St_\B$. It follows that $f_\alpha^\formNF(v) \geq \va_{\gameNF{\formNF}{\alpha[v]}}(\sigma_\A) > v$.
		\item We proceed similarly to the previous item with a strategy for Player $\B$. Let $k := |\St_\B|$. We define the strategy $\sigma_\B \in  \Dist(\St_\B)$ playing uniformly over all columns of the game form: for all $b \in \St_\B$, we set $\sigma_\B(b) := \frac{1}{n}$. Consider some $a \in \St_\A$. Recall that $\outCNF(a,b_a) \in \outComeNF \setminus E$ and $\alpha[v] \circ \outCNF(a,b_a)  = v$. In fact:
		\begin{displaymath}
		\outM_{\gameNF{\formNF}{\alpha[v]}}(a,\sigma_\B) = \sum_{b \in \St_\B} \frac{1}{k} \cdot \alpha[v] \circ \outCNF(a,b) = \frac{1}{k} \cdot \sum_{b \in \St_\B \setminus \{ b_a \}} \alpha[v] \circ \outCNF(a,b) + \frac{1}{k} \cdot \alpha[v] \circ \outCNF(a,b_a) \leq \frac{n-1}{n} + \frac{1}{n} \cdot v < 1
		\end{displaymath}
		It follows that $f_\alpha^\formNF(v) \leq \va_{\gameNF{\formNF}{\alpha[v]}}(\sigma_\B) < 1$.
	\end{itemize}
	That is, for all $v \in [0,1]$ such that $v < 1$, we have:
	\begin{displaymath}
		v < f_\alpha^\formNF(v) < 1
	\end{displaymath}
	In fact, $v_\alpha = 1$ and $f_\alpha^\formNF(0) < v_\alpha$.
\end{proof}

	
\end{document}